\newcommand{\SECO}{\mathrm{SECO}}
\title{Identifying regions of concomitant compound precipitation and wind speed extremes over Europe}
\shorttitle{Identifying concomitant compound precipitation and wind speed extremes}
\author{Alexis Boulin}
\affiliation{Université Côte d’Azur, CNRS, LJAD, France}
\affiliation{Univ Montpellier, CNRS, Montpellier, France}
\email{aboulin@unice.fr}
\author{Elena Di Bernardino}
\affiliation{Université Côte d’Azur, CNRS, LJAD, France}
\author{Thomas Lalo\"{e}}
\affiliation{Université Côte d’Azur, CNRS, LJAD, France}
\author{Gwladys Toulemonde}
\affiliation{Univ Montpellier, CNRS, Montpellier, France}
\keywords{Compound events, Extremes, High dimensional models, Spatial clustering, Variable clustering.}
\begin{document}

\begin{abstract}
    The task of simplifying the complex spatio-temporal variables associated with climate modeling is of utmost importance and comes with significant challenges. In this research, our primary objective is to tailor clustering techniques to handle compound extreme events within gridded climate data across Europe. Specifically, we intend to identify subregions that display asymptotic independence concerning compound precipitation and wind speed extremes. To achieve this, we utilise daily precipitation sums and daily maximum wind speed data derived from the ERA5 reanalysis dataset spanning from 1979 to 2022. Our approach hinges on a tuning parameter and the application of a divergence measure to spotlight disparities in extremal dependence structures without relying on specific parametric assumptions. We propose a data-driven approach to determine the tuning parameter. This enables us to generate clusters that are spatially concentrated, which can provide more insightful information about the regional distribution of compound precipitation and wind speed extremes. In the process, we aim to elucidate the respective roles of extreme precipitation and wind speed in the resulting clusters. The proposed method is able to extract valuable information about extreme compound events while also significantly reducing the size of the dataset within reasonable computational timeframes.
\end{abstract}
  
\section*{Introduction}
The occurrence of extreme weather events is often exacerbated by the convergence of distinct geographic factors and concurrent weather patterns, resulting in profound disruptions and extensive damage to society. Catastrophic climate phenomena such as floods, wildfires, and heatwaves frequently manifest due to the simultaneous intensification of multiple interacting processes. When these various processes coalesce to yield a substantial impact, it is referred to as a compound event. Among the primary manifestations of extreme weather, heavy precipitation and robust surface winds hold central positions, exerting adverse effects on both the natural world and human society. Extratropical cyclones, along with their associated wind patterns and storm surges, contribute significantly to economic and insured losses resulting from natural calamities in Europe. Furthermore, they disrupt transportation, trade, and energy supply systems, often leading to human casualties (refer to, for instance, \cite{schwierz2010modelling, pinto2012loss}). To mitigate these impacts, it is important to better understand the dependence structure of extreme weather events. However, modeling such complex scenarios, where multiple rare events occur simultaneously, can be incredibly challenging, especially with high-dimensional climate datasets that exhibit heavy tails. 

\begin{figure}[!htb]
    \centering
    \includegraphics[scale=.5]{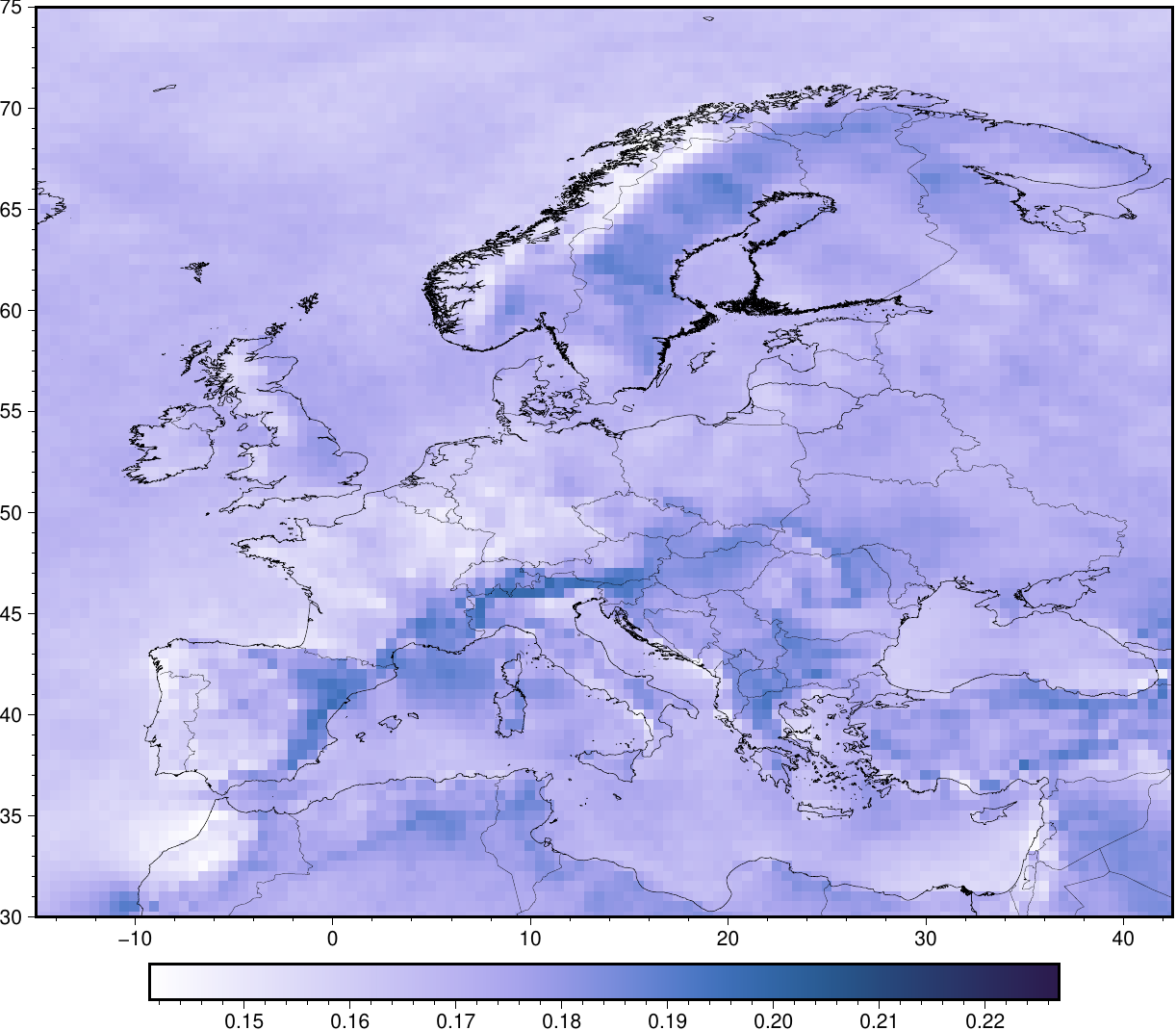}
    \caption{Proportion of the total precipitation or wind speed in the ERA5 dataset that exceed their respective $0.9$th quantiles.}
    \label{subfig:introduction_sum}
\end{figure}

\paragraph{Considered ERA5 dataset} Large ensemble simulations present a unique opportunity for gaining deeper insights into the spatial regionalisation of compound precipitation and wind extremes. This is primarily because these simulations provide a more accurate representation of local-scale processes compared to global ensembles and do so without being hindered by data limitations. However, such simulations need to be interpreted with care as it is often largely unknown how well the employed models represent observed compound events (\cite{zscheischler2021evaluating}), and differences might be large between models. In our endeavor to regionalize compound precipitation and wind speed, we turn to the ERA5 dataset (\cite{CDS}). This dataset allows us to investigate the correlation between daily cumulative precipitation and daily peak wind speeds throughout the extended winter between season, spanning from November to March, across Europe, for the period between 1979 and 2022. ERA5 offers a comprehensive record of atmospheric conditions, land surface characteristics, and ocean wave patterns, spanning from 1950 to the present day. It is worth noting that ERA5 supersedes the previous ERA-interim reanalysis, which began in 1979 and was initiated in 2006.

ERA5 has benefited from significant advancements in model physics, core dynamics, and data assimilation techniques developed over the past decade. Produced using 4D-Var data assimilation technology within model cycle 41r2 (Cy41r2), ERA5 incorporates improved parameterization schemes (\cite{herbaschera5}). The dataset is available at a spatial resolution of $0.25^\circ$ (approximately $27$-$28$ km) on a regular grid. Our specific focus lies within the region defined by $[-15^\circ E, 42.5^\circ E] \times [30^\circ N, 75^\circ N]$, which encompasses Europe. We remap the original hourly data to a regularly spaced grid with a $0.5^\circ$ spatial resolution, allowing us to compute daily precipitation totals and daily maximum wind speeds. We selected the $0.5^\circ$ spatial resolution due to its ability to facilitate calculations within a reasonable timeframe while maintaining manageable storage requirements. The need for remapping can be circumvented with more extensive computing resources. The resulting dataset comprises $6655$ daily precipitation totals and maximum wind speed measurements, covering $91 \times 116$ grid cells with the chosen spatial resolution, totaling $10556$ grid cells for clustering. To illustrate, Fig. \ref{subfig:introduction_sum} provides a visualization of the proportion of grid cells where either wind speed or precipitation exceed a significant threshold. As observed in both panels, there is a noticeable spatial variation in these proportions. In the following, we introduce some notations to describe the spatio-temporal process under consideration.

Consider a spatio-temporal random field denoted as $(\textbf{Z}_n^{(s)}, s \in D \subset \mathbb{R}^2, n \in \mathbb{N})$. Here, $\textbf{Z}_n^{(s)} = (Z_n^{(s,1)}, Z_n^{(s,2)})$ represents the vector of daily total precipitation and wind speed maxima at location $s$ on day $n$. We assume that $\textbf{Z}_n^{(s)}$ is identically distributed over $n$ for each location $s$ in the domain $D$. Now, let's suppose that we have observations available at $d$ spatial locations for each time $n$. We can represent these observations as $\textbf{Z}_n = (\textbf{Z}_n^{(1)},\dots, \textbf{Z}_n^{(d)})$, where $\textbf{Z}_n$ is random vector with stationary law $\textbf{Z} = (\textbf{Z}^{(1)},\dots,\textbf{Z}^{(d)})$. Each $\textbf{Z}^{(j)}$ is a random vector with two dimensions, corresponding to precipitation and wind speed. In the given dataset, each variable represents a location on a grid pixel scale and is characterised by multiple features that can exhibit extreme behavior when considered together. This implies that clustering methods designed to analyze the dependence structure should consider spatial extremal dependence across different locations. In the following discussion, we will review some studies that aim to understand the dependence structure of potentially high-dimensional random vectors.

\paragraph{Related literature} In the field of high-dimensional extremes, researchers have made significant contributions to address the challenges associated with vectors that have univariate random margins, with a focus on unsupervised techniques such as support identification (see, e.g., \cite{goix2017sparse, meyer2021sparse}), Principal Component Analysis of the angular measure of extremes (\cite{cooley2019decompositions, 10.1214/21-EJS1803}), graphical models for extremes (\cite{hitz2016one, engelke2020graphical, asenova2021inference}) and clustering methods (\cite{10.1214/20-EJS1689, boulin2023high}). These methods can help identify hidden spatial patterns and sub-regions where variables are dependent on their extremes, which is crucial for regionalisation tasks.

Over the years, a number of clustering approaches have been suggested, with a focus on extremes, based on a comparison between univariate distributions. For instance, \cite{bernard2013clustering} analysed weekly maxima of precipitation in France and developed a clustering algorithm on a proper distance, the madogram, justified by Extreme Value Theory (EVT). The same approach was used in \cite{bador2015spatial} to evaluate the bias of climate model simulations of temperature maxima over Europe. In \cite{durante2015clustering}, a four-step clustering procedure was presented that considered a pairwise conditional Spearman's correlation coefficient, extracted from daily-log-returns of the adjusted stock price, as a measure of tail dependence. \cite{pappada2018clustering} investigated spatial sub-regions (clusters) of flood risk behavior in the Po river basin in Italy using a copula-based Agglomerative Hierarchical Clustering. In the paper by \cite{maume2023detecting}, they introduce a modified spectral clustering algorithm designed for analyzing spatial extreme events. This algorithm combines spectral clustering with the concept of extremal concurrence probability, as proposed by \cite{dombry2018probabilities}. The goal of this approach is to determine whether a max-stable process exhibits a stationary dependence structure or not. Recently, \cite{boulin2023high} proposed a class of models, the Asymptotic Independent block (AI block) models, for variable clustering, which defines population-level clusters based on the independence of extremes between clusters. They exhibited an algorithm that compares the extremal dependence of univariate distributions at different locations and showed that it recovers the thinnest partition such that extremes between groups of random variables are mutually independent with high probability.

While regional analysis of univariate climate extreme events is a well-studied area of research, multivariate compound extreme events at larger scales have received less attention. Although the widely known Kullback-Leibler divergence has been adapted for use in the context of compound extreme events, it has been primarily employed to cluster data based on their bivariate extreme behavior (as demonstrated in \cite{vignotto2021clustering}) and to analyze compound weather and climate events (as discussed in \cite{zscheischler2021evaluating}). However, this metric primarily summarises the differences in distribution between two sets of random variables when at least one of their components is extreme, and it does not quantify deviations from asymptotic independence. recognising sub-regions characterised by concurrent extreme precipitation and wind speed events is essential for improving extreme event modeling. This is particularly evident in works like \cite{10.1214/19-AOS1836} and \cite{engelke2020graphical}, which rely on the assumption of asymptotic dependence in the data. Such insights are crucial for the development of strategies to mitigate the impacts of these extreme events.

\paragraph{Proposed methodology.} In this paper, our objective is to expand upon the AI block model as introduced in \cite{boulin2023high} to tackle the challenges posed by this environmental dataset. We depart from the assumption that clusters of pixels are mutually independent univariate time series, with regard to their extremes. Instead, we shift to a framework where a collection of univariate time series is recorded for each pixels, with a particular emphasis on their extreme behavior. To tackle this intrinsic problem, we introduce the concept of \emph{constrained} AI block model, compelling pixels represent a collection of univariate time series. This concept comes into play in our environmental problematic which concerns phenomena like precipitation and wind speed extremes recorded at a specific geographic locations represented as pixels within a large ensemble dataset.

Our objective is the following: cluster a number of $d=10556$ pixels across Europe based on their asymptotic independence on compound precipitation and wind speed extremes where data are relatively scarse, i.e., the sample size $n = 6655$. To efficiently implement a fast algorithm designed for this model-based approach in such a high-dimensional setting, we employ a divergence measure that highlights the differences in extremal dependence structures for asymptotically dependent and independent random vectors. Noticeably, this divergence measure adheres to several axioms that makes it a valid measure of dependence (\cite{dekeysercopula2023,de2023parametric}) and also a coherent measure (\cite{scarsini1984measures}). Furthermore, this divergence is linked to a well-known quantity in Extreme Value Theory and can be consistently estimated without the need of parametric assumptions. This consistent estimation is possible under the condition of weak mixing conditions to stay within the scope of our application where departures from the independence assumption are strongly suspected. 

The algorithm requires the specification of a tuning parameter, and we suggest an approach based on data to determine its value. When applied to our environmental dataset, this clustering procedure is efficient and produces clusters that are spatially concentrated, which is a pattern commonly observed in spatial processes. Furthermore, we leverage the interpretability of classical AI block models to gain insights into the role of precipitation and wind speed extremes in the compound partition. In other words, we use this approach to understand how the clustering is influenced by both wind speed and precipitation. To further analyze the results, we make use of a straightforward modification of our dissimilarity measure. This modification allows us to comment on the different clusterings obtained through various algorithms and provide valuable insights into our proposed methodology.

\paragraph{Structure of the paper.} The remainder of the paper is structured as follows. Section \ref{sec:clt_alg} of the paper provides an introduction to extreme value theory, followed by a detailed explanation of the method used to cluster compound extremes. The  proposed clustering process relies an extremal dependence measure for random vectors,  recently introduced in  \cite{boulin2023high}: the Sum of Extremal COefficient ($\SECO$).  In Section \ref{sec:results} we present the  results obtained from applying the clustering algorithm for considered precipitation and wind speed extremes over  Europe.  Conclusions of our study and possible improvement of our method are explored in Section \ref{sec:conclusion}. 
{In the external supplementary materials, we postponed some material  about dependence measures for random vectors (see Section \ref{sec:axioms}).  Technical arguments are given in \ref{sec:coherent_measure}, \ref{sec:incompleteness} and \ref{proof:prop_consistency}. Finally additional figures are provided in \ref{sec:sup_fig}}. 

\section{A clustering algorithm for compound extreme events}
\label{sec:clt_alg}

\subsection{A measure for evaluating dependence between compound extremes}
\label{sec:measure_dependence}

We consider a high-dimensional random vector $\textbf{Z} = (\textbf{Z}^{(1)},\dots,\textbf{Z}^{(d)})$ with law $F$ having $d$ marginal random vectors $\textbf{Z}^{(j)} = (Z^{(j,1)},\dots, Z^{(j,p_j)})$ for $j = 1, \dots, d$. To accommodate vectors of different sizes which will be useful later, we introduce a different notation from the one given in the introduction. Each $\textbf{Z}^{(j)}$ contains $p_j$ marginal univariate random variables $Z^{(j,\ell)}$ for $\ell = 1,\dots,p_j$. In this framework $\textbf{Z}$ has $q = p_1 + \dots + p_d$ marginal univariate random variables.

We call for convenience a function $u$ on $\mathbb{R}$ a normalising function if $u$ is non-decreasing, right continuous, and $u(x) \rightarrow \pm \infty$ as $x \rightarrow \pm \infty$. For a stationary sequence $(\textbf{Z}_n, n \in \mathbb{N})$ of $\textbf{Z}$, we say that the distribution $F$ belongs to the max-domain of attraction of the Extreme Value Distribution (EVD) $H$ if the following convergence result holds for properly normalised maxima:
\begin{equation}
\label{eq:mda}
\mathbb{P}\left\{ \bigvee_{i=1}^n \textbf{Z}_i \leq \textbf{u}_n(\textbf{x}) \right\} \underset{n \rightarrow \infty}{\longrightarrow} H(\textbf{x}), \quad \textbf{x} \in \mathbb{R}^q,
\end{equation}
where $\textbf{u}_n (\textbf{x})=(u_n^{(1,1)}(x^{(1,1)}), \dots, u_n^{(1,p_1)}(x^{(1,1)}),\dots,u_n^{(d,p_d)}(x^{(d,p_d)}))$ is a $q$-dimensional vector of normalising functions. The margins $H^{(1,1)},\dots,H^{(d,p_d)}$ of $H$ must be univariate extreme value distributions and the dependence structure of $H$ is determined by the relation
\begin{equation}
\label{eq:evd}
-\ln H(\textbf{x}) = L\left(-\ln H^{(1,1)}(x^{(1,1)}),\dots, -\ln H^{(d,p_d)}(x^{(d,p_d)})\right),
\end{equation}
for all points $\textbf{x}$ such that $H^{(j,\ell)}(x^{(j,\ell)}) > 0$ for all $j=1,\dots,d$, $\ell = 1,\dots,p_j$. The convergence result in \eqref{eq:mda} with the relation in \eqref{eq:evd} holds under mild assumptions on the dependence between $\textbf{Z}_{1},\dots,\textbf{Z}_{n}$, making the study of time series relevant within the framework of EVT. The stable tail dependence function $L : [0, \infty)^q \rightarrow [0, \infty)$ can be retrieved from the distribution function $F$ via
\begin{equation}\label{eq:stable_from_f}
    L(\textbf{x}) = \underset{t \rightarrow 0}{\lim} \, t^{-1} \mathbb{P}\left\{ F^{(1,1)}(Z^{(1,1)}) > 1 - t x^{(1,1)} \textrm{ or } \dots \textrm{ or } F^{(d,p_d)}(Z^{(d,p_d)}) > 1 - t x^{(d,p_d)}\right\}.
\end{equation}

We assume that the random vector $\textbf{Z}$ is in the max-domain of attraction of an EVD, denoted as $H$. Moreover, we aim that the extremal dependence can be modelled using an AI block model (see \cite{boulin2023high}) where the definition is recalled below.   
\begin{definition}[Asymptotic  Independent block model]
    \label{def:ai_bm}
    Let $(\normalfont{\textbf{Z}_n}, n \in \mathbb{N})$ be a $q$-variate stationary random sequence with law $F$ in the max domain of attraction of $H$. The random sequence $(\normalfont{\textbf{Z}_n}, n \in \mathbb{N})$ is said to follow an AI block model if there exists a partition $O = \{O_g\}_{g=1}^G$ of $\{1,\dots,q\}$ with $|O_g| = d_g$ and marginal extreme value distributions $H^{(O_g)} : \mathbb{R}^{d_g} \rightarrow [0,1]$ such that $\normalfont{H = \Pi_{g=1}^G H^{(O_g)}}$.
\end{definition}

The constrained AI block model requires improvements to the methods proposed in \cite{boulin2023high}, which uses extremal correlation to detect asymptotic independence between random variables, to correctly identify the hidden partition. Asymptotic independence is a concept that describes the relationship between extremes of two random variables, denoted $Z^{(a)}$ and $Z^{(b)}$. Each variable has its own cumulative distribution function, denoted $F^{(a)}$ and $F^{(b)}$, respectively. The extremal correlation, denoted as $\chi(a,b)$, between these two random variables is formally stated by 
\begin{equation*}
    \chi(a,b) = \underset{t \rightarrow 0}{\lim} \, \mathbb{P}\left\{ F^{(a)}(Z^{(a)}) >1-t | F^{(b)}(Z^{(b)}) > 1-t \right\}.
\end{equation*}
The extremal correlation represents the probability of one variable being extreme given that the other is also extreme. If the extremal correlation coefficient $\chi(a,b)$ is in the range of $(0,1]$, then $Z^{(a)}$ and $Z^{(b)}$ are said to be asymptotically dependent. Otherwise, if $\chi(a,b) = 0$, the variables are asymptotically independent. For instance, the well-known bivariate Gaussian distribution with correlation coefficient $\rho \in [-1,1)$ satisfies $\chi(a,b) = 0$.

An extension beyond the bivariate case involves examining two groups of random variables. In \cite{boulin2023high}, a new metric called Sum of Extremal COefficient ($\SECO$) was introduced. To better understand the definition of the metric, the necessary notations for extremal coefficients of an extremal random vector with possibly different sizes are presented below:
\begin{align}
    &\theta(1,\dots,d) = \underset{t \rightarrow 0}{\lim} \, t^{-1} \mathbb{P}\left\{ \underset{j =1,\dots,d}{\max} \,\underset{\ell = 1,\dots,p_j}{\max} F^{(j,\ell)}(Z^{(j,\ell)}) > 1-t\right\}\\
    &\theta(j) = \underset{t \rightarrow 0}{\lim} \, t^{-1} \mathbb{P}\left\{ \underset{\ell = 1,\dots,p_j}{\max} F^{(j,\ell)}(Z^{(j,\ell)}) > 1-t\right\}, \quad j = 1,\dots,d.
\end{align}
Then, $\theta(j)$ corresponds to the extremal coefficient associated to the $j$th marginal random vector $\textbf{X}^{(j)}$. The $\SECO$ metric is defined as the difference between the sum of the extremal coefficients of the $d$ marginal random vectors $\textbf{Z}^{(j)}$ and the extremal coefficient of the entire vector, that is, formally stated
\begin{equation}
    \label{eq:seco}
    \SECO(\textbf{Z}^{(1)},\dots,\textbf{Z}^{(d)}) = \sum_{j=1}^d \theta(j) - \theta(1,\dots,d).
\end{equation}
The $\SECO$ metric is always positive and quantifies the deviation from asymptotic independence between the $d$ groups of variables. Indeed, \cite[Proposition 8 (iii)]{boulin2023high}, \cite{FERREIRA2011586} showed that the $\SECO$ metric is equal to zero if and only if the $d$ groups of variables are independent extreme value random vectors. Moreover, the bivariate $\SECO$ between $\textbf{Z}^{(j)}$ and $\textbf{Z}^{(k)}$ simplifies to the extremal correlation when these are random variables. Recently, \cite{de2023parametric} developed an axiomatic framework to quantify dependence between multiple groups of random variables of possibly different sizes. For self-consistency, we recall them in \ref{sec:axioms} and we show that the $\SECO$ metric defined in \eqref{eq:seco} satisfies most of the stated axioms (see Lemma \ref{prop:dep_meas} in \ref{sec:axioms}). Also, we are interested in determining whether $\SECO$ remains coherent (as defined in \cite{durante2015principles, scarsini1984measures}) when comparing two random vectors with the same dimension, which occurs when $p_a = p_b$. In \ref{sec:coherent_measure} of the supplementary materials, we examine the coherence of the $\SECO$ in nested extreme value copulae, which were introduced in \cite{hofert2018hierarchical}. A further objective is to investigate how SECO behaves in specific nested models.

\subsection{Clustering for compound extremes}
\label{subsec:clt_comp_ext}
In this section, we introduce a modified version of the ECO algorithm as presented in \cite{boulin2023high}, which is capable of clustering compound extremes.

To introduce flexibility in AI block models and bring notations into our application, we consider $\textbf{Z}^{(j)}_i$, $i = 1,\dots,n$, $j = 1, \dots, d$, which are extracted from $d$ pixels. We assume that each observation is distributed according to $F$ in the max-domain of attraction an \emph{constrained} AI block model as defined in \ref{sec:measure_dependence}. Any deviation from asymptotic independence between two pixels can then be measured by the empirical counterpart version of the $\SECO$ in \eqref{eq:seco}. The empirical counterpart of the $\SECO$ between two pixels $a$ and $b$ is defined as:
\begin{equation}
\label{eq:emp_seco}
\widehat{\SECO}(\textbf{Z}^{(a)},\textbf{Z}^{(b)}) = \hat{\theta}(a) + \hat{\theta}(b) - \hat{\theta}(a,b),
\end{equation}
In this case study, we consider $d = 10556$ pixels. For each pixel $j = 1, \dots, d$, we define a bivariate random vector $\textbf{Z}^{(j)} = (Z^{(j,1)}, Z^{(j,2)})$, where, as a convention in this paper, $Z^{(j,1)}$ and $Z^{(j,2)}$ represent the stationary distributions of daily total precipitation and wind speed maxima at location $j$, respectively. Hence, to stick in this context, $\hat{\theta}(a,b)$ is the empirical counterpart of the extremal coefficient for the joint vector $(\textbf{Z}^{(a)},\textbf{Z}^{(b)})$, and $\hat{\theta}(j)$ is the empirical counterpart of the extremal coefficient for the random vector $\textbf{Z}^{(j)}$, $j \in \{a,b\}$, i.e.,
\begin{align}
    &\hat{\theta}(a,b) = \frac{1}{k} \sum_{i=1}^n \mathds{1}_{ \{ R_i^{(a,1)} > n - k + 0.5 \textrm{ or } R_i^{(a,2)} > n - k + 0.5 \textrm{ or } R_i^{(b,1)} > n - k + 0.5 \textrm{ or } R_i^{(b,2)} > n - k +0.5\} } \label{eq:emp_eco} \\
    &\hat{\theta}(j) = \frac{1}{k} \sum_{i=1}^n \mathds{1}_{ \{ R_i^{(j,1)} > n - k + 0.5 \textrm{ or } R_i^{(j,2)} > n - k + 0.5 \} }, \quad j = a,b, \label{eq:emp_seco_sole}
\end{align}
where $R_i^{(j,\ell)}$ denote the rank of $Z_i^{(j,\ell)}$ among $Z_1^{(j,\ell)}, \dots, Z_n^{(j,\ell)}$, $j = a,b$, $\ell = 1,2$. The notation described above can be easily extended to pixels of varying sizes. However, for the sake of clarity in notation and to maintain consistency with our application, we focus on the scenario where pixels consist of bivariate time series. The statistic in Equation \eqref{eq:emp_seco} gauges the strength of dependence between two pixels by counting how often one of the pixels is extreme in terms of either precipitation or wind speed, while adjusted for the number of times a component is extreme in at least one of the pixels.

Classical non-parametric estimators of the extremal coefficient, as demonstrated in equation \eqref{eq:emp_eco}, is only relevant when the number of variables is less than the number of observations, meaning $d \leq n$. Indeed, in \ref{sec:incompleteness} (see Lemma \ref{lem:inequality_EKS} and Lemma \ref{lem:inequality_mado}), we exhibit bounds which show limitations that classical estimators of the extremal dependence structure face, preventing them from encompassing the full range of extremal dependence structures. In high-dimensional scenarios where the number of variables exceeds the number of observations ($d > n$), classical estimators may fail to identify asymptotic independence in extremes. This is particularly relevant for estimators such as Equation \eqref{eq:emp_seco_sole}. This phenomenon can also be observed in other estimators, such as the madogram, and other rank-based estimators of the tail dependence structure may suffer from the same issue. Therefore, caution must be exercised when dealing with high-dimensional data, particularly by taking a lower number of extremes (determined by the parameter $k$), as advised by the upper bound which is equal to $n/k$. This approach allows for a wider range of values with the cost of a greater variance. 

Nonetheless, while the dimension is arbitrary, the empirical estimator of the extremal coefficient in \eqref{eq:emp_seco_sole} is consistent and the asymptotic deviation is well-understood (see, for instance, \cite{drees1998best,10.1214/12-AOS1023}) in the standard assumption of independent observations. In \ref{proof:prop_consistency}, Proposition \ref{prop:consistency}, we present arguments regarding the consistence of the proposed estimator of the $\SECO$ which goes beyond this classical setup of independent observations.

If $\textbf{Z}^{(a)}$ and $\textbf{Z}^{(b)}$ are asymptotically independent, there is no guarantee that an extreme event in one vector will be accompanied by an extreme event in the other vector, then the statistic in \eqref{eq:emp_seco} will converge in probability to zero (see \ref{proof:prop_consistency}, Proposition \ref{prop:consistency}). On the other hand, if $\textbf{Z}^{(a)}$ and $\textbf{Z}^{(b)}$ are asymptotically comonotone, then the SECO reduces to $\hat{\theta}(a) = \hat{\theta}(b)$ almost surely, since an extreme event in one vector will always be accompanied by an extreme event in the other vector. Additionally, the lower and upper bounds of $\widehat{\SECO}(a,b)$ is given by:
\begin{equation}
\label{eq:upper_bound_emp_seco}
 0 \leq \widehat{\SECO}(a,b) \leq \min \{ \hat{\theta}(a), \hat{\theta}(b) \} \quad \textrm{a.s.},
\end{equation}
where the upper one is reached when $\textbf{Z}^{(a)}$ and $\textbf{Z}^{(b)}$ are asymptotically comonotone. The resulting matrix, denoted by $\hat{\Theta}$, is a $d \times d$ matrix where each entry is given by
\begin{equation}
\label{eq:emp_seco_matrix}
\hat{\Theta}(a,b) = \widehat{\SECO}(a,b) / \min\{ \hat{\theta}(a), \hat{\theta}(b) \}, \quad a,b \in \{1,\dots,d\},
\end{equation}
which is the normalised $\SECO$ metric.

\begin{algorithm}
    \renewcommand{\thealgorithm}{(CAICE)}
	
	\caption{Clustering procedure for AI block models with compound extreme}

\begin{algorithmic}[1]
\Procedure{CAICE}{$S$, $\tau$, $\hat{\Theta}$}
    \State Initialise: $S = \{1,\dots,d\}$, $\hat{\Theta}(a,b)$ for $a,b \in \{1,\dots,d\}$ and $l = 0$
    \While{$S \neq \emptyset$}
    		\State $l = l +1$
    		\If{ $|S| = 1$}
    			\State $\hat{O}_l = S$
    		\EndIf
    		\If{$|S| > 1 $}
    			\State $(a_l, b_l) = \arg \underset{a,b \in S}{\max} \, \hat{\Theta}(a,b)$
    			\If{$\hat{\Theta}(a_l,b_l) \leq \tau$}
    				\State $\hat{O}_l = \{a_l\}$
    			\EndIf
    			\If{$\hat{\Theta}(a_l,b_l) > \tau$}
    				\State $\hat{O}_l = \{ s \in S : \hat{\Theta}(a_l,s) \wedge \hat{\Theta}(b_l,s) \geq \tau \}$
    			\EndIf
    		\EndIf
    		\State $S = S \setminus \hat{O}_l$
    	\EndWhile
    \State return $\hat{O} = (\hat{O}_l)_l$
\EndProcedure
\end{algorithmic}
\label{alg:rec_pratic}
\end{algorithm}

The algorithm that we present in this section takes as input the matrix $\hat{\Theta}$ in \eqref{eq:emp_seco_matrix}. This enables the division of $d$ objects of interest into the thinnest partition possible such that mutual asymptotic independence holds between clusters. Algorithm \ref{alg:rec_pratic} summarises the procedure for clustering asymptotically independent compound extreme events. As detailed in \cite{boulin2023high} for the Algorithm ECO which is similar to the above algorithm \eqref{alg:rec_pratic}, the overall complexity of the estimation procedure is $O(d^2 (d \vee  n \ln (n))))$.

The $\tau$ threshold is the only hyper-parameter in the \ref{alg:rec_pratic} Algorithm, and its selection is important in obtaining an effective partitioning. A useful tool for choosing an appropriate threshold is the $\SECO$ value for the resulting partition, which has been recommended in \cite{boulin2023high}, Section 3.4. This metric measures the divergence between the sum of the extremal coefficients of each cluster and the extremal coefficient of the entire vector (see \eqref{eq:seco} with group of different sizes). An effective partitioning is achieved when this metric is minimised, ideally for a moderate value of the threshold $\tau$. By identifying the threshold that results in the lowest $\SECO$ value, one can establish a partition of pixels, ensuring that their clusters are asymptotically independent from each other.

\section{Detecting concomitant extremes of compound precipitation and wind speed extremes}
\label{sec:results}

\subsection{Non-serially independent}
\label{subsec:non-serially}

To statistically assert departures from serial independence of multivariate time series, we conducted a randomness test as proposed by \cite{Genest2004} and \cite{ghoudi2001nonparametric}. We use the methodology presented in \cite{kojadinovic2009tests} and extended in \cite{kojadinovic2011tests} to detect serial dependence in continuous multivariate time series. Briefly, let $\textbf{Z}_1, \textbf{Z}_2, \dots$ be $d$-dimensional random vectors. We chose an integer $k > 1$, and for $\textbf{u} \in [0,1]^{dk}$, the vector $\textbf{u}_{\{j\}}$ is defined as follows:

\begin{equation*}
u^{(i)}_{\{j\}} = (u^{(i)} - 1) \mathds{1}_{ \{ i \in \{(j-1)q+1,\dots,jq \} \} } + 1,
\end{equation*}

We form the $dk$-dimensional random vector $\mathfrak{Z}_i = (\textbf{Z}_i,\dots, \textbf{Z}_{i+k-1})$, with $i = 1,\dots,n$. The serial independence empirical copula process in the multivariate setting thus write
\begin{equation}
    \label{eq:serial_indep_emp_cop_pro}
    \sqrt{n} \left( C_n^{s}(\textbf{u}) - \Pi_{j=1}^d C_n^{s}(\textbf{u}_{\{j\}}) \right),
\end{equation}
where $C_n^s$ is the serial empirical copula process computed with $\mathfrak{Z}_1,\dots,\mathfrak{Z}_n$. Under the hypothesis of serial independence, one can establish the asymptotic behavior of \eqref{eq:serial_indep_emp_cop_pro} to a tight Gaussian process. To obtain potentially powerful test obtained above from the empirical process, \cite{kojadinovic2011tests} derived $2^{k-1}-1$ tests statistic based on a Möbius decomposition of the process in \eqref{eq:serial_indep_emp_cop_pro} in the continuous multivariate time series setting. Those tests are implemented in the \textsf{copula} R package (\cite{yan2007enjoy}).

Due to computational limitations, we only considered three $3 \times 3$ pixels, at different locations, covering the initial $304$ days of the study, which represented the two first years of observation. We analysed precipitation and wind separately and the resulting dependogram is depicted in Fig. \ref{fig:dependogram} in \ref{sec:sup_fig} of the supplementary materials. It is notable that most of the ``subsets of lags" exhibited serial dependence. The function \textsf{multSerialIndepTest} computed three $p$-values, all of which provided robust evidence against serial independence. Furthermore, the decreasing trend of the computed statistic implied that the dependence was weakening for observations further apart. These results align with the mixing conditions stated in \ref{proof:prop_consistency}.
\begin{figure}[!h]

\centering
\label{subfig:map}\includegraphics[scale=0.4]{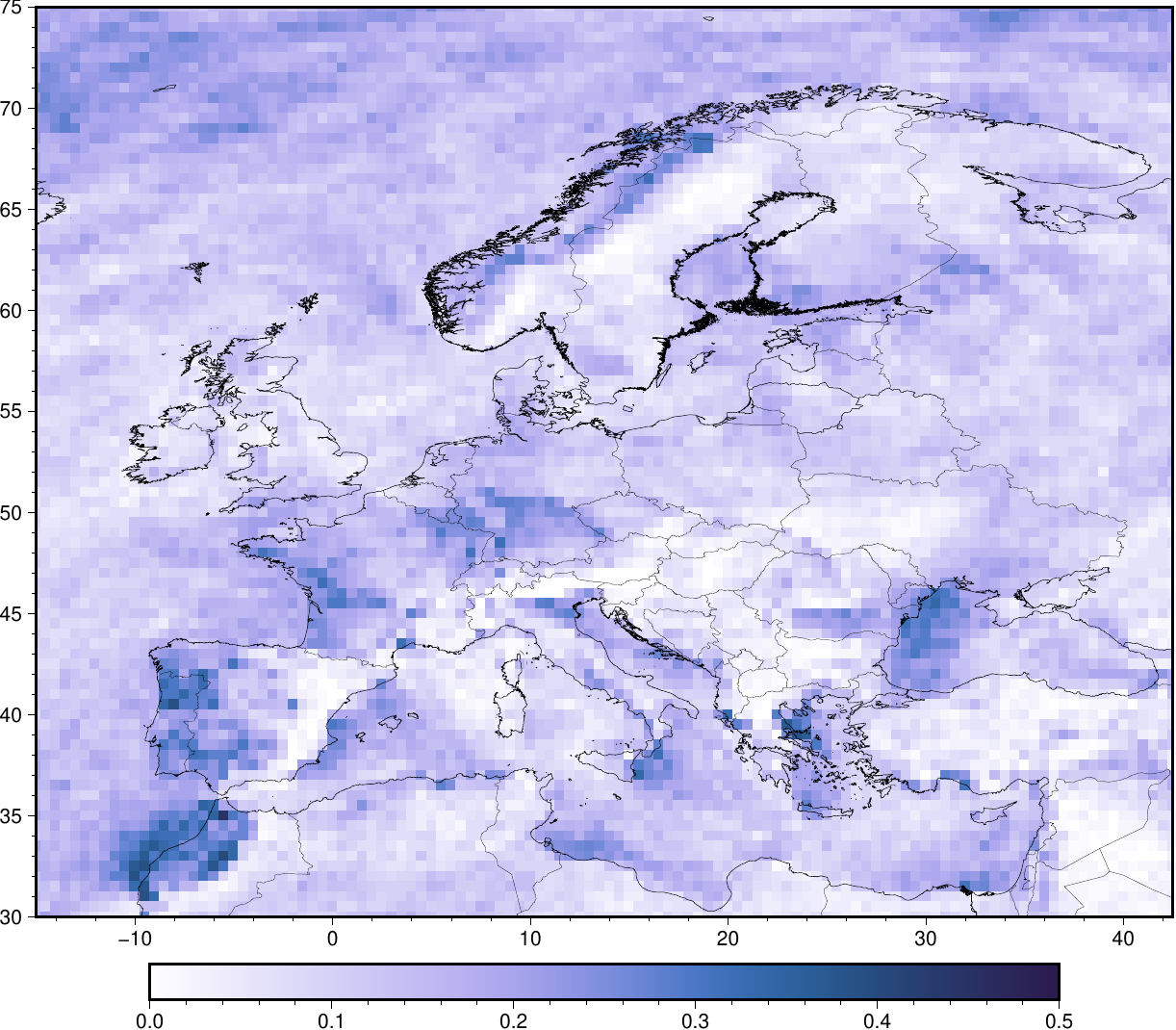}
\caption{Estimator of the extremal correlation within pixels, focusing on the daily sum of precipitation and wind speed maxima, $\hat{\chi}$ in Equation \eqref{eq:emp_ext_co}, maps for the $100$ largest values ($k=100$).}
\label{fig:main}
\end{figure}

\subsection{Exploratory analysis}

Extreme values of daily precipitation and wind speed maxima may occur together, and we aim to understand the spatial variability of this relationship. To identify regions for which extrema are non-concomitant between them, we consider the peak over threshold approach, which are values that lie above a certain value (see, for instance, \cite{beirlant2006statistics, haan2006extreme,resnick2007heavy} for an overall introduction to classical statistical methods in EVT). We conduct an exploratory analysis of the extremal dependence structure between the two variables within pixels. The estimated $\chi$ coefficient is defined by
\begin{equation}
    \label{eq:emp_ext_co}
    \hat{\chi}(a) = \frac{1}{k} \sum_{i=1}^n \mathds{1}_{ \{ R_i^{(a,1)} > n-k + 0.5, R_i^{(a,2)} > n - k +0.5 \} },
\end{equation}
where $R_i^{(a,\ell)}$ denotes the rank  of $Z_i^{(a,\ell)}$ among $Z_1^{(a,\ell)}, \dots, Z_n^{(a,\ell)}$, $\ell =1,2$. The estimated extremal coefficient between the variables reveals the highest co-occurrences in regions along the western coasts of Portugal, Spain, France, the UK, and Norway, as well as the northeastern coast of the Mediterranean (Fig. \ref{fig:main}). Conversely, the smallest co-occurrences are observed on the eastern coasts of the UK, Sweden, and Spain, over the northwestern coast of the Mediterranean, and around the Carpathian and southeastern Norwegian mountain ranges. These results are consistent with prior research (see, e.g., \cite{martius2016global}).

The low co-occurrence over eastern Norway and Spain may be attributed to the orographic enhancement of rain on the windward side of a mountain and the drying of the air as it reaches the lee (\cite{martius2016global}). This could also explain the high co-occurrences over the eastern coasts of the Mediterranean, where cyclones from the Mediterranean storm track may arrive perpendicularly to the mountains on the western coast of Italy and the eastern coasts of the Adriatic Sea (\cite{owen2021compound}). Additionally, the Cierzo winds may be responsible for the low co-occurrence to the south of the Pyrenees (\cite{martius2016global}). 

Nevertheless, the aforementioned analysis only considers extreme behavior within individual pixels. To dig deeper, we take a further step by using the empirical $\SECO$ outlined in Equation \eqref{eq:emp_seco} to explore interactions between pixels. Noteworthy, the empirical $\SECO$ does not inherently include spatial distances between pixels. As shown in Fig. \ref{fig:seco_dist}, we observe strong or moderate dependence between extremes for locations that are close to each other, while locations that are far apart have a normalised $\SECO$ near zero, indicating that extremes are weakly dependent or independent. The number of points per pixel highlights that most pixels are widely separated, with Euclidean distances between 20 to 40, and have small $\SECO$ values, indicating weak dependence or independence between compound extremes of two pixels. Therefore, even though it is not explicitly designed for this purpose, in the case of this specific dataset, the empirical $\SECO$ captures significant spatial information. This information could be highly valuable for comprehending the spatial patterns of precipitation and wind speed in Europe.

\begin{figure}[!h]
    \centering
    \includegraphics[scale=0.6]{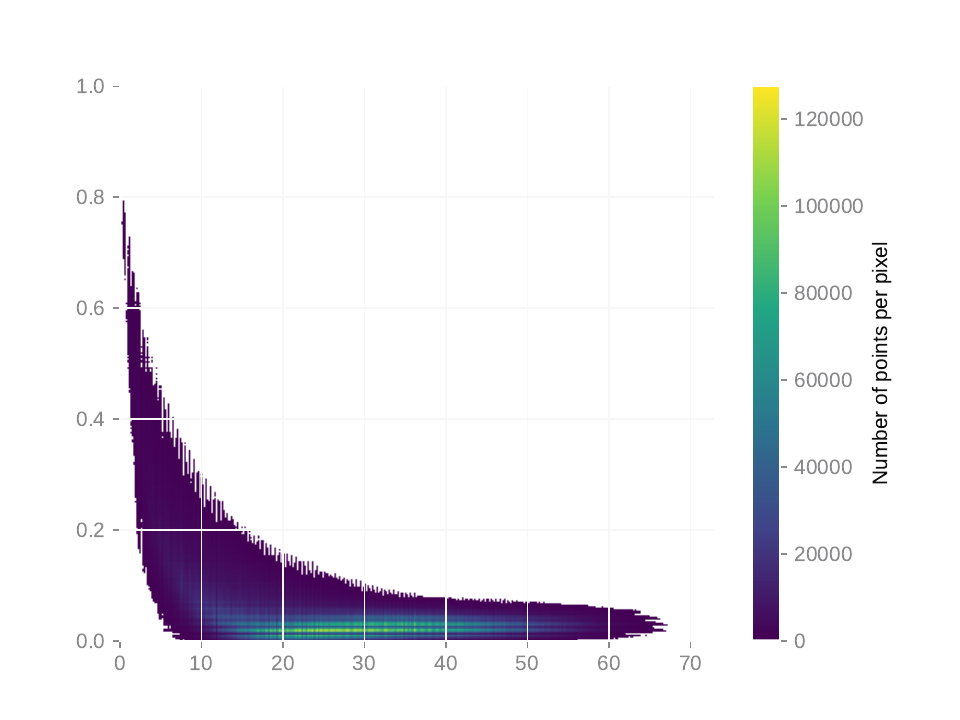}
    \caption{Pairwise $\SECO$ as a function of distance between sites.}
    \label{fig:seco_dist}
\end{figure}

In this context, our aim is to gain a deeper understanding of the spatial patterns of extreme total precipitation and wind speed maxima across Europe. To broaden our analysis of extremal pattern within pixels, we propose employing the clustering algorithm described in \ref{alg:rec_pratic} to group pixels while considering the extremal dependence of precipitation and wind speed between pixels.

\subsection{Clustering with constrained AI block models}

To delineate Europe into distinct regions that are mutually independent in compound weather extremes, we employ the clustering algorithm \ref{alg:rec_pratic}. Additionally, we employ a data-driven approach for selecting a suitable threshold, as elaborated below. Our initial step involves working with the matrix $\hat{\Theta}$ derived from Equation \eqref{eq:emp_seco_matrix}. The resultant clustering outcome is contingent on the value of $\tau$. Once Algorithm \ref{alg:rec_pratic} has been applied, we will denote the resulting partition as $\hat{O}(\tau) = \{\hat{O}_g\}_{g=1}^G$ of the set $\{1, \ldots, d\}$. Each cluster is characterised by a respective cardinality of $d_g$. As depicted in Figure \ref{fig:seco_dist}, even a small $\tau$ value within Algorithm \ref{alg:rec_pratic} can effectively partition the sub-regions of Europe. Nevertheless, a more rigorous approach to selecting the threshold value $\tau$ need to be discussed.

Let us introduce the random vector $\textbf{Y}^{(g)}_\tau$, composed of the variable index within $\hat{O}_g$. This means that $\textbf{Y}^{(g)}_\tau$ is a $(2d_g)$-dimensional random vector, considering two variables for each pixel in our case study. Moving forward, we can define the empirical $\SECO$ for groups of random vectors, which might have varying sizes within the given partition, using the following equation:
\begin{equation}
\label{eq:seco_partition}
\widehat{\SECO}(\textbf{Y}^{(1)}_\tau,\dots,\textbf{Y}^{(G)}_\tau) = \sum_{g=1}^G \hat{\theta}(g) - \hat{\theta}(1,\dots,d).
\end{equation}
The estimator mentioned above varies with $\tau$ due to its reliance on the partition $\hat{O}(\tau)$. As outlined in Section 3.4 of \cite{boulin2023high}, the values of $\tau$ that minimise $\SECO$ also ensure consistent recovery of our groups. For more details, we refer to Proposition 3 in \cite{boulin2023high}. To address this, we construct a loss function, denoted as $L$, over a grid of $\tau$ values denoted as $\Delta$, as follows:
\begin{equation}
\label{eq:loss_function}
L(\tau) = \ln \left(1 + \left(\widehat{\SECO} \left( \textbf{Y}^{(1)}_\tau,\dots,\textbf{Y}^{(G)}_\tau \right) - \underset{\tau \in \Delta}{\min} \, \widehat{\SECO}\left( \textbf{Y}^{(1)}_\tau,\dots,\textbf{Y}^{(G)}_\tau \right) \right) \right), \quad \tau \in \Delta.
\end{equation}

\begin{figure}[!h]

\begin{minipage}{.5\linewidth}
\centering
\subfloat[]{\label{subfig:seco_compound}\includegraphics[scale=.45]{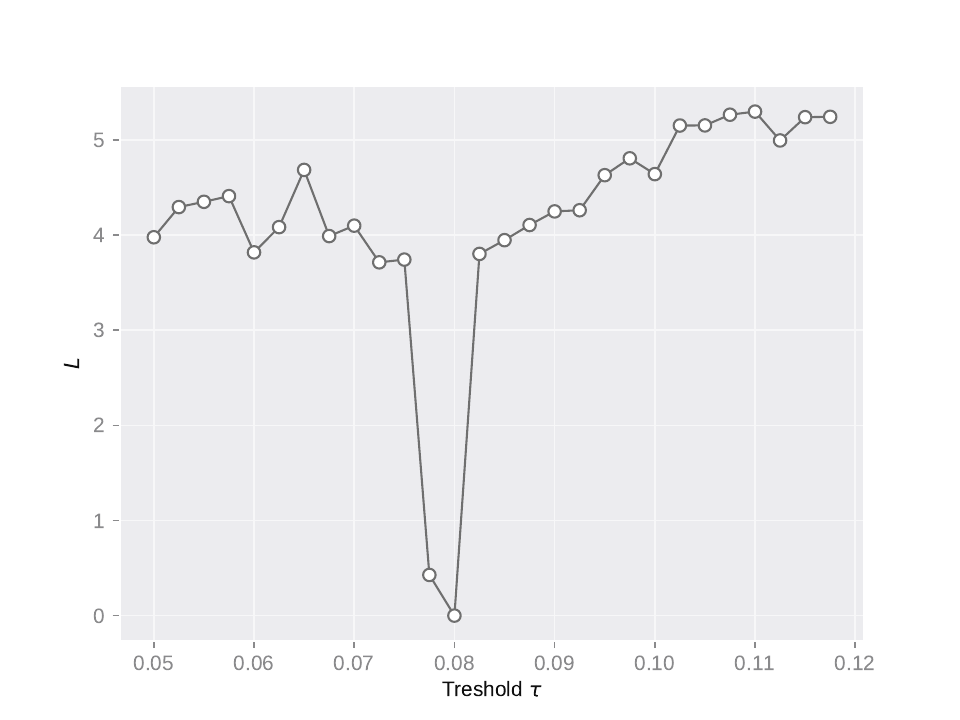}}
\end{minipage}%
\begin{minipage}{.5\linewidth}
\centering
\subfloat[]{\label{subfig:clust_compound}\includegraphics[scale=.45]{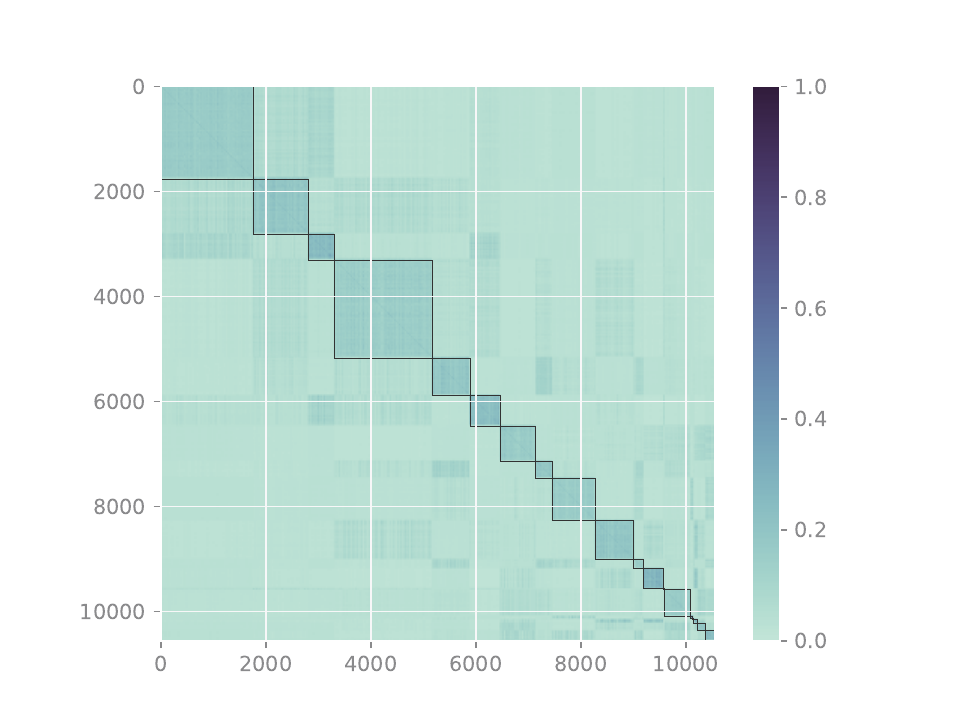}}
\end{minipage}

\caption{Value of the function $L(\tau)$ in \eqref{eq:loss_function} for different values of $\tau \in \Delta = \{0.05,0.0525,\dots,0.12\}$ in Panel \subref{subfig:seco_compound}. Partitions of the $\SECO$ similarity matrix with threshold $\tau = 0.08$ in Panel \subref{subfig:clust_compound}. Squares represent the clusters of variables.}
\label{fig:result_clust_compound}
\end{figure}

In our current high-dimensional context, where we have $n=6655$ daily observations and $d=10556$ pixels ($n < d$), estimating the extremal coefficient for the entire vector $\theta(1,\dots,d)$ encounters an upper bound that prevents it from reaching its theoretical maximum of $d$, as discussed in Section \ref{subsec:clt_comp_ext}. However, the loss function $L$ in equation \eqref{eq:loss_function} remains unaffected by this bias since it is mitigated by the subtraction operation. Nonetheless, it is essential to be mindful of this bias when computing $L$ for partitions with larger clusters, which typically occurs for smaller values of $\tau$. Therefore, we recommend reducing the number of extremes to $k=30$ when evaluating $L$. This adjustment widens the range of values used for estimating the extremal coefficient from higher values, but it comes at the expense of increased variance in the estimation process.

The value of $L$ with different values of $\tau$ with $\Delta = \{0.05,0.0525,\dots,0.12\}$ suggests that the best partitioning is found for $\tau = 0.08$ (Fig. \ref{fig:result_clust_compound}, Panel \subref{subfig:seco_compound}). For this threshold, we obtain $22$ clusters with $3$ of them having less than $10$ entities. We report the clustered $\SECO$ matrix defined in \eqref{eq:emp_seco_matrix} in Fig. \ref{fig:result_clust_compound}, Panel \subref{subfig:clust_compound}. 

\subsection{Results}

In Fig. \ref{fig:clust_1_9}, we present the twelve largest clusters obtained with the partition setting $\tau = 0.08$. Our algorithm effectively identifies sub-regions with strong dependence within clusters, as well as \textit{near-independence} or independence among compound extremes of daily precipitation and wind speed maxima in different clusters. The most prominent cluster, the fourth one, encompasses the North Sea and the Baltic Sea, which are connected basins. The North Sea is a marginal basin of the North Atlantic, and a shallow connection to the Baltic Sea exists through the Skagerrak and Kattegat regions. This area is particularly important for climate sciences and hydrology, and has inspired several works (see, for example, \cite{andree2022simulating, groger2019summer, wang2015development}). Another interesting area is the ninth cluster, consisting of the southwestern Black Sea, the Levantine Basin, and the southeast of the Anatolian peninsula. The Levantine Basin is known to be one of the windiest areas of the Mediterranean Sea, and the spatial distribution of the Levantine Basin in Figure 2 of \cite{soukissian2018offshore} outlines the geometry of this cluster. Additionally, extreme storm situations typically occur in December-January in the southwestern part of the Black Sea (\cite{divinsky2020extreme}). To the best of our knowledge, there are no studies on the tail dependence structure of climate variables in the southwestern Black Sea and the Levantine Basin. Furthermore, our algorithm sometimes distinguishes between the extremal behavior of land and sea, as illustrated by the first and sixth clusters for the Norwegian Sea and the Atlantic Ocean, respectively.

\begin{figure}[!h]
    \centering
    \includegraphics[width=11.5cm, height=8cm]{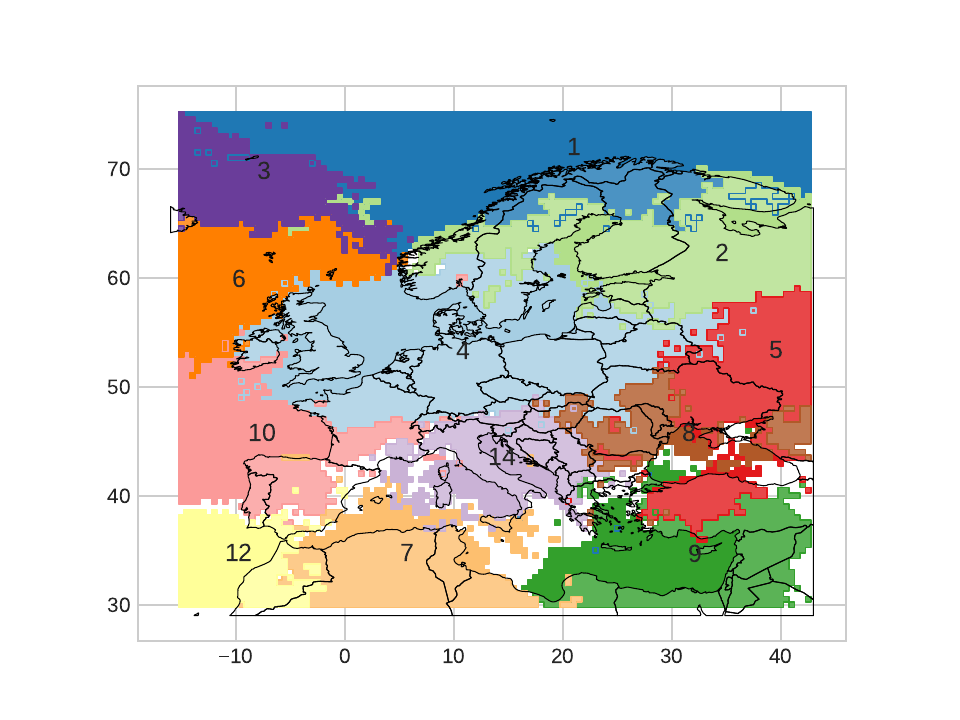}\vspace{-0.5cm}
    \caption{Representation of the $12$ largest clusters of the partition of the $\SECO$ matrix between pixels with Algorithm \ref{alg:rec_pratic} and threshold $\tau = 0.08$. Number of each cluster is depicted inside the cluster.} %Number of each cluster is depicted at the top-left corner of the corresponding panel.}
    \label{fig:clust_1_9}
\end{figure}

Identifying regions with simultaneous extreme events can be valuable for statistical modeling, and the method proposed here can address this question effectively. However, it is important to understand whether wind or precipitation is more important for the resulting clustering. To investigate this, a similar clustering process was applied separately to daily total precipitation and wind speed maxima using the extremal correlation as a dissimilarity. The resulting partitions are denoted by, $\hat{O}^{\textrm{P}}$, and $\hat{O}^{\textrm{W}}$ with an adapted threshold $\tau$. We will denote by $\hat{O}^{\textrm{PW}}$ the resulting partition for compound daily total precipitation and wind speed maxima. The figures depicting the results for the individual variables can be found in Fig. \ref{fig:result_clust_tp_wind} in \ref{sec:sup_fig}. 

The adapted threshold for the partition $\hat{O}^{P}$ is $\tau =0.09$, resulting in $70$ medium-sized clusters. For daily wind speed maxima, the optimal clustering $\hat{O}^{W}$ is obtained by setting $\tau = 0.07$, resulting in $24$ clusters. Upon visual inspection, our algorithmic partitioning reveals mosaic block patterns along the diagonal, while no clear patterns could be discerned from the off-diagonal. Additionally, the off-diagonal showcases moderate asymptotic dependence between groups or asymptotic independence, indicating that the resulting clustering aligns with the purpose of AI block models. The clusters for daily wind speed maxima are larger than those for daily total precipitation, which supports previous studies indicating that heavy gusts have a larger spatial impact than precipitation events (see, for example, \cite{pfahl2012spatial, raveh2015large}). With regard to spatial precipitation, several studies have shown that dependence tends to weaken for the largest observations (\cite{lalancette2021rank,le2018dependence}). This knowledge could explain why there are a large number of clusters with only a few entities for the clustering of daily total precipitation.

To compare different clustering methods, we use the Adjusted Rand Index (ARI), a popular measure used in clustering analysis (see, for instance, \cite{hubert1985comparing,rand1971objective}). To summarise, the ARI gives a concordance score between two different partitions. It takes values between 0 and 1 and the closer to 1, the more similar the partitions. For more details about its computation, we refer the reader to \ref{sec:ARI} of the supplementary materials. Computing the ARI between $\hat{O}^{PW}$ and $\hat{O}^{W}$ (resp, $\hat{O}^{PW}$ and $\hat{O}^{P}$), we obtain
\begin{equation*}
    \textrm{ARI}\left(\hat{O}^{PW}, \hat{O}^{W}\right) = 0.5, \quad \textrm{ARI}\left(\hat{O}^{PW}, \hat{O}^{P} \right) = 0.3.
\end{equation*}
An ARI value of $0$ indicates that the two sets are completely random and have nothing in common, while a value of $1$ indicates a perfect match between the two partitions. In this case, the ARI value of $0.5$ between the clustering of compound daily total precipitation and wind speed maxima and the clustering of the sole wind speed suggests that there is moderate similarity between the two sets, implying that there are fewer matching data points or clusters between them. In light of these results, we can conclude that the clustering of compound extreme is induced by both variables with a little more emphasis driven by wind speed maxima. 

\subsection{Alternative clustering method using $\SECO$}

%\textcolor{red}{Looking, SECO provides a similarity matrix (or dissimilarity matrix, taking $1 - \hat{\Theta}$). This matrix can be used to feed existing clustering algorithms. We have chosen here to consider two methods: hierarchical clustering on the one hand, and quantization-based clustering on the other.}\\

%\textcolor{red}{The hierarchical clustering approach... (on pourrait mettre ici les discussions qui sont plus bas).}\\

%\textcolor{red}{The quantification-based approach we have chosen was proposed in ... It is somewhat similar to a nearest-neighbor approach, but with guaranteed convergences. Based on the silhouette score, we performed a clustering in 10 clusters.}\\

%\textcolor{red}{Et ensuite je mettrai les cartes obtenues par les deux méthodes en une seule figure pour les commenter ensemble?}

In equation \eqref{eq:emp_seco_matrix}, $\hat{\Theta}$ can be envisioned as a similarity matrix, and $1-\hat{\Theta}$ as a dissimilarity matrix. This dissimilarity metric is smaller (or larger) for compound extremes that are dependent (or independent) between two pixels. The upper bound of the dissimilarity metric is $1$, which is reached when the compound extremes are independent. Thus, it is possible to perform classical method of clustering using this dissimilarity matrix. In particular, we have chosen to explore two different methods: a quantization-based approach and a Hierarchical clustering. These two approach requiring a specification of the number of clusters (unknown in practice), we use the ``silhouette coefficient" developed by, \cite{rousseeuw1987silhouettes} and which compares the tightness of clusters with their dissociation. In practice, the number of clusters, denoted as $K$, is determined by selecting the maximum average silhouette coefficient.

Quantization, also known as lossy data compression in information theory, involves the task of substituting data with an efficient and compact representation that allows for the reconstruction of the original observations with a certain degree of accuracy. A clustering problem can be viewed as an optimal quantization process aimed at minimising a specific loss function (for example, see \cite{banerjee2005clustering} or \cite[p. 15]{linder2002learning}). To generate clusters through the optimal quantization process, we utilised the algorithm described in references \cite{laloe20101} and \cite{laloe2021quantization}, setting arbitrary the number of clusters to $K = 10$ as the silhouette coefficient did not provide a suitable number of clusters. 

The process of hierarchical clustering begins with a basic partition, initially comprising of $d$ individual data points where each pixel stands alone as its own cluster. Subsequently, the data is grouped together incrementally, with clusters gradually merging until a single cluster encompassing all variables is achieved. At each step, the algorithm combines the two closest cluster centers according to a specific definition (for more details, refer to \cite[Chapter 12]{giraud2021introduction}), while keeping other clusters unchanged. For this approach, the silhouette coefficient leads us to set $K = 25$.

\begin{figure}[!h]

\begin{minipage}{.5\linewidth}
\centering
\subfloat[]{\label{subfig:clust_quanti}\includegraphics[scale=.45]{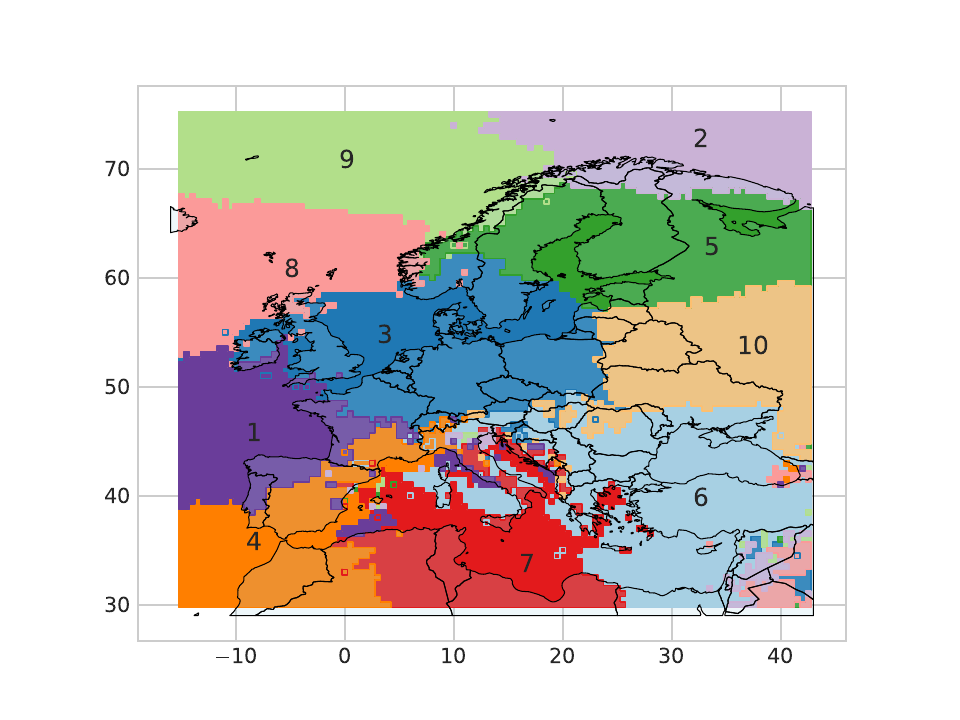}}
\end{minipage}%
\begin{minipage}{.5\linewidth}
\centering
\subfloat[]{\label{subfig:clust_hc}\includegraphics[scale=.45]{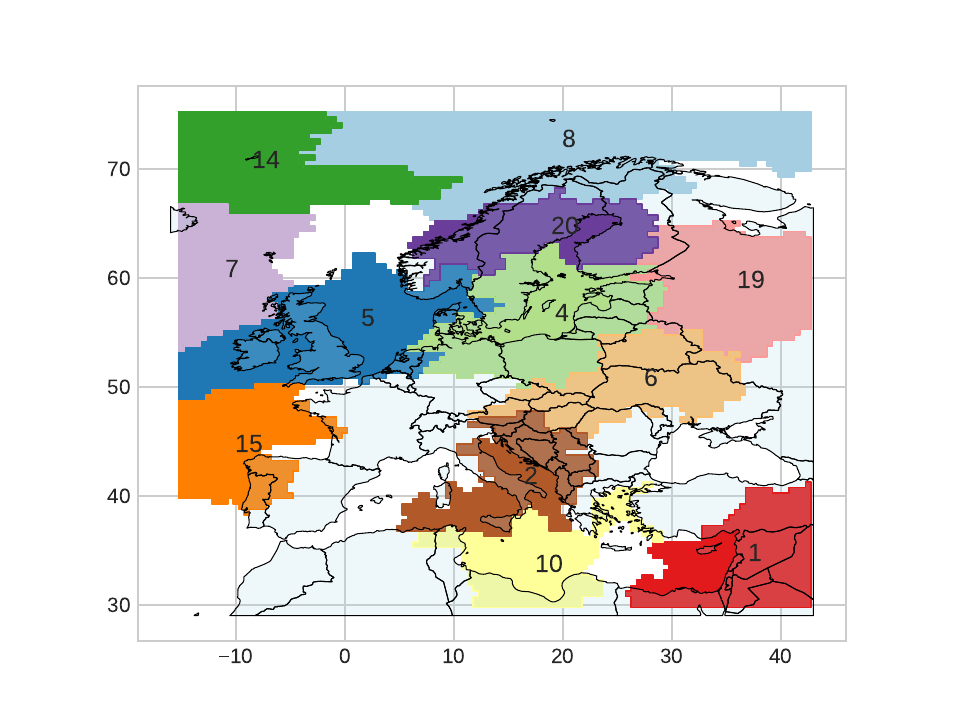}}
\end{minipage}

\caption{In Panel \subref{subfig:clust_quanti} is depicted the representation of the partition obtained through quantization-based approach with $K = 10$. In Panel \subref{subfig:clust_hc} a similar representation for the $12$ largest clusters is proposed through hierarchical clustering approach with $K=25$. These clusters were generated based on data related to daily total precipitation and wind speed maxima extremes from $10556$ pixels. Number of each cluster is depicted inside the cluster.}
\label{fig:clust_hc_quanti}
\end{figure}

The resulting partitions for this two methods are depicted in Figure \ref{fig:clust_hc_quanti}, and the clustered matrices can be found in Fig. \ref{fig:clust_quanti_matrix} and Fig. \ref{fig:result_clust_compound_hc} for further visualizations, in \ref{sec:sup_fig}. Notably, the clusters obtained through the quantization-based approach in the northern regions bear a resemblance to those obtained through Algorithm \ref{alg:rec_pratic}. For instance, clusters 3, 5, and 8 in Figure \ref{fig:clust_hc_quanti}, Panel \subref{subfig:clust_quanti}, are closely similar to clusters 2, 4, and 6 in Figure \ref{fig:clust_1_9} as identified through the \ref{alg:rec_pratic} Algorithm. Furthermore, as observed in the \ref{alg:rec_pratic} Algorithm, the quantization-based approach effectively distinguishes extreme behaviors between land and sea. This distinction is apparent in clusters 2 and 8 in Figure \ref{fig:clust_hc_quanti}, Panel \subref{subfig:clust_quanti}. These findings suggest the uniqueness of the $\SECO$ similarity measure in extracting valuable spatial information. One notable difference between the hierarchical clustering and Algorithm \ref{alg:rec_pratic} is that the clusters obtained through hierarchical clustering are smaller and depict stronger cross-dependencies. This phenomenon can be explained by the fact that some clusters which are separate in the hierarchical clustering partition are combined in the output of Algorithm \ref{alg:rec_pratic}. For instance, in Fig. \ref{fig:clust_hc_quanti}, Panel \subref{subfig:clust_hc}, most of pixels of clusters $4$ and $5$ in the hierarchical clustering are combined into a single cluster in Algorithm \ref{alg:rec_pratic}, the fourth one in Fig. \ref{fig:clust_1_9}. This suggests that compound extremes of those clusters are dependent, but smaller groups in the northern and Baltic Sea areas observe more concomitant extremes. Below, we investigate a hierarchical clustering of the fourth cluster given by Algorithm \ref{alg:rec_pratic} to inquiry whether or not we obtain a similar partition given by cluster $4$ and $5$ obtained in the hierarchical clustering.

\begin{figure}[!htb]
    \centering
    \includegraphics[width=11.5cm, height=8cm]{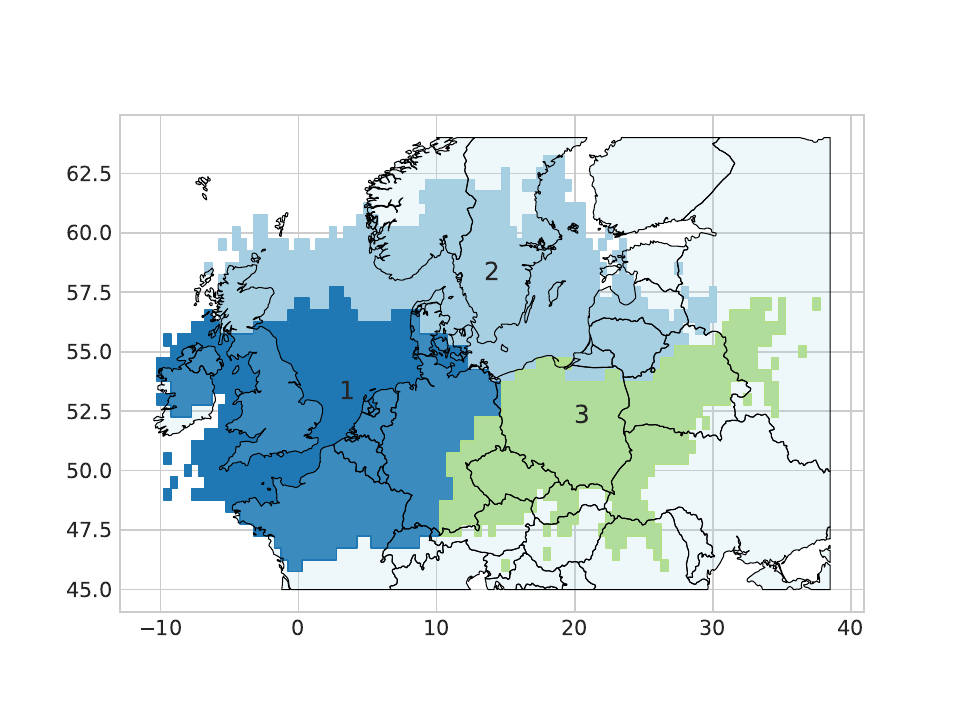}\vspace{-0.9cm}
    \caption{Representation of the $3$ clusters of the partition of the $1868$ pixels of the fourth cluster of the partition given by Algorithm \ref{alg:rec_pratic} using extremes of daily total precipitation and wind speed maxima. Number of each cluster is depicted inside the cluster.} %Number of each cluster is depicted at the top-left corner of the corresponding panel.}
    \label{fig:clust_thin}
\end{figure}

Fig. \ref{fig:clust_thin} displays the results of hierarchical clustering analysis on the most prominent cluster, i.e., cluster $4$ from the output of Algorithm \ref{alg:rec_pratic}. The analysis is performed on a reduced dataset, and the number of clusters is calibrated using the silhouette coefficient \cite{rousseeuw1987silhouettes}, with the maximum value obtained for $K=3$ (see Fig. \ref{fig:result_clust_thin}). The resulting matrix, $\hat{\Theta}$, reveals a strong dependence among compound extremes, indicating the presence of a whole asymptotic dependent vector with three distinct blocks with more concomitant compound extremes.

These results reveal intriguing patterns of co-occurring wind and precipitation extremes. The third cluster in Fig. \ref{fig:clust_thin}, which represents central-eastern Europe, is spatially coherent and having no access to the sea. The first cluster is connected to the North Sea and consists of western European countries (except Scotland). The second cluster comprises Scandinavian countries and the Baltic Sea, with a shallow connection to the North Sea. Notably, the North Sea's section belonging to this area corresponds to the Fladen Ground, the deepest part of the North Sea in the Scottish sector.

We also observe a similar partition given by the hierarchical clustering among all locations, where the North Sea is separated from the Baltic Sea. However, there is a difference in the Fladen Ground's classification, which is in different clusters for the hierarchical clustering among all pixels belonging to the North Sea cluster, see cluster $5$ in Fig. \ref{subfig:clust_hc}, and to the Baltic Sea cluster for the hierarchical clustering applied to the reduced dataset, see cluster 2 in Fig. \ref{fig:clust_thin}.

\section{Conclusion and perspectives}
\label{sec:conclusion}

Clustering spatial pixels is highly significant as it allows for a better comprehension of the inherent spatial pattern of a relevant physical phenomenon, enhances the accuracy of statistical procedures in situations where data is limited, and helps to identify regions where joint preventive measures can be taken to mitigate the impact of weather-related risks.

Traditionally, climate science research has focused on analysing single drivers or univariate dangers, which simplifies the complex dynamics of climate and its consequences. However, in reality, climate hazards often interact, leading to compound extremes. Therefore, statistical analyses that consider multiple hazards simultaneously are needed. In this paper, our goal is to identify subregions in Europe that demonstrate asymptotic independence concerning compound precipitation and wind speed extremes. In simpler terms, we want to find areas where two distinct subregions cannot experience concurrent compound extremes. To achieve this, we introduce a multivariate extreme value measure known as the $\SECO$ metric. This metric helps us quantify the extent to which random vectors of varying sizes deviate from asymptotic independence. We not only introduce this metric but also demonstrate the reliability of a non-parametric estimator for it, even in scenarios that go beyond the typical setup of independent observations. Building on this, we propose an algorithm specifically tailored for \emph{constrained} AI block model. This model ensures that pixels represent collections of univariate time series. Our algorithm allows us to pinpoint the largest partition where compound extremes exhibit independence over Europe. Interestingly, we uncover specific geographical patterns without relying on positional information, and it is worth noting that we do not need to pre-determine the number of clusters.

The proposed methodology can be extended to the case where the dataset has more than two variables, denoted as $p>2$. Furthermore, the methodology can also be extended to pixels with varying lengths of recorded time series, all the while preserving the concept of asymptotic independence. However, caution should be taken when interpreting the results in terms of coherence, or it may not be appropriate at all.

A major issue highlighted in this paper is how to estimate the dependence structure in high-dimensional datasets, where the number of variables is greater than the number of observations. As explained in \ref{sec:incompleteness}, traditional estimators are not able to accurately recover the dependence structure of an extreme value random vector if its margins are not sufficiently dependent, quantified by the condition $L(\textbf{x}) \geq n/k$ for $\textbf{x} \in \mathbb{R}^q$ where $n$ is the number of observations and $k$ the number of considered extremes. However, extreme value data are often scarce, making the high-dimensional setting common, except in certain cases.

\section*{Acknowledgments}

This work has been supported by the project ANR McLaren (ANR-20-CE23-0011). This work has been partially supported by the French government, through the 3IA Côte d’Azur Investments in the Future project managed by the  National Research Agency (ANR) with the reference number ANR-19-P3IA-0002. This work was also supported by the project ANR GAMBAS (ANR-18-CE02-0025) and by the french national programme LEFE/INSU.

\section*{Data Availability Statement}

Our code, $\SECO$ matrix and extremal correlation matrices are available from the following GitHub repository: \href{https://github.com/Aleboul/compound_extreme}{https://github.com/Aleboul/compound\_extreme}.

\printbibliography

\appendix

\renewcommand\thefigure{\arabic{figure}}
\renewcommand\thetable{\arabic{table}}
\section{Axioms for a valid dependence measure}
\label{sec:axioms}

In this section, we recall the axiomatic framework to quantify dependence between multiple groups of random variables of possibly different sizes \cite{de2023parametric}. We recall that we consider a random vector $\textbf{Z}$ with distribution $F$ that is in the max domain of attraction of and EVD $H$. Plausible axioms, for a valid dependence measure of a random vector $\textbf{X}$, denoted as $\mathcal{D}(\textbf{X})$, are as follows

\begin{enumerate}[label=(A\arabic*)]
    \item For every permutation $\pi$ of $\textbf{X}^{(1)}$, \dots, $\textbf{X}^{(d)}$: $\mathcal{D}(\textbf{X}) = \mathcal{D}(\pi(\textbf{X}))$; and for every permutation $\pi_j$ of $X^{(j,1)},\dots,X^{(j,p_j)}$, for $j \in \{1,\dots, d\}$, it holds: $$\mathcal{D}(\textbf{X}) = \mathcal{D}(\textbf{X}^{(1)},\dots, \pi^{(j)}(\textbf{X}^{(j)}), \dots, \textbf{X}^{(d)}).$$ \label{axiom:A1}
    \item $0\leq \mathcal{D}(\textbf{X}) \leq 1$. \label{axiom:A2}
    \item $\mathcal{D}(\textbf{X}) = 0$ if and only if $\textbf{X}^{(1)},\dots,\textbf{X}^{(d)}$ are mutually independent. \label{axiom:A3}
    \item $\mathcal{D}(\textbf{X}^{(1)},\dots,\textbf{X}^{(d)},\textbf{X}^{(d+1)})$ with equality if and only if $\textbf{X}^{(d+1)}$ is independent of $(\textbf{X}^{(1)},\dots,\textbf{X}^{(d)})$.\label{axiom:A4}
    \item $\mathcal{D}(\textbf{X})$ is well defined for any $q$-dimensional random vector $\textbf{X}$ and is a functional of solely the copula $C$ of $\textbf{X}$.\label{axiom:A5}
    \item Let $T^{(j,\ell)}$ for $j = 1\dots,d$ and $\ell=1,\dots,p_j$ be strictly increasing, continuous transformations. Then
    \begin{equation*}
        \mathcal{D}(\textbf{T}^{(1)}(\textbf{X}^{(1)}),\dots,\textbf{T}^{(d)}(\textbf{X}^{(d)})) = \mathcal{D}(\textbf{X}^{(1)},\dots, \textbf{X}^{(d)}),
    \end{equation*}
    where $\textbf{T}^{(j)} = (T^{(j,1)}(X^{(j,1)}),\dots,T^{(j,p_j)}(X^{(j,p_j)}))$ for $j = 1,\dots,d$. \label{axiom:A6}
    \item Let $T^{(j,\ell)}$ be a strictly decreasing, continuous transformation for a fixed $j \in \{1,\dots,d\}$ and a fixed $\ell \in \{1,\dots,p_j\}$. Then
    \begin{equation*}
        \mathcal{D}(\textbf{X}^{(1)},\dots, T^{(j,\ell)}(\textbf{X}^{(j)}),\dots,\textbf{X}^{(d)}) = \mathcal{D}(\textbf{X}^{(1)},\dots,\textbf{X}^{(d)}),
    \end{equation*}
    where $T^{(j,\ell)}(\textbf{X}^{(j)}) = (X^{(j,1)},\dots, T^{(j,\ell)}(X^{(j,\ell)}),\dots,X^{(j,p_i)})$. \label{axiom:A7}
    \item Let $(\textbf{X}_n)_{n \in \mathbb{N}}$ be a sequence of $q$-dimensional reduction random vectors having copulas $(C_n)_{n \in \mathbb{N}}$, then
    \begin{equation*}
        \underset{n \rightarrow \infty}{\lim} \mathcal{D}(\textbf{X}_n) = \mathcal{D}(\textbf{X})
    \end{equation*}
    if $C_n \rightarrow C$ uniformly, where $C$ denotes the copula of $\textbf{X}$. \label{axiom:A8}
\end{enumerate}

Having those necessary materials, we now detail below which axioms hold for the $\SECO$ metric to measure the dependence among extreme of random vectors.

\begin{proposition}
    \label{prop:dep_meas}
    Let $\SECO$ be the metric defined in \eqref{eq:seco}. Then, it satisfies the system of axioms \ref{axiom:A1}, \ref{axiom:A2} with the following bounds
    \begin{equation*}
        0 \leq \SECO(\textbf{Z}^{(1)},\dots,\textbf{Z}^{(d)}) \leq \underset{j = 1,\dots,d}{\min}\left\{ \sum_{k \neq j} \theta(k) \right\},
    \end{equation*} \ref{axiom:A3}, \ref{axiom:A4}, \ref{axiom:A5}, \ref{axiom:A6} and \ref{axiom:A8} stated in \cite{de2023parametric}.
\end{proposition}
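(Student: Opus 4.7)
The plan is to express every term in \eqref{eq:seco} through the stable tail dependence function $L$ from \eqref{eq:stable_from_f}, via the identity $\theta(A) = L(\mathbf{1}_A)$ valid for any index set $A$, where $\mathbf{1}_A$ denotes the indicator vector. Since $L$ is determined by the copula of $F$, the $\SECO$ is a functional of this copula, which gives \ref{axiom:A5}; \ref{axiom:A6} then follows because strictly increasing continuous marginal transformations leave the copula invariant. Axiom \ref{axiom:A1} is immediate from the symmetry of the maxima in the definitions of $\theta(j)$ and $\theta(1,\dots,d)$.

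For \ref{axiom:A3}, I would cite directly \cite[Proposition~8(iii)]{boulin2023high} and \cite{FERREIRA2011586}, already invoked by the paper right after \eqref{eq:seco}: $\SECO = 0$ iff the $d$ groups are extreme-value independent. The lower bound $\SECO \geq 0$ in \ref{axiom:A2} is then automatic. For the upper bound, I would observe that $\theta(1,\dots,d) \geq \theta(j_0)$ for every $j_0$, since the probability defining $\theta(1,\dots,d)$ is taken over a maximum with strictly more components and hence is at least as large as that defining $\theta(j_0)$. Substituting in \eqref{eq:seco} gives $\SECO \leq \sum_{k \neq j_0}\theta(k)$ for every $j_0$, and taking the minimum over $j_0$ yields the bound stated in the proposition.

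Axiom \ref{axiom:A4} is monotonicity under appending a group, and the key identity I would use is
\[
\SECO(\textbf{Z}^{(1)},\dots,\textbf{Z}^{(d+1)}) - \SECO(\textbf{Z}^{(1)},\dots,\textbf{Z}^{(d)}) = \theta(d+1) + \theta(1,\dots,d) - \theta(1,\dots,d+1),
\]
which is non-negative by subadditivity of the extremal coefficient, itself a consequence of a union-bound on the probability defining $\theta$. Equality holds precisely when the corresponding joint-exceedance residual vanishes as $t \to 0$, which is the definition of extremal independence between the block $\{d+1\}$ and the block $\{1,\dots,d\}$. For \ref{axiom:A8}, I would use that the limiting EVD has an extreme-value copula $C_H$ with $\theta(A) = -\ln C_{H,A}(e^{-1},\dots,e^{-1})$; uniform convergence of the underlying copulas transfers to pointwise convergence of each extremal coefficient, and continuity of $\SECO$ follows since the logarithm is applied to arguments bounded away from zero. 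The main obstacle I anticipate is the equality case in \ref{axiom:A4}: identifying the inclusion--exclusion residual with the asymptotic independence condition calls on Definition \ref{def:ai_bm}, and care is needed to reconcile the ``extremal'' notion of independence with the stricter independence statement appearing in the original axiomatic framework of \cite{de2023parametric}.
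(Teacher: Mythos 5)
Your proposal is correct and follows essentially the same route as the paper's proof: the same appeal to Propositions 7 and 8 of \cite{boulin2023high} for non-negativity, the characterisation of $\SECO=0$, and the subadditivity $\theta(1,\dots,d)+\theta(d+1)\geq\theta(1,\dots,d,d+1)$ with its equality case for \ref{axiom:A4}, and the same copula representation of the extremal coefficients for \ref{axiom:A5}, \ref{axiom:A6} and \ref{axiom:A8}. The one substantive point of divergence is the upper bound in \ref{axiom:A2}: your inequality $\theta(1,\dots,d)\geq\theta(j_0)$ for every $j_0$ (hence $\theta(1,\dots,d)\geq\max_j\theta(j)$) is the correct direction and does yield $\SECO\leq\min_{j}\{\sum_{k\neq j}\theta(k)\}$, whereas the paper's displayed inequality $\theta(1,\dots,d)\leq\min\{\theta(1),\dots,\theta(d)\}$ has the direction reversed (apparently a typo, since as written it would give a lower rather than an upper bound), so your version is the one that actually proves the stated claim.
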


\begin{proof}
The reason why Property \ref{axiom:A1} holds is due to the fact that the set union and addition of numbers are commutative. Proposition 7 and 8 from \cite{boulin2023high} imply the results stated about \ref{axiom:A2} for the lower bound and \ref{axiom:A3}, where the latter is related to asymptotic independence. To obtain the upper bound, it is observed that
    \begin{equation*}
        0 \leq \theta(1,\dots,d) \leq \min\{\theta(1),\dots,\theta(d)\}.
    \end{equation*}
    For property (A4), the inequality
    \begin{equation*}
        \SECO(\textbf{Z}^{(1)},\dots,\textbf{Z}^{(d)}, \textbf{Z}^{(d+1)}) \geq \SECO(\textbf{Z}^{(1)},\dots,\textbf{Z}^{(d)})
    \end{equation*}
    can be rewritten as
    \begin{equation*}
        \theta(1,\dots,d) + \theta(d+1) \geq \theta(1,\dots,d,d+1),
    \end{equation*}
    which holds true by Proposition 7 of \cite{boulin2023high}. Moreover, this inequality holds as an equality if and only if $\textbf{Z}^{(d+1)}$ is asymptotically independent of $(\textbf{Z}^{(1)},\dots,\textbf{Z}^{(d)})$, according to Proposition 8 of \cite{boulin2023high}.

    We have for every $u \in (0,1)$,
    \begin{equation*}
        \theta(1,\dots,d) = \ln\left( C(u,\dots,u) \right) / \ln(u),
    \end{equation*}
    where $C$ is the extreme value copula in the domain of attraction of $\textbf{Z}$, i.e.,
    \begin{equation*}
        C(u^{(1,1)},\dots,u^{(d,p_d)}) = \exp\left\{-L\left(-\ln u^{(1,1)},\dots, -\ln u^{(d,p_d)}\right) \right\}.
    \end{equation*}
    Hence \ref{axiom:A5} and \ref{axiom:A6} are fulfilled. Let $(\textbf{Z}_n)_{n \in \mathbf{N}}$ be a sequence of random vector, for each $n$, suppose that $\textbf{Z}_n$ is in the max-domain of attraction of a random vector $H_n$ an extreme value distribution with extreme value copula $C_n$. Suppose that the sequence of extreme value copulae $(C_n)_{n \in \mathbb{N}}$ converges uniformly to $C$, where $C$ is the extreme value copula of $H$. Then, for every $\textbf{u} \in (0,1)$, we have the point wise convergence:
    \begin{equation*}
        C_n(\textbf{u}) \rightarrow C(\textbf{u}), \quad n \rightarrow \infty.
    \end{equation*}
    Thus, as $\ln : (0,1) \rightarrow \mathbb{R}_{-}$ is continuous, we have
    \begin{equation*}
        \ln (C_n(\textbf{u})) \rightarrow \ln (C(\textbf{u})), \quad n \rightarrow \infty.
    \end{equation*}
    Then
    \begin{equation*}
        \SECO(\textbf{Z}_{n}^{(1)},\dots, \textbf{Z}_n^{(d)}) \rightarrow \SECO(\textbf{Z}^{(1)},\dots,\textbf{Z}^{(d)}), \quad n \rightarrow \infty.
    \end{equation*}
    
\end{proof}

\section{A coherent measure for extreme value random vectors}
\label{sec:coherent_measure}

In this section, we borrow the notation previously introduced in the specific case where $d = 2$ and $p_a = p_b = p$. Let $\textbf{Z}$ and $\textbf{W}$ be two random vectors with marginal random vectors $\textbf{Z}^{(a)}$ and $\textbf{Z}^{(b)}$ (resp. $\textbf{W}^{(a)}$ and $\textbf{W}^{(b)}$), each having $p$ components. We set $q = 2p$. Assume that $\textbf{Z}$ and $\textbf{W}$ are in the max-domain of attraction of two extreme value distributions with stable tail dependence function $L_\textbf{Z}$ and $L_\textbf{W}$. Here, $L_\textbf{Z}$ and $L_\textbf{W}$ are nested, and for $\textbf{x} \in [0,\infty)^q$, $\textbf{x}^{(a)} = (x^{(a,1)},\dots,x^{(a,p)}), \textbf{x}^{(b)} = (x^{(b,1)},\dots,x^{(b,p)})$, we have
\begin{align*}
&L_\textbf{Z}(\textbf{x}) = L_\textbf{Z}^{(0)}\left(L_\textbf{Z}^{(a)}(\textbf{x}^{(a)}), L_\textbf{Z}^{(b)}(\textbf{x}^{(b)})\right), \quad L_\textbf{W}(\textbf{x}) = L_\textbf{W}^{(0)}\left(L_\textbf{W}^{(a)}(\textbf{x}^{(a)}), L_\textbf{W}^{(b)}(\textbf{x}^{(b)})\right),
\end{align*}
where $L_\textbf{Z}^{(0)}$ and $L_\textbf{Z}^{(j)}$ are the``mother" stable tail dependence function and the stable tail dependence functions associated with the extreme value random vectors $\textbf{X}^{(j)}$, respectively. Similarly, $L_\textbf{W}^{(0)}$ and $L_\textbf{W}^{(j)}$ represent the stable tail dependence functions associated with the extreme value random vectors $\textbf{Y}^{(j)}$ for $j = a,b$. The $\SECO$ for these two models thus reduces to
\begin{align*}
    &\SECO(\textbf{Z}^{(a)}, \textbf{Z}^{(b)}) = L_\textbf{Z}^{(a)}(\textbf{1}^{(a)}) + L_\textbf{Z}^{(b)}(\textbf{1}^{(b)}) - L_\textbf{Z}^{(0)}\left(L_\textbf{Z}^{(a)}(\textbf{1}^{(a)}), L_\textbf{Z}^{(b)}(\textbf{1}^{(b)})\right), \\
    &\SECO(\textbf{W}^{(a)}, \textbf{W}^{(b)}) = L_\textbf{W}^{(a)}(\textbf{1}^{(a)}) + L_\textbf{W}^{(b)}(\textbf{1}^{(b)}) - L_\textbf{W}^{(0)}\left(L_\textbf{W}^{(a)}(\textbf{1}^{(a)}), L_\textbf{W}^{(b)}(\textbf{1}^{(b)})\right)
\end{align*}
Taking advantage of the definition introduced by \cite{scarsini1984measures}, we say that $\textbf{Z}$ is more concordant than $\textbf{W}$ in the max-domain of attraction if for every possible value of $\textbf{x} \in \mathbb{R}^q$, $L^{(0)}_\textbf{Z}(\textbf{x}) \leq L^{(0)}_\textbf{W}(\textbf{x})$. Now, let's suppose that $\textbf{Z}$ and $\textbf{W}$ have the same marginal extremal dependence structure, which means that their stable tail dependence functions are the same for both components, i.e.,
\begin{equation*}
    L_\textbf{Z}^{(a)}(\textbf{x}^{(a)}) = L_\textbf{W}^{(a)}(\textbf{x}^{(a)}), \quad L_\textbf{Z}^{(a)}(\textbf{x}^{(a)}) = L_\textbf{W}^{(b)}(\textbf{x}^{(b)}), \quad \textbf{x} \in \mathbb{R}^q.
\end{equation*}
Thus, we have
\begin{align*}
    &\SECO(\textbf{Z}^{(a)}, \textbf{Z}^{(b)}) - \SECO(\textbf{W}^{(a)}, \textbf{W}^{(b)}) = \\ &L_\textbf{W}^{(0)}\left(L^{(a)}_\textbf{W}(\textbf{1}^{(a)}), L^{(b)}_\textbf{W}(\textbf{1}^{(b)})\right) - L_\textbf{Z}^{(0)}\left(L^{(a)}_\textbf{W}(\textbf{1}^{(a)}), L^{(b)}_\textbf{W}(\textbf{1}^{(b)})\right)
\end{align*}
which is positive. Thus the $\SECO$ is a coherent measure given the marginal random vector's dependence structure is the same. 

To better understand the behaviour of the SECO, we will consider two specific nested models: the nested Gumbel and Hüsler-Reiss models, for which $p = 2$ and $d = 2$. For these models, we will analyse the corresponding stable tail dependence functions.

\begin{align*}
    &L^{\textrm{Gu}}(x_1,x_2) = \left(x_1^{1/\alpha} + x_2^{1/\alpha}\right)^{\alpha}, \\ &L^{\textrm{HR}}(x_1,x_2) = \Phi\left( \frac{\lambda}{2} + \frac{x_2-x_1}{\lambda} \right)x_1 + \Phi \left( \frac{\lambda}{2} + \frac{x_1-t_2}{\lambda}\right) x_2,
\end{align*}
with $\alpha \in [0,1]$, $\lambda \in [0,\infty)$ and $\Phi$ denotes the cumulative distribution function of a standard normal random variable. The first stable tail dependence function is known as the Gumbel (or Logistic) distribution introduced by \cite{gumbel1960bivariate}, the parameter $\alpha$ represent the strength of dependence: if $\alpha \rightarrow 0$, then the two random variables are comonotone while if $\alpha = 1$ it reduces to independence. The second stable tail dependence function is the Hüsler  Reiss distribution (\cite{husler1989maxima}), both asymptotic comonotony and independence are depicted by the respective limits $\lambda \rightarrow 0$ and $\lambda \rightarrow \infty$. 

For the nested Gumbel model, the $\SECO$ is equal to

\begin{equation}
    \label{eq:nested_logistic_seco}
    2^{\alpha_a} + 2^{\alpha_b} - \left( 2^{\alpha_a / \alpha_0} + 2^{\alpha_b / \alpha_0} \right)^{\alpha_0}.
\end{equation}
This equation involves the parameters $\alpha_0, \alpha_a,$ and $\alpha_b$, which correspond to the mother, the first, and the second random vector margins, respectively. In Fig. \ref{fig:nested_gumbel_seco} are depicted level sets of \eqref{eq:nested_logistic_seco} and the normalised version, i.e., \eqref{eq:nested_logistic_seco} divided by $\min\{2^{\alpha_a}, 2^{\alpha_b}\}$, according to $\alpha_0$ for different dependence structure between the marginal random vectors. Additionally, we can also calculate the $\SECO$ for the Hüsler-Reiss nested model which equals to

\begin{equation}
    \label{eq:nested_hr_seco}
    \theta^{(a)} + \theta^{(b)} - \left[ \Phi\left( \frac{\lambda_0}{2} + \frac{\theta^{(b)}- \theta^{(a)}}{\lambda_0} \right)\theta^{(a)} + \Phi\left( \frac{\lambda_0}{2} + \frac{\theta^{(a)}- \theta^{(b)}}{\lambda_0} \right)\theta^{(b)} \right],
\end{equation}
where $\theta^{(j)} = 2 \Phi (\lambda_j / 2)$, $j = a,b$ are the extremal coefficients and includes several parameters such as $\lambda_a$, $\lambda_b$, and $\lambda_0$ which correspond to the first, the second random vector margins and the mother respectively. In Fig. \ref{fig:nested_hr_seco}, we represent level sets of \eqref{eq:nested_hr_seco} and the normalised version of \eqref{eq:nested_hr_seco}, that is divided by $\min\{ \theta^{(a)}, \theta^{(b)} \}$, the bound found in Proposition~\ref{prop:dep_meas}.
\begin{figure}[!ht]

\begin{minipage}{.5\linewidth}
\centering
\subfloat[]{\label{subfig:nested_gumbel}\includegraphics[scale=.5]{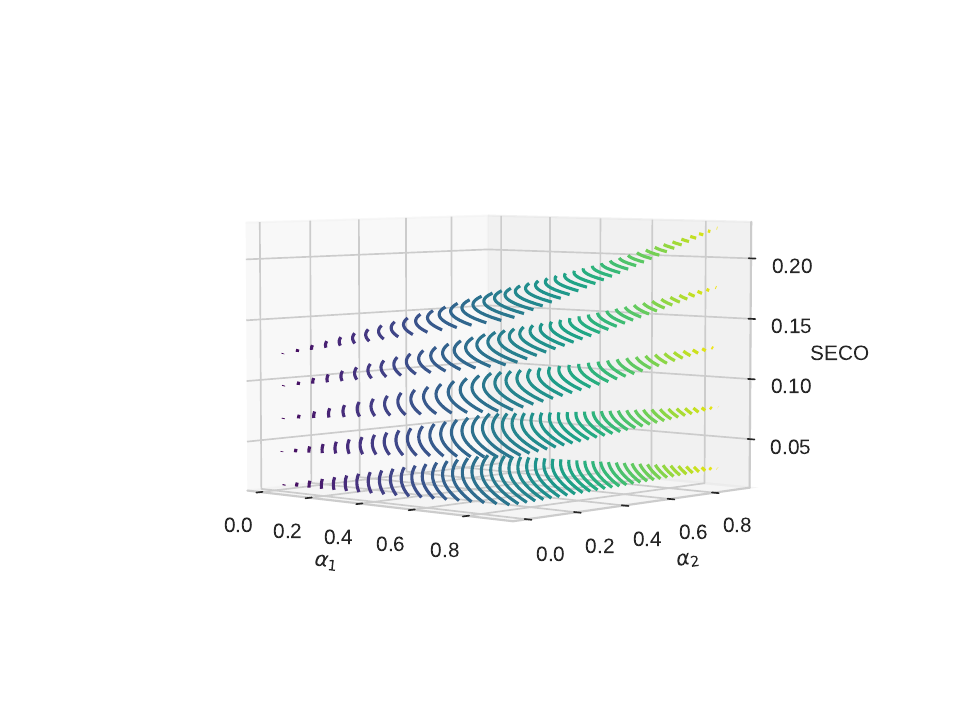}}
\end{minipage}%
\begin{minipage}{.5\linewidth}
\centering
\subfloat[]{\label{subfig:nested_gumbel_normalized}\includegraphics[scale=.5]{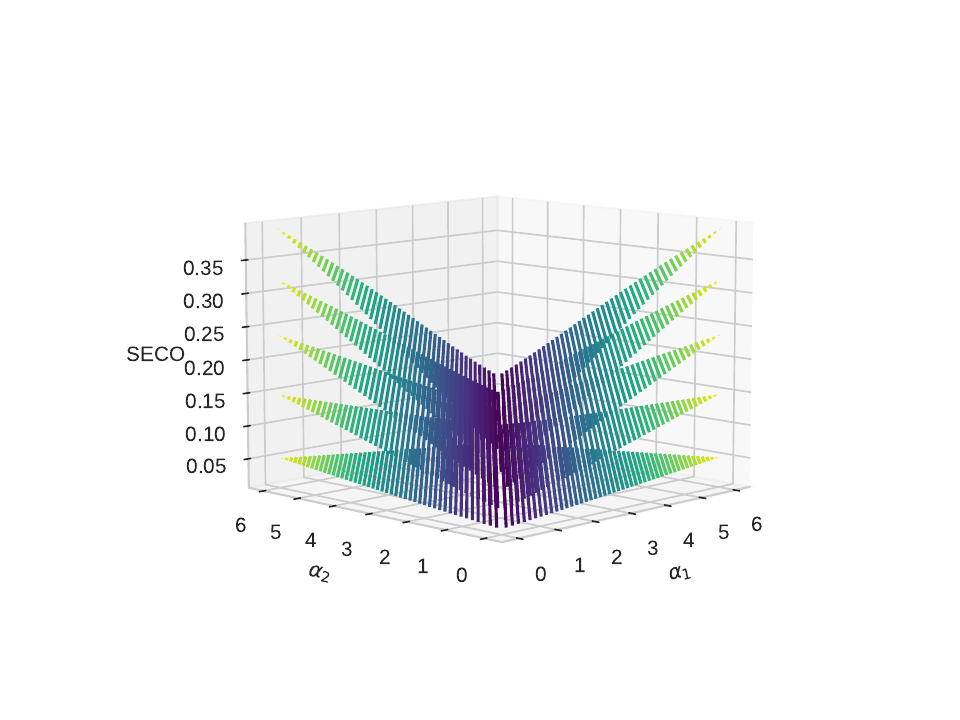}}
\end{minipage}

\caption{Level sets of the $\SECO$ for the nested Logistic model (see \eqref{eq:nested_logistic_seco}) in Panel \ref{subfig:nested_gumbel} and the normalised $\SECO$ in Panel \ref{subfig:nested_gumbel_normalized} for $\alpha_0 \in \{0.91,0.93,0.95,0.97,0.99\}$ and $\alpha_1, \alpha_2$ range in $\{0.01,0.02,\dots,0.9\}$. The coherence property ensures that the level sets are arranged in ascending order based on the values of $\alpha_0$.}
\label{fig:nested_gumbel_seco}
\end{figure}

\begin{figure}[!ht]

\begin{minipage}{.5\linewidth}
\centering
\subfloat[]{\label{subfig:nested_hr}\includegraphics[scale=.5]{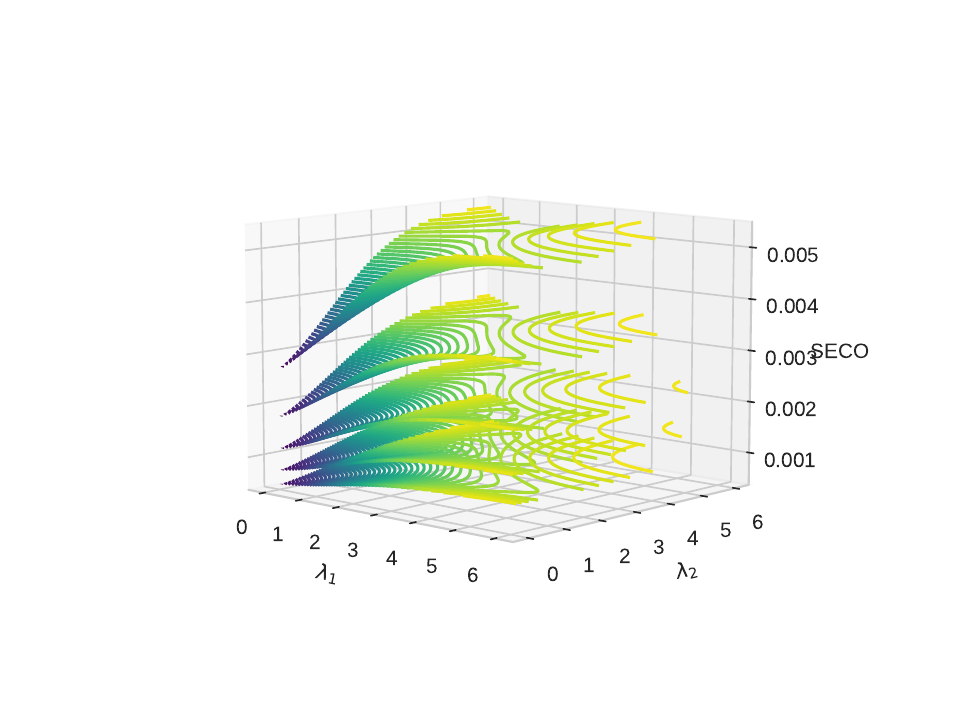}}
\end{minipage}%
\begin{minipage}{.5\linewidth}
\centering
\subfloat[]{\label{subfig:nested_hr_normalized}\includegraphics[scale=.5]{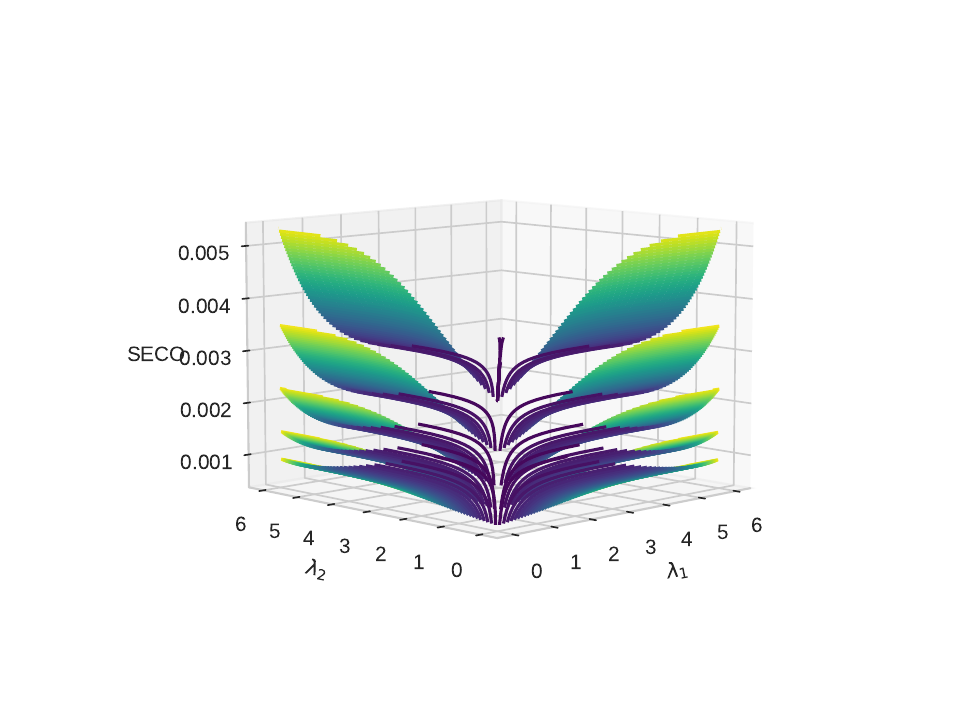}}
\end{minipage}

\caption{Level sets of the $\SECO$ for the nested Hüsler-Reiss model (see \eqref{eq:nested_hr_seco}) in Panel \ref{subfig:nested_hr} and the normalised $\SECO$ in Panel \ref{subfig:nested_hr_normalized} for $\lambda_0 \in \{6.0,6.25,6.5,6.75,7.0\}$ and $\lambda_1, \lambda_2$ range in $\{0.01,0.02,\dots,6.0\}$. The coherence property ensures that the level sets are arranged in ascending order based on the values of $\lambda_0$.}
\label{fig:nested_hr_seco}
\end{figure}

In both Fig. \ref{fig:nested_gumbel_seco} and Fig. \ref{fig:nested_hr_seco}, the behaviour of $\SECO$ exhibits similarities, particularly in relation to the level sets. These level sets are characterised by their monotonicity, whereby the level set's height increases with the degree of dependence among the marginal random vectors. In other words, the greater the dependence between the marginal random vectors, the higher the level set will be. Additionally, the $\SECO$ demonstrates monotonicity through $\alpha_1, \alpha_2$ (or alternatively, $\lambda_1, \lambda_2$), whereby the value of the $\SECO$ increases as the random variables in the marginal random vectors become more concordant. In simpler terms, the more closely related the random variables are, the higher the $\SECO$ will be. Moving on to the normalised $\SECO$, we observe that the highest values are attained when the random variables within the marginal random vectors display asymmetric behaviour, meaning that one variable is comonotonic while the other is independent. Conversely, the lowest values are obtained when both variables share the same dependence structure.

\section{Incompleteness tail dependence structure estimation in high dimension}
\label{sec:incompleteness}

Throughout the section, assume that we have $\textbf{Z}_1, \dots, \textbf{Z}_n$ identically distributed observations of the $d$-dimensional random vector $\textbf{Z}$, which is in the max-domain of attraction of $H$, an EVD where each of the components of $\textbf{Z}$ are asymptotically independent. Let $R_{n,i}^{(j)}$ denote the rank of $Z_i^{(j)}$ among $Z_1^{(j)},\dots,Z_n^{(j)}$, $i = 1,\dots,n$, $j=1,\dots,d$. For $k \in \{1,\dots,n\}$, define a nonparametric estimator of $\theta$, the extremal coefficient by

\begin{equation*}
    \hat{\theta}_{n,k}^{\textrm{EKS}} := \frac{1}{k} \sum_{i=1}^n \mathds{1}_{ \{ R_{n,i}^{(1)} > n + 0.5 - k \textrm{ or } \dots \textrm{ or } R_{n,i}^{(d)} > n + 0.5 - k\} },
\end{equation*}
see \cite{10.1214/12-AOS1023}.

\begin{lemma}
    \label{lem:inequality_EKS}
    For $k \in \{1,\dots,n\}$ we have
    \begin{equation*}
        \hat{\theta}_{n,k}^{\textrm{\normalfont{EKS}}} \leq \frac{n}{k}
    \end{equation*}
\end{lemma}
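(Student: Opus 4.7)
The plan is to observe that the estimator, despite being designed to take values up to $d$ in theory, is structurally capped by the ratio of the sample size to the number of considered extremes. The bound therefore does not require any probabilistic argument; it follows from a deterministic counting observation about the sum of indicators.

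Concretely, the first step is to note that each summand $\mathds{1}_{\{R_{n,i}^{(1)} > n+0.5-k \,\text{or}\,\dots\,\text{or}\, R_{n,i}^{(d)} > n+0.5-k\}}$ is, by definition, an indicator function taking values in $\{0,1\}$. Next, I would upper-bound the entire sum trivially:
\begin{equation*}
    \sum_{i=1}^n \mathds{1}_{\{R_{n,i}^{(1)} > n+0.5-k \,\text{or}\,\dots\,\text{or}\, R_{n,i}^{(d)} > n+0.5-k\}} \leq \sum_{i=1}^n 1 = n.
\end{equation*}
Dividing by $k$ yields $\hat{\theta}_{n,k}^{\textrm{EKS}} \leq n/k$, which is exactly the claimed inequality.

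There is essentially no obstacle in this proof: the inequality is a consequence of the normalization by $k$ rather than by $n$. The substance of the lemma, however, lies in its \emph{interpretation} rather than its derivation, since in the high-dimensional regime $d > n/k$ the estimator cannot possibly attain the theoretical value $\theta = d$ that characterises full asymptotic independence of the $d$ components, which is precisely the incompleteness phenomenon the section is meant to highlight. A companion remark one could add is the finer bound $\hat{\theta}_{n,k}^{\textrm{EKS}} \leq \min(d, n/k)$, obtained by noting that for each fixed $j$ exactly $k$ ranks $R_{n,i}^{(j)}$ exceed $n-k+0.5$, so a union bound over $j$ gives $\sum_{i=1}^n \mathds{1}_{\{\cdots\}} \leq dk$; but only the $n/k$ branch is needed for the stated lemma.
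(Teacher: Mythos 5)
Your proof is correct and coincides with the paper's own argument: the sum of indicators is trivially bounded by $n$, and dividing by $k$ gives the stated bound. The additional remark about the finer bound $\min(d, n/k)$ is a valid and sensible observation, but the core derivation is identical to the one in the paper.
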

\begin{proof}
    The upper bound is trivial since
    \begin{equation*}
        \sum_{i=1}^n \mathds{1}_{ \{ R_{n,i}^{(1)} > n + 0.5 - k \textrm{ or } \dots \textrm{ or } R_{n,i}^{(d)} > n + 0.5 - k\} } \leq n.
    \end{equation*}
\end{proof}

Now, let us divide the sample of size $n$ of $\textbf{Z}$ into $k$ blocks of length $m$, so that $k = n / m$ (where we suppose, without loss of generality that m divide n). For the ith block, the maximum value in the $j$-component is denoted by
\begin{equation*}
    M_{m,i}^{(j)} = \max \{ Z_t^{(j)} : t \in (im -m, im] \}.
\end{equation*}
Let $R_{n,m,i}^{(j)}$ denote the rank of $M_{m,i}^{(j)}$ among  $M_{m,1}^{(j)}, \dots, M_{m,j}^{(j)}$, $i = 1,\dots,n$, $j = 1,\dots,d$. Define a non parametric estimator of the multivariate madogram 
\begin{equation*}
    \hat{\nu}_{n,m} := \frac{1}{k} \sum_{i=1}^k \left[ \bigvee_{j=1}^d \frac{R_{n,m,i}^{(j)}}{k+1} - \frac{1}{d} \sum_{j=1}^d \frac{R_{n,m,i}^{(j)}}{(k+1)} \right]
\end{equation*}
\begin{lemma}
    \label{lem:inequality_mado}
    For $m \in \{1,\dots,n\}$, we have
    \begin{equation*}
        \hat{\nu}_{n,m} \leq \frac{k}{k+1} - \frac{1}{2}.
    \end{equation*}
    Consequently,
    \begin{equation*}
        \hat{\theta}_{n,m}^{\textrm{\normalfont{MAD}}} \leq \frac{n}{m}.
    \end{equation*}
\end{lemma}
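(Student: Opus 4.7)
The plan is to prove the bound on $\hat{\nu}_{n,m}$ by elementary considerations about ranks, and then transfer it to $\hat{\theta}_{n,m}^{\textrm{MAD}}$ through the explicit inversion of the madogram--extremal-coefficient relationship for max-stable vectors.

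First I would bound each term inside the sum defining $\hat{\nu}_{n,m}$. Since the ranks $R_{n,m,i}^{(j)}$ take values in $\{1,\dots,k\}$, we immediately have
\begin{equation*}
\bigvee_{j=1}^d \frac{R_{n,m,i}^{(j)}}{k+1} \leq \frac{k}{k+1},
\end{equation*}
so averaging over $i=1,\dots,k$ yields $\frac{1}{k}\sum_{i=1}^k \bigvee_{j=1}^d \frac{R_{n,m,i}^{(j)}}{k+1} \leq \frac{k}{k+1}$. Next, for each fixed $j$ the block-maxima ranks $R_{n,m,1}^{(j)},\dots,R_{n,m,k}^{(j)}$ form a permutation of $\{1,\dots,k\}$, hence $\sum_{i=1}^k R_{n,m,i}^{(j)} = k(k+1)/2$ and consequently
\begin{equation*}
\frac{1}{k}\sum_{i=1}^k \frac{1}{d}\sum_{j=1}^d \frac{R_{n,m,i}^{(j)}}{k+1} = \frac{1}{d}\sum_{j=1}^d \frac{1}{k(k+1)} \cdot \frac{k(k+1)}{2} = \frac{1}{2}.
\end{equation*}
Subtracting gives the announced inequality $\hat{\nu}_{n,m} \leq k/(k+1) - 1/2 = (k-1)/(2(k+1))$.

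For the consequence about $\hat{\theta}_{n,m}^{\textrm{MAD}}$, I would invoke the standard relation for max-stable vectors, namely $\theta = (1+2\nu)/(1-2\nu)$, which inverts to give the madogram-based estimator $\hat{\theta}_{n,m}^{\textrm{MAD}} = (1+2\hat{\nu}_{n,m})/(1-2\hat{\nu}_{n,m})$. Since this map is strictly increasing on $[0,1/2)$ and the bound from the first part keeps $\hat{\nu}_{n,m}$ in this range, substituting $\hat{\nu}_{n,m} \leq (k-1)/(2(k+1))$ yields
\begin{equation*}
\hat{\theta}_{n,m}^{\textrm{MAD}} \leq \frac{1 + (k-1)/(k+1)}{1 - (k-1)/(k+1)} = \frac{2k/(k+1)}{2/(k+1)} = k = \frac{n}{m}.
\end{equation*}

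The argument is essentially bookkeeping, so I do not expect a serious obstacle. The only subtle point is that the consequence depends on the convention used to define $\hat{\theta}_{n,m}^{\textrm{MAD}}$ from $\hat{\nu}_{n,m}$; I would need to state this convention explicitly (the $(1+2\nu)/(1-2\nu)$ inversion that makes $\hat{\theta}_{n,m}^{\textrm{MAD}}$ consistent for $\theta$ under the extreme-value assumption) and verify that the bound on $\hat{\nu}_{n,m}$ keeps the denominator positive, so that the monotone transfer of the inequality is legitimate.
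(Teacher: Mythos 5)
Your proposal is correct and follows essentially the same route as the paper: bound the max term by $k/(k+1)$, use the permutation property of the ranks to evaluate the average term as exactly $1/2$, and push the resulting bound on $\hat{\nu}_{n,m}$ through the increasing map $x \mapsto (0.5+x)/(0.5-x)$ (identical to your $(1+2\nu)/(1-2\nu)$ inversion) to get $\hat{\theta}_{n,m}^{\textrm{MAD}} \leq k = n/m$. Your explicit statement of the madogram-to-extremal-coefficient convention and the check that the denominator stays positive is a small but welcome clarification that the paper leaves implicit.
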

\begin{proof}
    One can easily deduce the following upper bound
    \begin{equation*}
        \bigvee_{j=1}^d \frac{R_{n,m,i}^{(j)}}{k+1} \leq \frac{k}{k+1}.
    \end{equation*}
    Thus we obtain that
    \begin{equation*}
        \hat{\nu}_{n,m} \leq \frac{k}{k+1} - \frac{1}{k} \frac{1}{d} \sum_{j=1}^d \sum_{i=1}^k \frac{R_{n,m,i}^{(j)}}{k+1}.
    \end{equation*}
    The right hand side of the equation is equal to
    \begin{equation*}
        \frac{1}{k} \frac{1}{d} \sum_{j=1}^d \sum_{i=1}^k \frac{R_{n,m,i}^{(j)}}{k+1} = \frac{1}{k} \frac{1}{d} \sum_{j=1}^d \frac{k(k+1)}{2(k+1)} = \frac{1}{2}.
    \end{equation*}
    Let us consider the following function
    \begin{align*}
        f \colon \left[0,\frac{k}{k+1}-\frac{1}{2}\right] & \rightarrow\ \mathbb{R} \\
        x&\mapsto \frac{0.5+x}{0.5-x}.
    \end{align*}
    Since it is an nondecreasing function, we must have
    \begin{equation*}
        \hat{\theta}_{n,m}^{\textrm{MAD}} \leq f\left(\frac{k}{k+1} - \frac{1}{2}\right) = \frac{n}{m}
    \end{equation*}
\end{proof}

The consequence of both lemmas is that the nonparametric and the madogram-based estimator of the extremal coefficient can only recover values in the following range:
\begin{equation*}
    1 \leq \hat{\theta}^{\textrm{EKS}}_{n,k} \leq \frac{n}{k}, \quad 1 \leq \hat{\theta}_{n,m}^{\textrm{MAD}} \leq \frac{n}{m}.
\end{equation*}
If we suppose that $d > n/k$ or $d > n/m$, then both estimators cannot expect to retrieve dependencies above the threholds stated by our two lemmas. In particular, in high dimension, i.e., when $d > n$, these estimators are unable to detect asymptotic independence. Indeed, in cases where the $d$ variables are asymptotically independent, extremes occur in one variable without influencing extremes in the others. Thus, when $d > n$, it is highly probable to observe a rank that is greater than $n-k+0.5$ for at least one variable and this occurs for every observation $i =1,\dots,n$. Since this happens for every observation $i =1,\dots,n$, the characteristic function is (with high probability) always equal to one, resulting in an overall extremal coefficient equal to $n/k$ when taking the sum. However, in high dimensions, this cannot be equal to $d$, the value taken in asymptotic independence. 

To illustrate these findings, we consider the following numerical setup. Consider as the sample size $n \in \{100,150, \dots, 1000\}$ and the high dimensional setting given by $d = n^{1.25}$ where we want to estimate the extremal coefficient of the random vector of $\textbf{Z}$ where its components are asymptotically independent. In this setup, we know that the theoretical value of $\theta$ is given by $d$. When studying the dependence structure of extreme events, we face the "curse of dimensionality" in two ways. Firstly, traditional estimators do not cover the full spectrum of possible values. Secondly, in order to expand the range of values, one may need to reduce the number of extremes considered, denoted by $k$, or decrease the size of block maxima, denoted by $m$. However, this may lead to an increase in variance or a decrease in bias, depending on the estimator used.

\begin{figure}[!h]

\begin{minipage}{.5\linewidth}
\centering
\subfloat[]{\label{subfig:eks}\includegraphics[scale=.5]{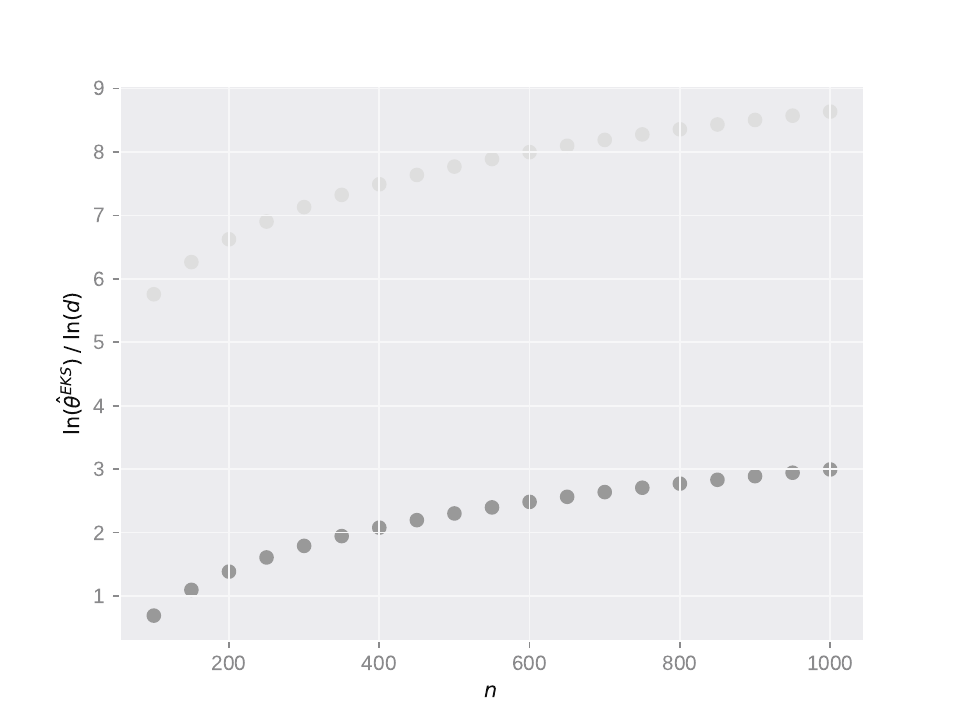}}
\end{minipage}%
\begin{minipage}{.5\linewidth}
\centering
\subfloat[]{\label{subfig:mado}\includegraphics[scale=.5]{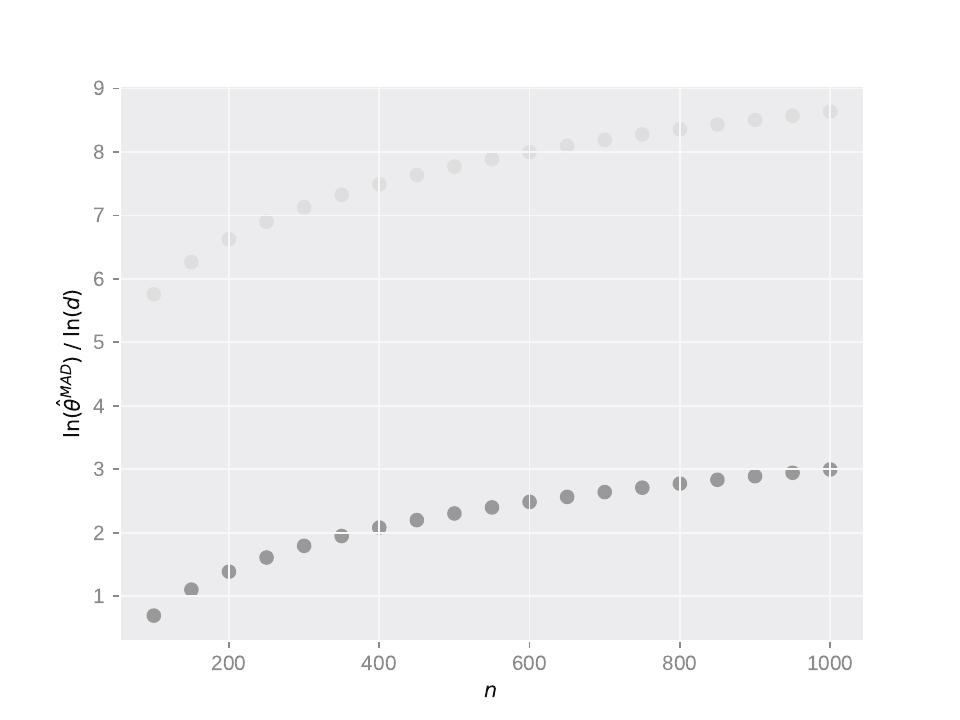}}
\end{minipage}

\caption{Estimator $\hat{\theta}_{n,k}^{\textrm{EKS}}$ (Panel \ref{subfig:eks}, black dots) and $\hat{\theta}_{n,m}^{\textrm{MAD}}$ (Panel \ref{subfig:mado}, black dots) according to different values of $n \in \{100,150,\dots,1000\}$ and $d = n^{1.25}$. The theoretical value of the extremal coefficient is depicted in grey dots. For both estimators, we took $k = m = 50$.}
\label{fig:app_high_dim}
\end{figure}

In Fig. \ref{fig:app_high_dim}, the bias of the estimator is clearly depicted. For each value of $n$, both estimators reach their upper bounds, that is $\hat{\theta}_{n,k}^{\textrm{EKS}} = n / k$ and $\hat{\theta}_{n,m}^{\textrm{MAD}} = n / m$.

\section{Consistent estimation of $\SECO$}
\label{proof:prop_consistency}

Consider a $q$-dimensional random vector $\textbf{Z} = (Z^{(1,1)},\dots,Z^{(d,p_d)})$, where $p_1+\dots+p_d = q$. This vector has a joint cumulative distribution function (c.d.f.) denoted as $F$, and each of its components has continuous marginal c.d.f.s $F^{(1,1)},\dots,F^{(d,p_d)}$. The copula, denoted as $C$, associated with $F$ (or equivalently, with $\textbf{Z}$) is defined as the c.d.f. of another random vector $\textbf{U} = (U^{(1,1)},\dots,U^{(d,p_d)})$. These random variables $\textbf{U}$ are obtained through the marginal application of the probability integral transform, meaning that $U^{(j,\ell)} = F^{(j,\ell)}(Z^{(j,\ell)})$ for $j = 1,\dots, d$ and $\ell = p_1,\dots,p_d$. Notably, the marginal c.d.f.s of the copula $C$ are uniformly distributed on the interval $[0,1]$. According to Sklar's theorem, the copula $C$ is a unique function that satisfies the following relationship for all $\textbf{x} = (x^{(1,1)},\dots,x^{(d,p_d)}) \in \mathbb{R}^q$:
\begin{equation*}
    F(x^{(1,1)},\dots,x^{(d,p_d)}) = C\left\{ F^{(1,1)}(x^{(1,1)}),\dots,F^{(d,p_d)}(x^{(d,p_d)}) \right\}.
\end{equation*}
In simpler terms, this equation describes how the joint distribution $F$ can be represented in terms of the copula $C$ and the marginal distributions $F^{(j,\ell)}$.

Now, let us consider a sequence of observed data, $\textbf{Z}_i$ for $i = 1,\dots,n$, which represents a stationary time series. Importantly, each $\textbf{Z}_i$ follows the same distribution as $\textbf{Z}$. Set $\textbf{U}_i = (U_i^{(1,1)},\dots,U_i^{(d,p_d)}) \sim C$ with $U_i^{(j,\ell)} = F^{(j,\ell)}(Z_i^{(j,\ell)})$. Define
\begin{equation*}
    \alpha_n^{(j)}(\textbf{u}) = \sqrt{n} \left( G_n^{(j)}(u) - u \right), \quad G_n^{(j)}(u) = n^{-1} \sum_{i=1}^n \mathds{1}_{ \{ U_i^{(j)} \leq u\} },
\end{equation*}
denote the (unobservable) empirical processes based on $\textbf{U}_1,\dots, \textbf{U}_n$.

For any sequence $(\textbf{Z}_n, n \in \mathbb{N})$, let

\begin{equation*}
    \mathcal{F}_k = \sigma(\textbf{Z}_n, n \leq k), \quad \textrm{ and } \mathcal{G}_k = \sigma(\textbf{Z}_n, n \geq k),
\end{equation*}
be the natural filtration and "reverse" filtration of the sequence $(\textbf{Z}_n, n \in \mathbb{N})$. Define

\begin{equation*}
    \beta(\mathcal{A}_1, \mathcal{A}_2) = \sup \, \frac{1}{2} \sum_{i,j \in I \times J} | \mathbb{P}(A_i \cap B_j) - \mathbb{P}(A_i) \mathbb{P}(B_j)|,
\end{equation*}

where the sup is taken over all finite $(A_i)_{i \in I}$ and $(B_j)_{j \in J}$ of $\Omega$ with the sets $A_i$ (resp. $B_j$) in the sigma field $\mathcal{A}_1$ (resp. $\mathcal{A}_2$). The $\beta$-mixing (or completely regular) coefficient is defined as

\begin{equation}
    \label{eq:beta_mixing}
    \beta(\ell) = \underset{n \in \mathbb{N}}{\sup} \, \beta(\mathcal{F}_n, \mathcal{G}_{n + \ell}).
\end{equation}

For the formulation of the consistency result for our estimator of the $\SECO$, we need a couple of conditions over the regularity of the sequence $(\textbf{Z}_n, n\in \mathbb{N})$ which are the following:

\begin{Assumption}{$\mathcal{A}$}
	    \label{ass:consistency}
	   	There exists an intermediary sequence $m = m_n$ such that $m_n = o (n)$ and $\beta(m_n) \rightarrow 0$ as $n \rightarrow \infty$, where $\beta$ is defined in \eqref{eq:beta_mixing}.
\end{Assumption}
\begin{Assumption}{$\mathcal{B}$}
	    \label{ass:regularity}
	   	There exists some $\theta_1 \in (0,1/2]$ such that, for all $\mu \in (0,\theta_1]$ and all sequences $\delta_n \rightarrow 0$, we have
     \begin{equation*}
         M_n(\delta_n, \mu) := \underset{|u-v| \leq \delta_n}{\sup} \, \frac{|\alpha_n^{(j)}(u) - \alpha_n^{(j)}(v)|}{\max\{|u-v|^\mu, n^{-\mu} \} } = o_{\mathbb{P}}(1), \quad j = 1,\dots,d
     \end{equation*}
\end{Assumption}
Condition \ref{ass:regularity} can, for instance, be verified in the i.i.d. case with $\theta_1 =1/2$ or for $\beta$-mixing sequence with $\beta(n) = o(a^n)$ as $n \rightarrow \infty$ for some $a\in (0,1)$, see Proposition 4.4 of \cite{berghaus2017weak}.
Here we state the proposition that the actual appendix is devoted to prove.

\begin{proposition}
    \label{prop:consistency}
    Let $k = k_n$ be an intermediary sequence. Provided that $k \rightarrow \infty$, $k /n \rightarrow 0$ as $n \rightarrow \infty$, under the regularity condition over the sequence $(\textbf{Z}_n, n \in \mathbb{N})$ stated by Condition \ref{ass:consistency} and Condition \ref{ass:regularity} and $\textbf{Z}$ is in the max-domain of attraction of $H$, then
    \begin{equation*}
        \widehat{\SECO}(\textbf{Z}^{(a)}, \textbf{Z}^{(b)}) \overunderset{\mathbb{P}}{ n \rightarrow \infty }{\longrightarrow} \SECO(\textbf{Z}^{(a)}, \textbf{Z}^{(b)}).
    \end{equation*}
\end{proposition}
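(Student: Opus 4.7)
The plan is to prove in-probability convergence of the three rank-based estimators $\hat\theta(a)$, $\hat\theta(b)$, and $\hat\theta(a,b)$ to their population counterparts separately, and then conclude via the continuous mapping theorem applied to the linear functional $(\theta_1,\theta_2,\theta_3) \mapsto \theta_1 + \theta_2 - \theta_3$ that defines $\SECO$. All three estimators fit into a common template
\begin{equation*}
    \hat\theta(S) = \frac{1}{k}\sum_{i=1}^n \mathds{1}\left\{\max_{(j,\ell)\in S} R_i^{(j,\ell)} > n - k + 0.5\right\},
\end{equation*}
with target $L_S(\mathbf{1}) = \theta(S)$ via the characterisation \eqref{eq:stable_from_f} of the stable tail dependence function restricted to the index set $S$.

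The first step is a rank-to-uniform reduction. Introducing the unobservable uniforms $U_i^{(j,\ell)} = F^{(j,\ell)}(Z_i^{(j,\ell)})$ and the identity $R_i^{(j,\ell)}/n = G_n^{(j,\ell)}(U_i^{(j,\ell)})$, the rank threshold rewrites as $G_n^{(j,\ell)}(U_i^{(j,\ell)}) > 1 - (k-0.5)/n$. Condition \ref{ass:regularity}, applied with $\delta_n \to 0$ at the tail scale, provides a uniform modulus of continuity for the marginal empirical processes $\alpha_n^{(j,\ell)}$ that allows me to replace $G_n^{(j,\ell)}(u)$ by $u$ in a neighbourhood of $1 - k/n$ at the cost of a threshold perturbation of order $n^{-\mu}$. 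Since $\mu \in (0,1/2]$ whereas $k/n \to 0$ with $k \to \infty$, we have $n^{-\mu} = o(k/n)$, so the perturbation is negligible relative to the signal and the half-integer offset is absorbed into $\delta_n = O(1/n)$.

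The second step is to show that the resulting threshold-based estimator converges to $\theta(S)$ by a bias/variance decomposition. For the bias, the max-domain of attraction assumption together with \eqref{eq:stable_from_f} applied at the intermediate scale $t = k/n \to 0$ yields
\begin{equation*}
    \frac{n}{k}\,\mathbb{P}\left\{\max_{(j,\ell)\in S} U_1^{(j,\ell)} > 1 - \tfrac{k}{n}\right\} \longrightarrow L_S(\mathbf{1}) = \theta(S).
\end{equation*}
For the stochastic part, in the i.i.d. case a direct computation gives $\mathrm{Var}(\hat\theta(S)) = O(\theta(S)/k) \to 0$. Under Condition \ref{ass:consistency}, the same order is preserved by the standard big-block/small-block decomposition combined with a Berbee-type coupling of the $\beta$-mixing sequence to an i.i.d. sequence of blocks, the coupling error being governed by $\beta(m_n) \to 0$ for a well-chosen intermediate $m_n = o(n)$. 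Markov's inequality then upgrades the bias/variance control to convergence in probability, and continuous mapping gives the claim.

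The main obstacle is handling the interplay of the rank-based nature of the estimator with both the intermediate tail scale $k/n$ and the time-series dependence: the rank-to-uniform replacement must be uniform in $i$ across the tail region where the signal is only of size $O(k/n)$, and the variance bound must survive the blocking argument without accumulating spurious factors of $n/k$. Condition \ref{ass:regularity} is precisely tailored to the first point, while Condition \ref{ass:consistency} is what makes the blocking viable. Once these two technical inputs are in hand, the remainder of the argument is routine tail-empirical-process bookkeeping.
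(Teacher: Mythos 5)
Your overall architecture is the same as the paper's: split $\widehat{\SECO}$ into the three extremal-coefficient estimators and conclude termwise, pass from ranks to the unobservable uniforms, control the margin-estimation error with Condition \ref{ass:regularity}, control the serial dependence with a Berbee coupling under Condition \ref{ass:consistency}, and let the max-domain-of-attraction assumption handle the bias $\theta_{k/n}\to\theta$. The paper packages the first two ingredients slightly differently, as a Lipschitz bound on the empirical copula of the form $\frac{n}{k}\|C_n^{o}-C\|_\infty+\frac{n}{k}\sum_j\|u_n^{(j)}\|_\infty$, where $u_n^{(j)}$ is the uniform quantile-process deviation, with Condition \ref{ass:regularity} entering through the identity between the uniform quantile and empirical processes and a bound on the maximal jump of $\hat G_n^{(j)}$; but the ingredients and their roles coincide with yours.

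The one step that does not hold as written is your rank-to-uniform reduction. First, Condition \ref{ass:regularity} constrains the normalised process $\alpha_n^{(j)}=\sqrt n\,(G_n^{(j)}-\mathrm{id})$, so what it yields near $u=1-k/n$ (taking $v=1$, where $\alpha_n^{(j)}(1)=0$) is $|G_n^{(j)}(u)-u|=o_{\mathbb P}\bigl(n^{-1/2}\max\{(k/n)^\mu,n^{-\mu}\}\bigr)$, not a threshold perturbation "of order $n^{-\mu}$"; you have dropped the $n^{-1/2}$ normalisation. Second, and independently, the assertion $n^{-\mu}=o(k/n)$ is false under the stated hypotheses: it is equivalent to $k/n^{1-\mu}\to\infty$, which does not follow from $k\to\infty$ and $k/n\to0$ (take $k=\log n$ and $\mu=1/2$; then $n^{-1/2}$ dominates $\log n/n$). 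Since this comparison is your entire justification for discarding the rank effects, the argument as written fails at that point. The repair is to use the tail-local modulus correctly: the induced error on $\hat\theta(S)$ is of order $\frac{n}{k}\cdot n^{-1/2}(k/n)^\mu=n^{1/2-\mu}k^{\mu-1}$, which vanishes when $\mu$ can be taken equal to $1/2$ (the i.i.d. or geometrically $\beta$-mixing cases cited after Condition \ref{ass:regularity}) and otherwise imposes a lower growth condition on $k$. The remaining pieces of your proposal — the variance bound $O(\theta(S)/k)$ for the i.i.d. part, the blocking and coupling under $\beta(m_n)\to0$, and the bias term via \eqref{eq:stable_from_f} — are sound and consistent with the paper's treatment.
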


\begin{proof} Since convergence in probability remains stable through addition and subtraction, our task is to establish the consistency of each component: $\hat{\theta}(a)$, $\hat{\theta}(b)$, and $\hat{\theta}(a,b)$. We address this concern in Lemma \ref{lem:lemma_as} below, where, without loss of generality, we focus on demonstrating the consistency of $\hat{\theta}(a)$. Also, without loss of generality, we study the following estimator

\begin{equation}
    \label{eq:ext_coeff_uniform}
    \hat{\theta}(a) = \frac{1}{k} \sum_{i=1}^n \mathds{1}_{ \{ \hat{F}_n^{(a,1)}(Z_i^{(a,1)}) > 1-k/n \textrm{ or } \dots \textrm{ or } \hat{F}_n^{(a,p)}(Z_i^{(a,p)}) > 1-k/n \} },
\end{equation}
where $\hat{F}_n^{(a,\ell)}$ denotes the empirical distribution function of $Z_1^{(a,\ell)},\dots, Z_n^{(a,\ell)}$ for $\ell = 1,\dots,p_a$. This estimator mirrors the one presented in \eqref{eq:emp_eco}, with the exception that we employ uniform margins instead of ranks, and we omit the constant factor of $1/2$, which, crucially, does not alter the estimator's asymptotic behavior.

\begin{lemma}
    \label{lem:lemma_as}
    Under the conditions of Proposition \ref{prop:consistency}, we have
    \begin{equation*}
        \hat{\theta}(a) = \theta_q(a) + o_{\mathbb{P}}(1),
    \end{equation*}
    with
    \begin{equation*}
        \theta_t(a) = \mathbb{P}\left\{ F^{(a,1)}(Z^{(a,1)}) > 1-t \textrm{ or } \dots \textrm{ or } F^{(a,p)}(Z^{(a,p_a)}) > 1-t \right\} / t,
    \end{equation*}
    and $\underset{t \rightarrow 0}{\lim} \, \theta_t(a) = \theta(a)$.
\end{lemma}

\begin{proof}
    Without confusions, we set in this proof $p_a = p$ and that $\textbf{Z} = (Z^{(1)},\dots,Z^{(p)}) := (Z^{(a,1)},\dots,Z^{(a,p)})$ is a $p$-dimensional random vector. We begin by introducting some useful notations. In the same spirit, define the random variable $U_i^{(j)} = F^{(j)}(Z_i^{(j)})$ (here $Z_{i}^{(j)}$ denote the $j$th entry of the vector $\textbf{Z}_i$) and the vectors $\textbf{U}_i := (U_i^{(1)},\dots,U_i^{(p)})$ with stationary distribution $C$. Denote by $\hat{G}_n^{(j)}$ the empirical distribution of $U_1^{(j)},\dots,U_n^{(j)}$. Define the vector $\hat{G}_n^\leftarrow(\textbf{x}) = ((\hat{G}_n^{(1)})^\leftarrow(x^{(1)}), \dots, (\hat{G}_n^{(p)})^\leftarrow(x^{(p)}))$, the function
    \begin{equation*}
        C_n^{o}(\textbf{x}) = \frac{1}{n} \sum_{i=1}^n \mathds{1}_{\{ U_i^{(1)} \leq x^{(1)}, \dots, U_i^{(p)} \leq x^{(p)} \} }
    \end{equation*}
    and
    \begin{equation*}
        \hat{C}_n \left( 1-\frac{k \textbf{x}}{n} \right) := C_n^{o}\left( \hat{G}_n^\leftarrow \left(1-\frac{k \textbf{x}}{n}\right) \right)
    \end{equation*}
    Note that the estimator $\hat{\theta}$ depends only on the marginals ranks of $Z_i^{(j)}$ with $j =1,\dots,p$; thus we have almost surely
    \begin{equation*}
        |\hat{\theta}(a) - \theta_t(a) | = \left| \frac{n}{k} \hat{C}_n(1-k/n, \dots, 1-k/n) - t^{-1} C(1-t, \dots, 1-t) \right|
    \end{equation*}
    Standard arguments gives that under $k \rightarrow \infty$ and $k /n \rightarrow t \in (0,1)$ and $n \rightarrow \infty$, the right hand side of the latter equation is equal to
    \begin{equation*}
        \left| \frac{n}{k}\left(\hat{C}_n\left(1-\frac{k}{n}, \dots, 1-\frac{k}{n}\right) - C\left(1-\frac{k}{n}, \dots, 1-\frac{k}{n}\right) \right) \right| + o_{\mathbb{P}}(1).
    \end{equation*}
    Now, we can bound the first term by
    \begin{equation*}
        \frac{n}{k} \left| C_n^{o}(\hat{G}_n^\leftarrow(1-k\textbf{1}/n)) - C(\hat{G}_n^\leftarrow(1-k\textbf{1}/n)) \right| + \frac{n}{k} \left| C(\hat{G}_n^\leftarrow(1-k\textbf{1}/n)) - C(1-k\textbf{1}/n) \right|.
    \end{equation*}
    Using Lipschitz continuity of $C$, we obtain the following upper bound:
    \begin{equation}
        \label{eq:main_prop}
        \frac{n}{k} || C_n^{o} - C ||_\infty + \frac{n}{k} \sum_{j=1}^d || u_n^{(j)} ||_\infty
    \end{equation}
    where $|| \cdot ||_\infty$ is the uniform norm and $u_n^{(j)}(u) = (\hat{G}_n^{(j)})^\leftarrow(u) - u$, $u \in [0,1]$.

    By Berbee's coupling Lemma (\cite{AIHPB_1995__31_2_393_0, bucher2014extreme}), one can construct inductively a sequence $(\bar{\textbf{Z}}_{im+1},\dots, \bar{\textbf{Z}}_{im+m})_{i \geq 0}$ such that the following three properties hold:
	\begin{enumerate}[label=(\roman*)]
		\item $(\bar{\textbf{Z}}_{im+1},\dots, \bar{\textbf{Z}}_{im+m}) \overset{d}{=} (\textbf{Z}_{im+1},\dots, \textbf{Z}_{im+m})$ for any $i \geq 0$; \label{property:law}
		\item both $(\bar{\textbf{Z}}_{2im+1},\dots, \bar{\textbf{Z}}_{2im+m})_{i\geq 0}$ and $(\bar{\textbf{Z}}_{(2i+1)m+1},\dots, \bar{\textbf{Z}}_{(2i+1)m+m})_{i\geq 0}$ sequences are independent and identically distributed; \label{property:iid}
		\item $\mathbb{P}\{(\bar{\textbf{Z}}_{im+1},\dots, \bar{\textbf{Z}}_{im+m}) \neq (\textbf{Z}_{im+1}, \dots, \textbf{Z}_{im+m}) \} \leq \beta(m)$. \label{property:mixing}
	\end{enumerate}
    Let $\bar{C}_n^o$ be defined analogously to $C_n^o$ but with $\textbf{Z}_1,\dots,\textbf{Z}_n$ replaced by $\bar{\textbf{Z}}_1,\dots,\bar{\textbf{Z}}_n$. Now write
    \begin{equation}
    \label{eq:summand}
        C_n^o(\textbf{u}) - C(\textbf{u}) = \left\{ C_n^o(\textbf{u}) - \bar{C}_n^o(\textbf{u}) \right\} + o_{\mathbb{P}}(1).
    \end{equation}
    The term in brackets in the right hand side is $o_\mathbb{P}(1)$ uniformly in $\textbf{u}$, since
    \begin{equation*}
        |C_n^o(\textbf{u}) - \bar{C}_n^o(\textbf{u})| \leq \frac{1}{n} \sum_{i=1}^n \mathds{1}_{\{\bar{\textbf{Z}}_i \neq \textbf{Z}_i \} }.
    \end{equation*}
    Hence by Markov's inequality, for any $\epsilon > 0$
    \begin{equation*}
        \mathbb{P}\left\{ \underset{\textbf{u} \in [0,1]^q}{\sup} \left| C_n^o(\textbf{u}) - \bar{C}_n(\textbf{u}) \right| > \epsilon \right\} \leq \frac{\beta(m)}{\epsilon}.
    \end{equation*}
    By Condition \ref{ass:consistency}, we obtain that the first summand in brackets in \eqref{eq:summand} is $o_{\mathbb{P}}(1)$ as $n\rightarrow \infty$, uniformly in $\textbf{u} \in [0,1]^q$. We obtain that
    \begin{equation*}
        || C_n^{o} - C ||_\infty = o_\mathbb{P}(1).
    \end{equation*}
    We also have
    \begin{equation*}
        \sup_{u \in [0,1]} \, |u_n^{(j)}(u) | \leq \sup_{u \in [0,1]} \, |\alpha_n^{(j)}(u)| + \frac{n \sup_{u\in [0,1]} |\hat{G}_n^{(j)}(u) - \hat{G}_n^{(j)}(u-)|-1}{n}.
    \end{equation*}
    To understand this, we start by recognizing that the maximum of either $\alpha_n^{(j)}(\cdot)$ or $-\alpha_n^{(j)}(\cdot)$, and consequently, $|\alpha_n^{(j)}(\cdot)|$, must occur at one of the discontinuities in $\hat{G}_n^{(j)}.$ These discontinuities correspond to the values $\{ U_{i:n}^{(j)}, \; 1\leq i \leq n\}$, where $U_{1:n}^{(j)} \leq \dots \leq U_{n:n}^{(j)}$ represent the order statistics. Hence, the quantity
    \begin{equation}
        \label{eq:count}
        n \times (u_n^{(j)}(i/n) + \alpha_n^{(j)}(U_{i:n}^{(j)}))
    \end{equation}
    is equal to the highest count of $U_i^{(j)}$ that are equal to $U_{i:n}^{(j)}$ minus 1. Assuming there are no ties among $U_1^{(j)},\dots,U_n^{(j)}$ (which, for example, happens in the i.i.d. case), this expression equals 1. Consequently, we derive the classical identity for the uniform quantile process, as outlined in \cite[Section 1.4]{csorgHo1983quantile} or \cite[Chapter 3]{shorack2009empirical}.
    \begin{equation*}
        \sup_{0 \leq u \leq 1} \, |\alpha_n^{(j)}(u)| = \sup_{0 \leq u \leq 1} \, |u_n^{(j)}(u)|.
    \end{equation*}
    In the general case, Equation \eqref{eq:count} is limited above by the maximum count of $U_i^{(j)}$ which are equal minus 1. It is worth noting that this maximum count can be expressed as
    \begin{equation*}
        n \times \underset{u \in [0,1]}{\sup} \, |\hat{G}_n^{(j)}(u) - \hat{G}_n^{(j)}(u-)|.
    \end{equation*}
    
    We have, following the proof of lemma 4.6 in \cite{berghaus2017weak}
    \begin{align*}
        \sup_{u\in [0,1]} |\hat{G}_n^{(j)}(u) - \hat{G}_n^{(j)}(u-)| &\leq \sup_{u,v\in [0,1], |u-v| \leq 1/n} |\hat{G}_n^{(j)}(u) - \hat{G}_n^{(j)}(v)| \\
        &\leq \sup_{u,v\in [0,1], |u-v| \leq 1/n} |\hat{G}_n^{(j)}(u) - \hat{G}_n^{(j)}(v) - (u-v)| + \frac{1}{n} \\
        &\leq \frac{1}{\sqrt{n}} \sup_{u,v\in [0,1], |u-v| \leq 1/n} |\alpha_n^{(j)}(u) - \alpha_n^{(j)}(v)| + \frac{1}{n}.
    \end{align*}
    Using Condition \ref{ass:regularity}, the above term is $o_\mathbb{P}(n^{-1/2-\mu})$ for $\mu \in (0,\theta_1)$ and $\theta_1 \in [0,1/2]$. Additionally, using $||C_n^{o} - C||_\infty = o_\mathbb{P}(1)$, we obtain
    \begin{equation*}
        ||u_n^{(j)}||_\infty = o_{\mathbb{P}}(1).
    \end{equation*}
    Thus, $\forall t \in (0,1)$
    \begin{equation*}
        \hat{\theta}_n(a) = \theta_t(a) + o_{\mathbb{P}}(1).
    \end{equation*}
    Since $F$ is in the max-domain of attraction of $H$, we have
    \begin{equation*}
        \theta(a) = \underset{t \rightarrow 0}{\lim} \, \theta_t(a).
    \end{equation*}
    Hence the result of the lemma.
\end{proof}
Since convergence in probability is reliably preserved under continuous transformations, we attain the outcomes outlined in Proposition \ref{prop:consistency}.
\end{proof}

\section{Definition of the Adjusted Rand Index (ARI)}
\label{sec:ARI}
The ARI is computed as follows: Let $O = \{O_g\}_{g=1,\dots,G}$ and $S = \{S_h\}_{h=1,\dots,H}$ be two partitions with $d$ entities, and let $d_{gh}$ be the number of entities in cluster $O_g$ in partition $O$ and in cluster $S_h$ in partition $S$. Denote by $d_{g\boldsymbol{\cdot}}$ (resp. $d_{\boldsymbol{\cdot}h}$) the number of entities in cluster $O_g$ (resp. $S_h$) in partition $O$ (resp. $S$). The ARI is evaluated using the following expressions:
\begin{align*}
    &r_0 = \sum_{g=1}^G \sum_{h=1}^H \binom{d_{gh}}{2}, \quad r_1 = \sum_{g=1}^G \binom{d_{g\boldsymbol{\cdot}}}{2}, \quad r_2 = \sum_{g=1}^G \binom{d_{\boldsymbol{\cdot}h}}{2}, \quad r_3 = \frac{2 r_1 r_2}{d(d-1)},\\ &\textrm{ARI}(O,S) = \frac{r_0-r_3}{0.5(r_1+r_2)-r_3},
\end{align*}
where $\binom{n}{k}$ is the binomial coefficient.

\section{Supplementary Figures}
\label{sec:sup_fig}

\begin{figure}
    \begin{minipage}{.33\linewidth}
        \centering
        \subfloat[]{\includegraphics[scale=.30]{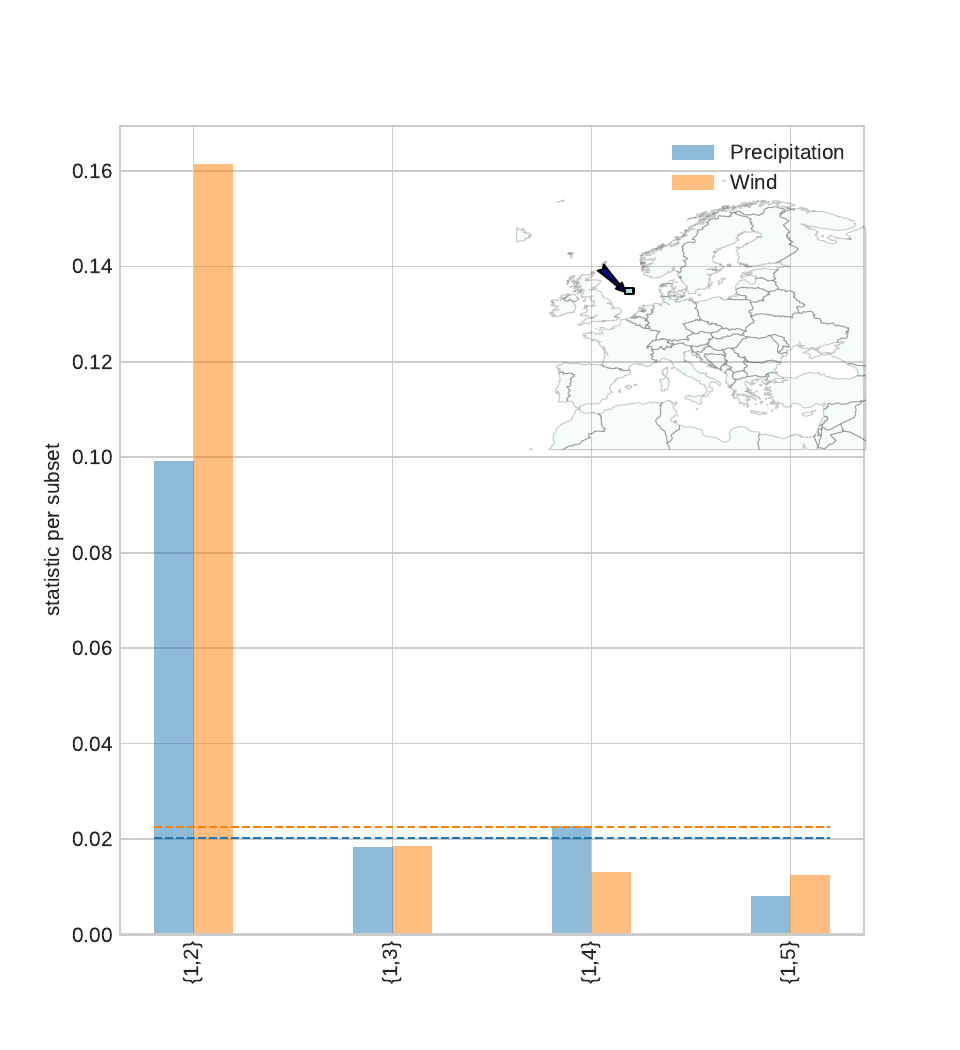}}
    \end{minipage}%
    \begin{minipage}{.33\linewidth}
        \centering
        \subfloat[]{\includegraphics[scale=.30]{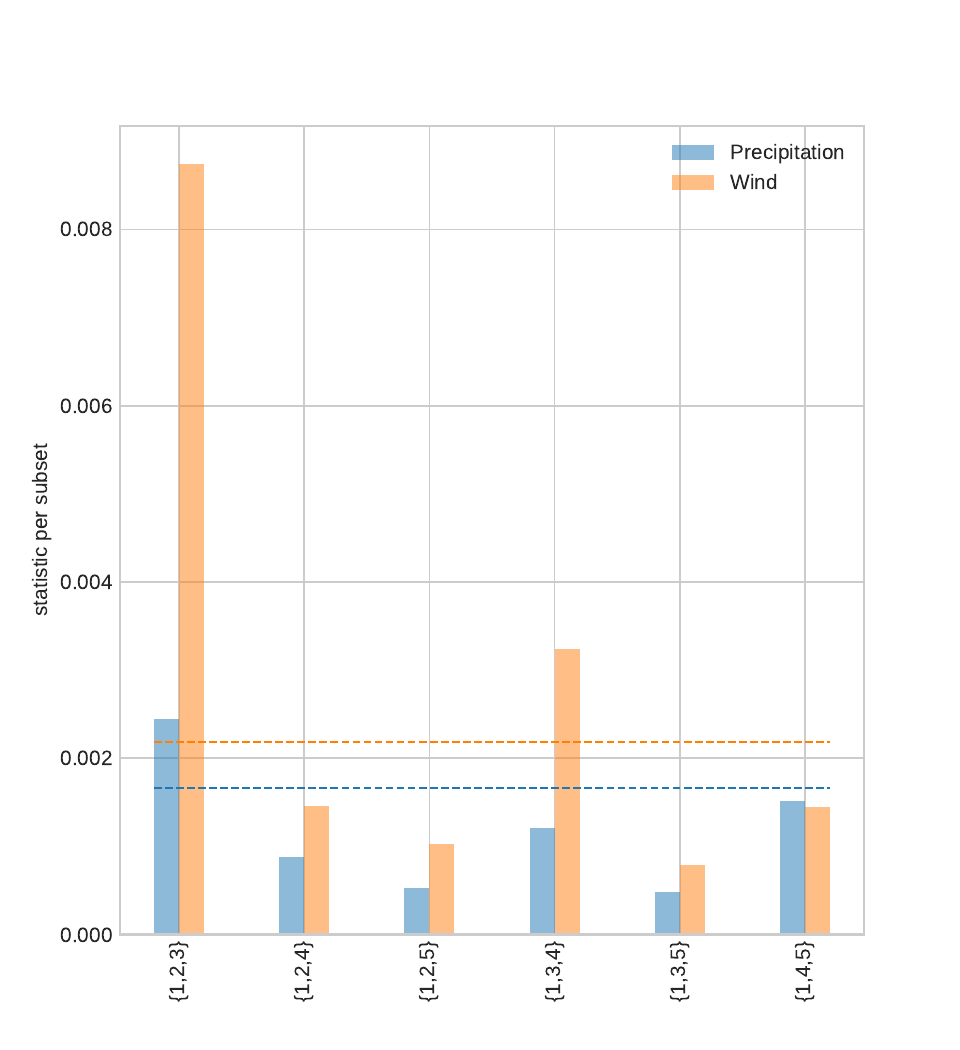}}
    \end{minipage}
    \begin{minipage}{.33\linewidth}
        \centering
        \subfloat[]{\includegraphics[scale=.30]{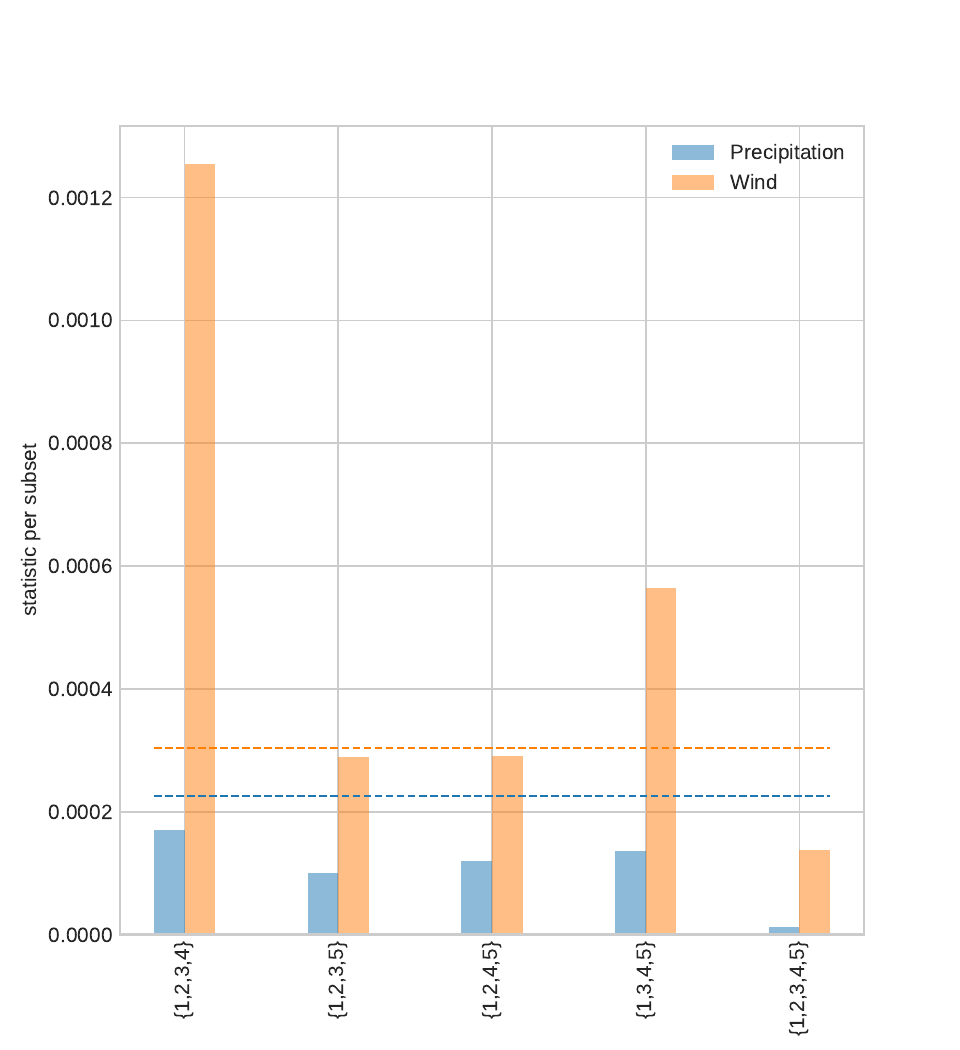}}
    \end{minipage}
    \begin{minipage}{.33\linewidth}
        \centering
        \subfloat[]{\includegraphics[scale=.30]{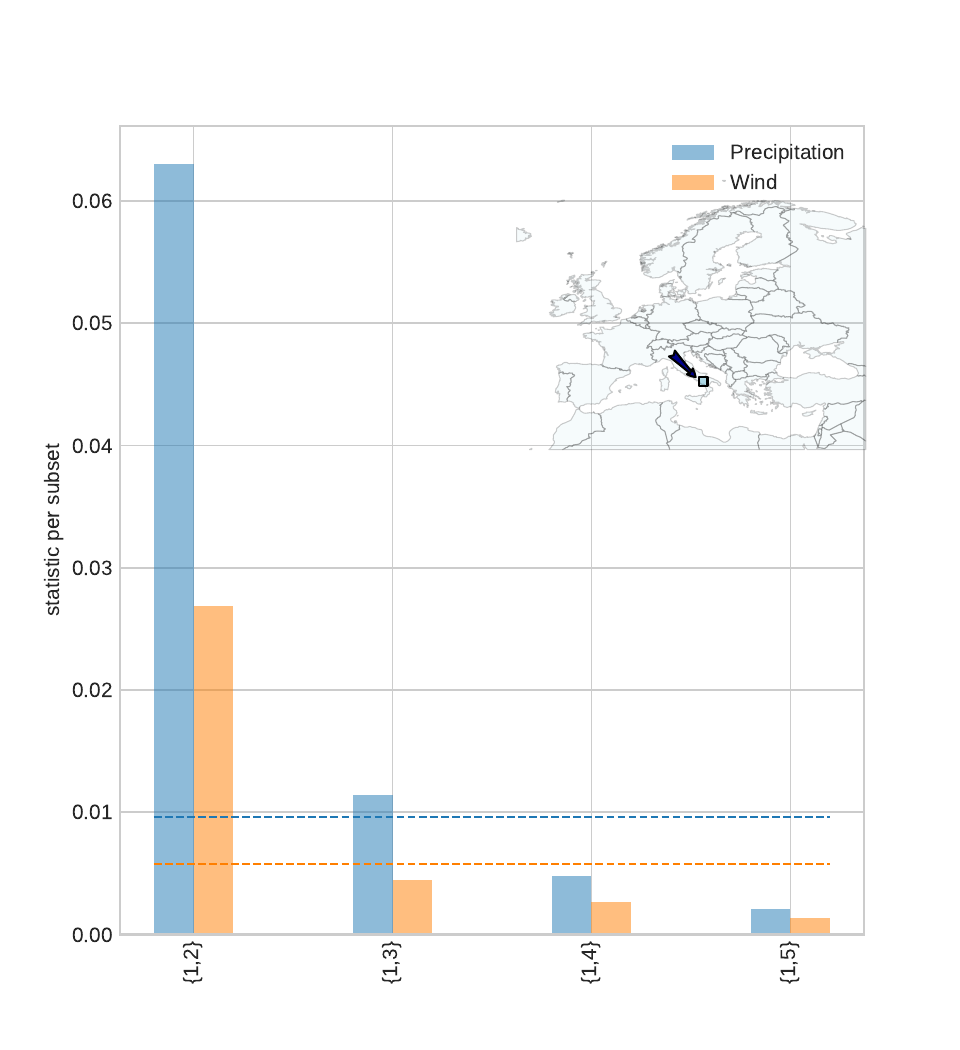}}
    \end{minipage}%
    \begin{minipage}{.33\linewidth}
        \centering
        \subfloat[]{\includegraphics[scale=.30]{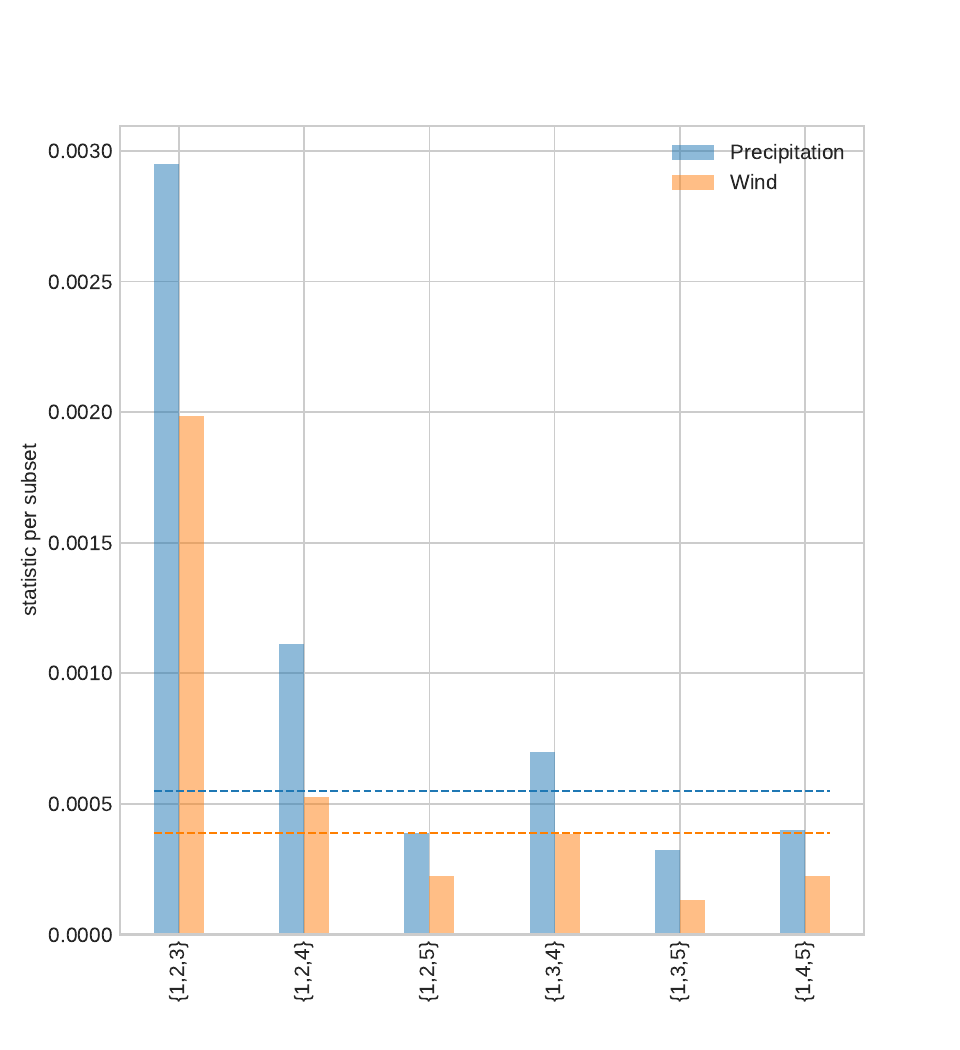}}
    \end{minipage}
    \begin{minipage}{.33\linewidth}
        \centering
        \subfloat[]{\includegraphics[scale=.30]{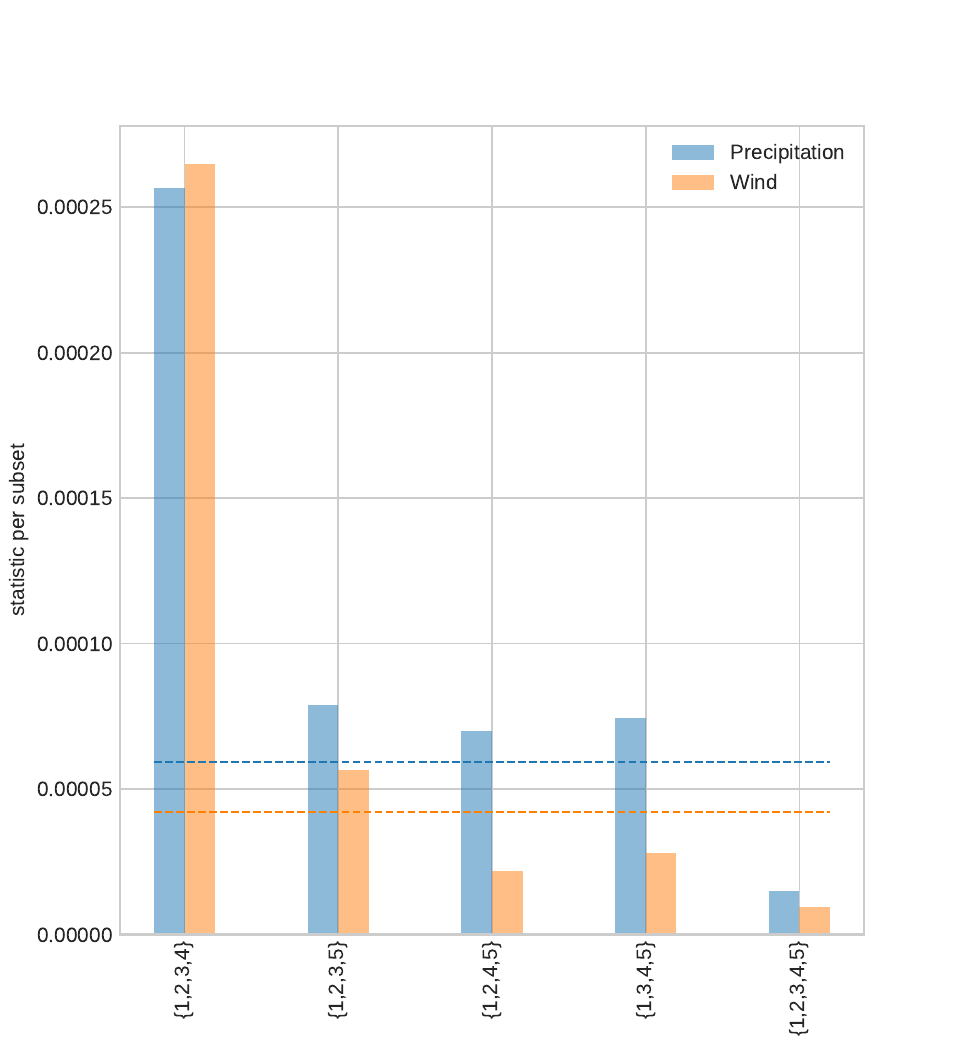}}
    \end{minipage}
    \begin{minipage}{.33\linewidth}
        \centering
        \subfloat[]{\includegraphics[scale=.30]{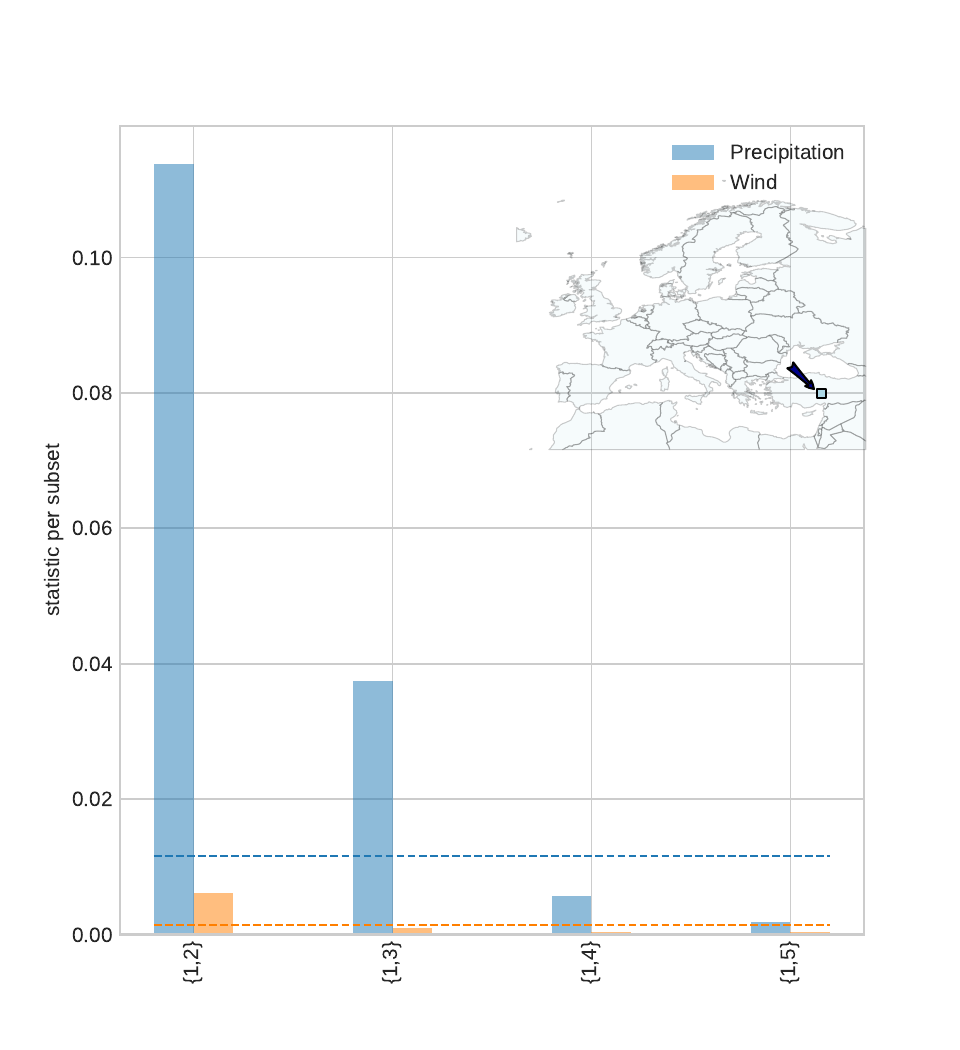}}
    \end{minipage}%
    \begin{minipage}{.33\linewidth}
        \centering
        \subfloat[]{\includegraphics[scale=.30]{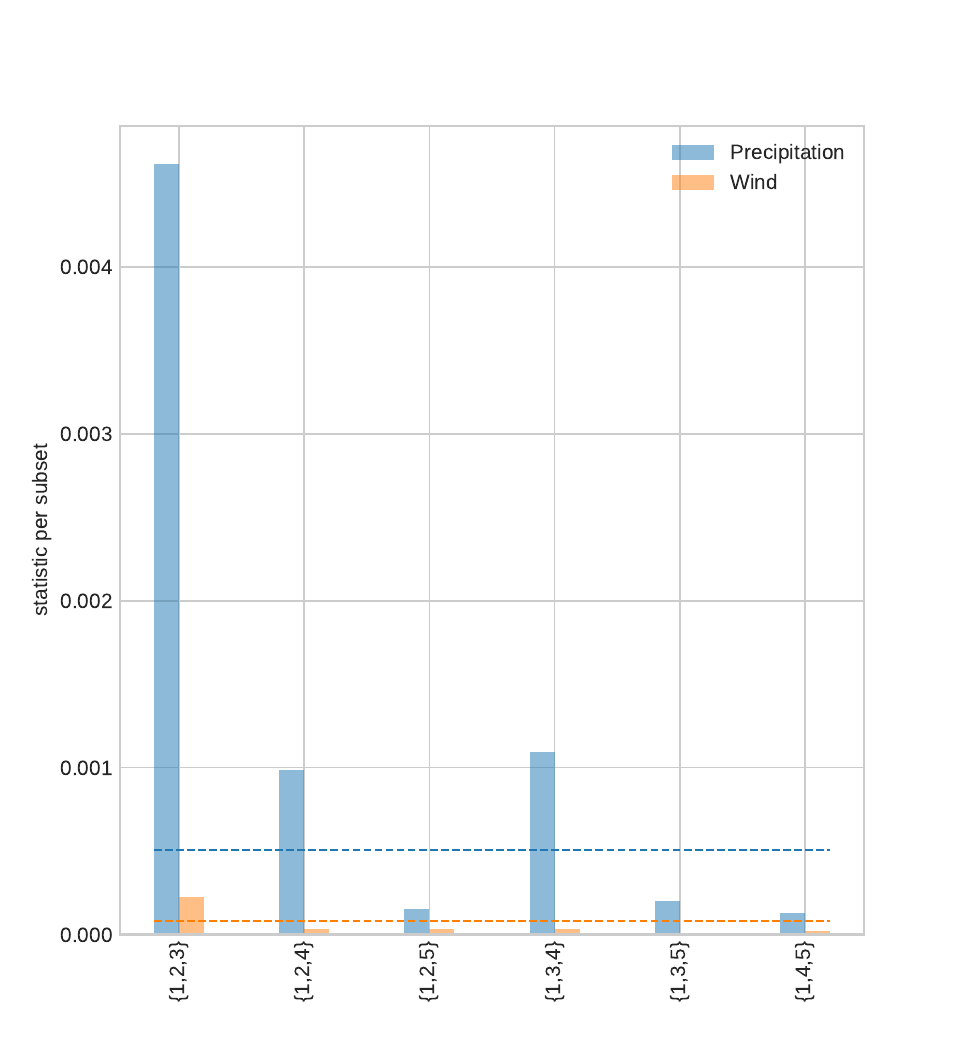}}
    \end{minipage}
    \begin{minipage}{.33\linewidth}
        \centering
        \subfloat[]{\includegraphics[scale=.35]{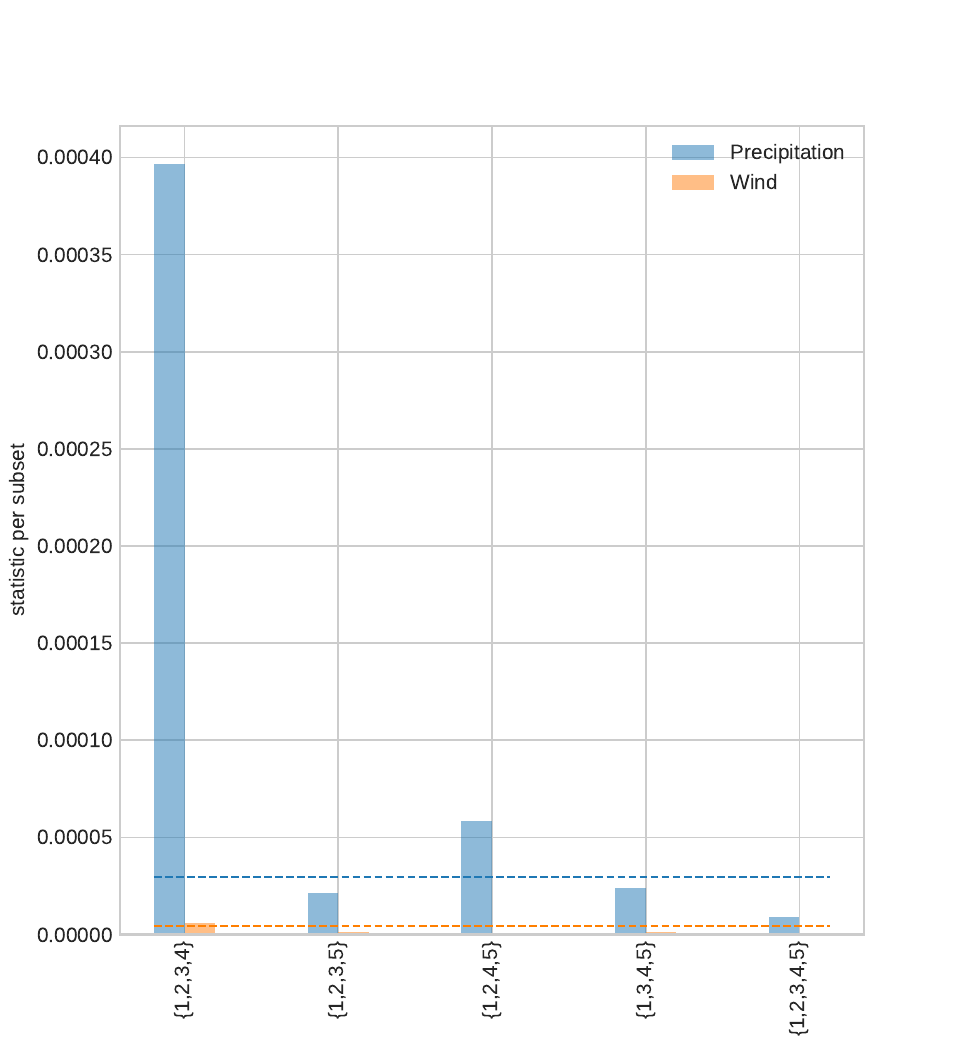}}
    \end{minipage}
    \caption{The dependogram provides a concise summary of randomness test results conducted on daily total precipitation and wind speed maxima in the ERA5 dataset. This study covers three European regions, examining 304 days of observations across nine distinct pixels. The first column displays a map with each row representing an area and the nine pixels marked by red squares. Test statistics are represented by bars, and critical values are depicted by dotted horizontal lines. The dependogram columns focus on pairwise, three-wise, and four-wise randomness tests with a lag of $4$.}
    \label{fig:dependogram}
\end{figure}

\begin{figure}[!htb]

\begin{minipage}{.5\linewidth}
\centering
\subfloat[]{\label{subfig:seco_tp}\includegraphics[scale=.4]{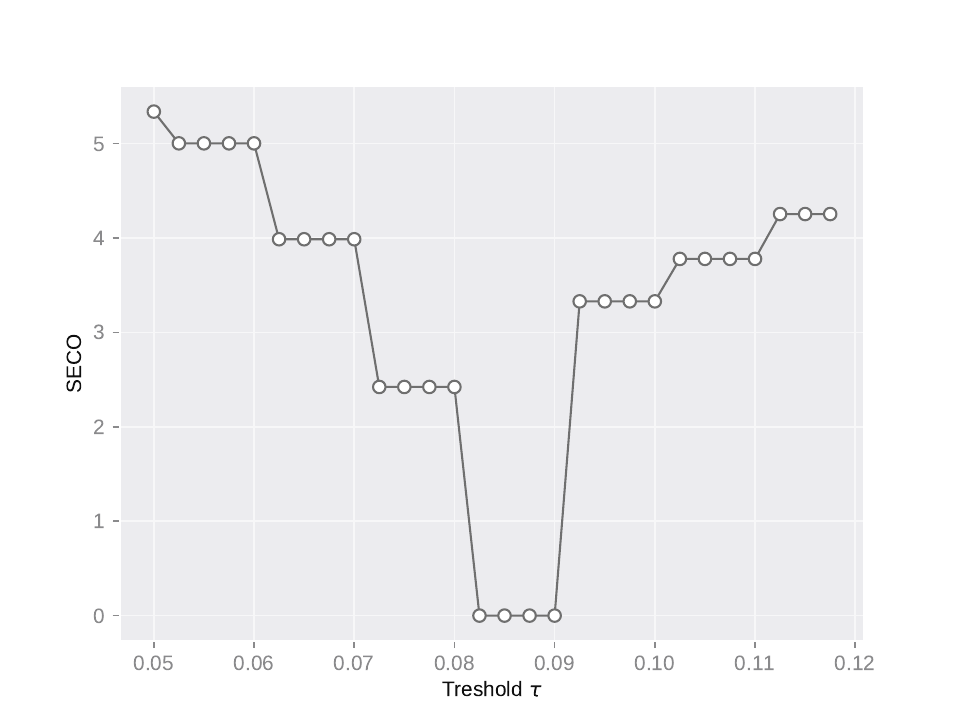}}
\end{minipage}%
\begin{minipage}{.5\linewidth}
\centering
\subfloat[]{\label{subfig:seco_wind}\includegraphics[scale=.4]{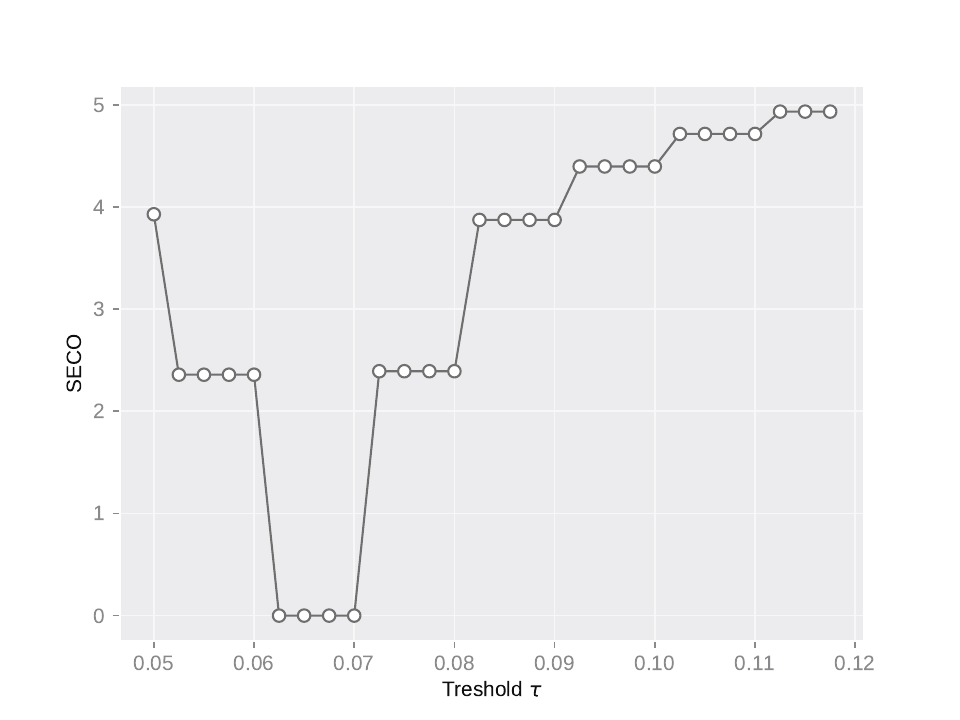}}
\end{minipage}
\begin{minipage}{.5\linewidth}
\centering
\subfloat[]{\label{subfig:matseco_tp}\includegraphics[scale=.45]{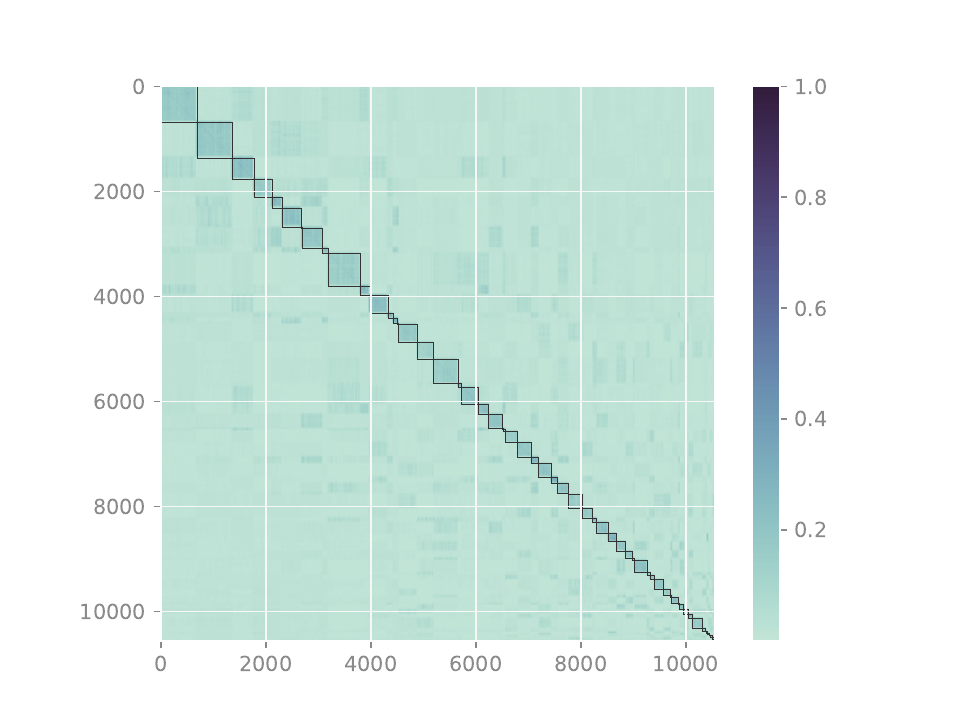}}
\end{minipage}%
\begin{minipage}{.5\linewidth}
\centering
\subfloat[]{\label{subfig:matseco_wind}\includegraphics[scale=.45]{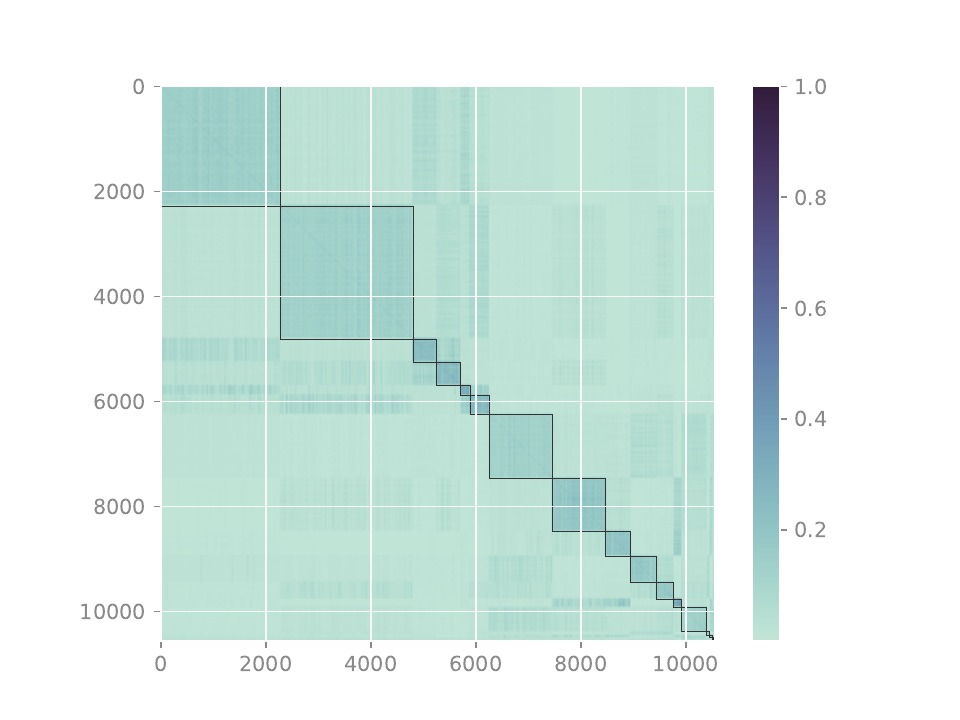}}
\end{minipage}
\begin{minipage}{.5\linewidth}
\centering
\subfloat[]{\label{subfig:clust_1_9_tp}\includegraphics[width=8cm, height=7cm]{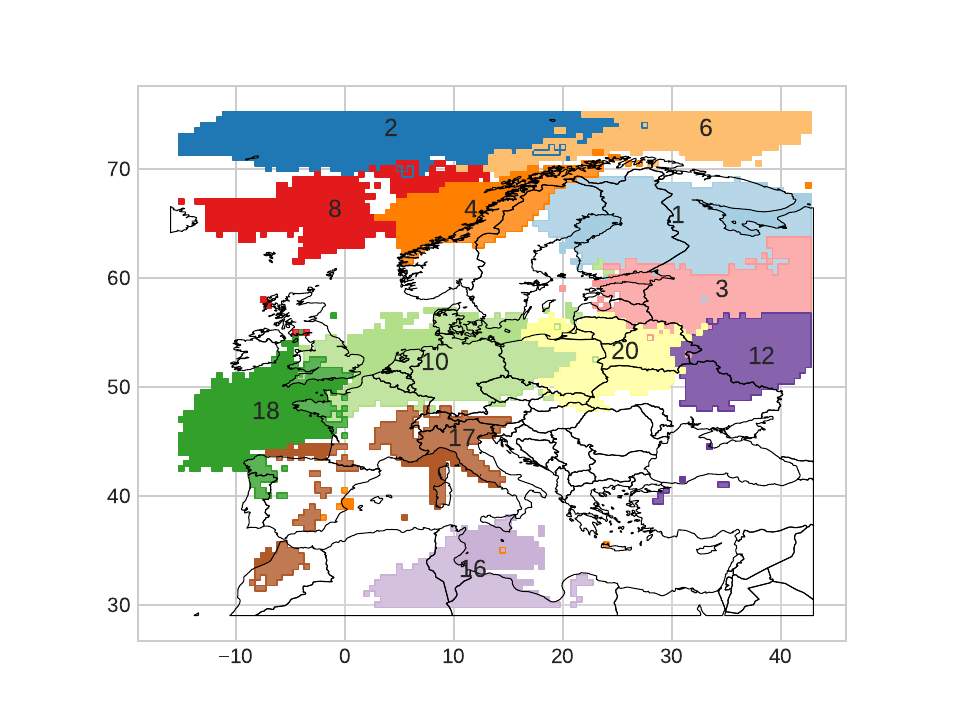}}
\end{minipage}%
\begin{minipage}{.5\linewidth}
\centering
\subfloat[]{\label{subfig:clust_1_9_wind}\includegraphics[width=8cm, height=7cm]{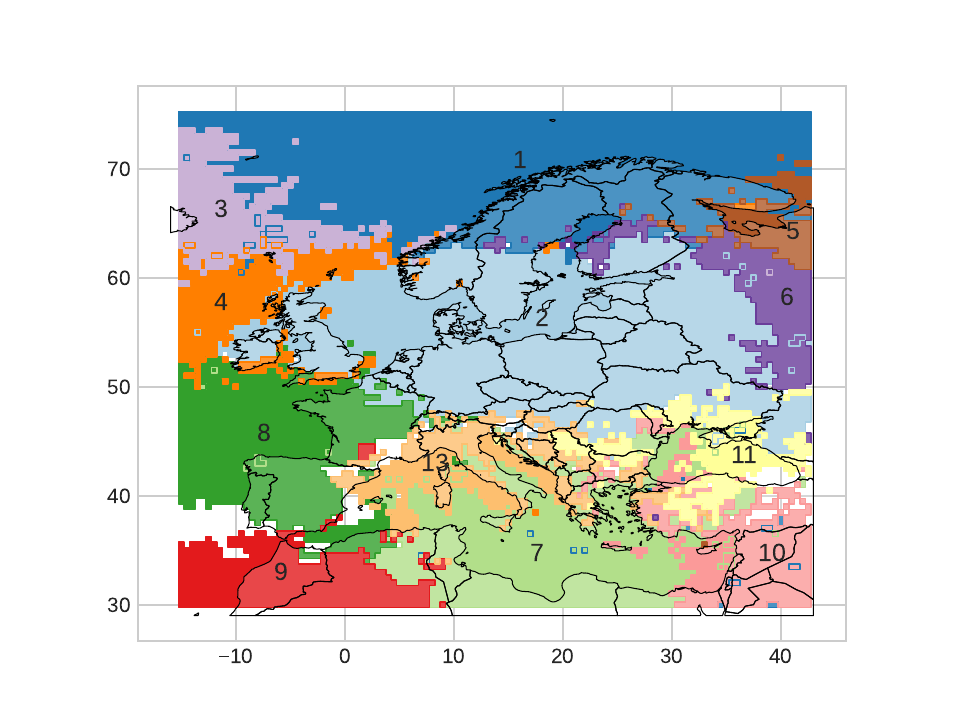}}
\end{minipage}

\caption{Value of the function $L$ for different values of $\tau \in \Delta = \{0.05,0.0525,\dots,0.12\}$ in Panels \ref{subfig:seco_tp} (Precipitation) and \ref{subfig:seco_wind} (Wind). Partitions of the extremal correlation similarity matrix with threshold $\tau = 0.09$ for Panel \ref{subfig:matseco_tp} (Precipitation) and $\tau = 0.07$ \ref{subfig:matseco_wind} (Wind). Squares represent the clusters of variables. Representation of the $12$ largest clusters (in decreasing order) of the partition of the extremal correlation matrix of total precipitation and wind speed maxima with threshold $\tau = 0.09$ and $\tau = 0.07$, respectively in Panels \ref{subfig:clust_1_9_tp} and \ref{subfig:clust_1_9_wind}. Number of each cluster is depicted at the top-left corner of the corresponding panel.}
\label{fig:result_clust_tp_wind}
\end{figure}

\begin{figure}[!h]
    \begin{minipage}{.5\linewidth}
    \centering
    \subfloat[]{\label{subfig:quanti_silhouette}\includegraphics[scale=.5]{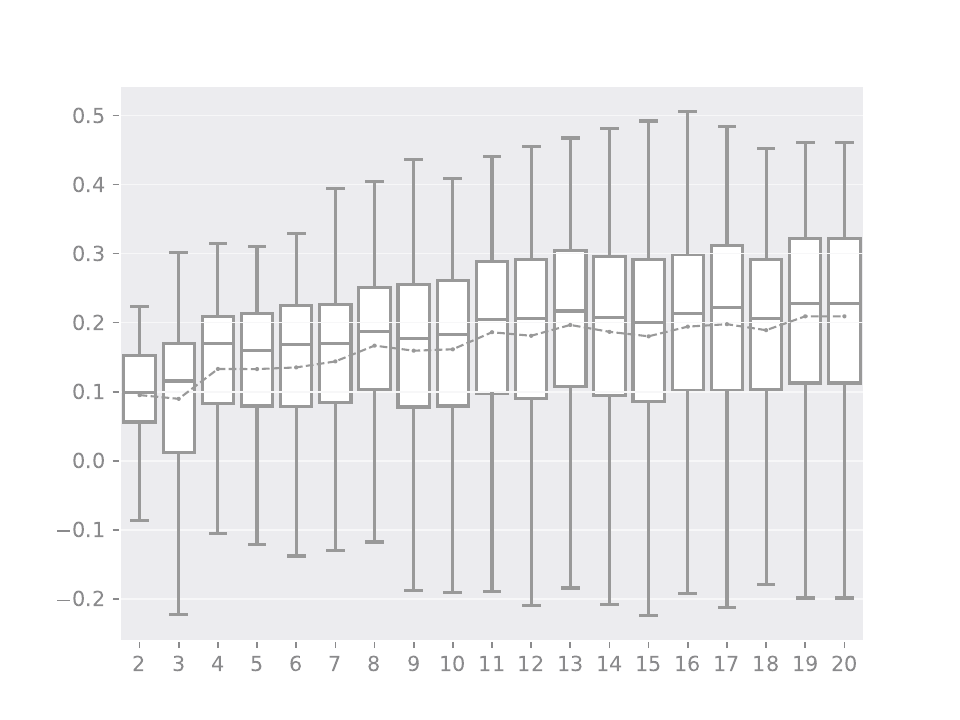}}
    \end{minipage}%
    \begin{minipage}{.5\linewidth}
    \centering
    \subfloat[]{\label{subfig:clust_compound_quanti}\includegraphics[scale=.5]{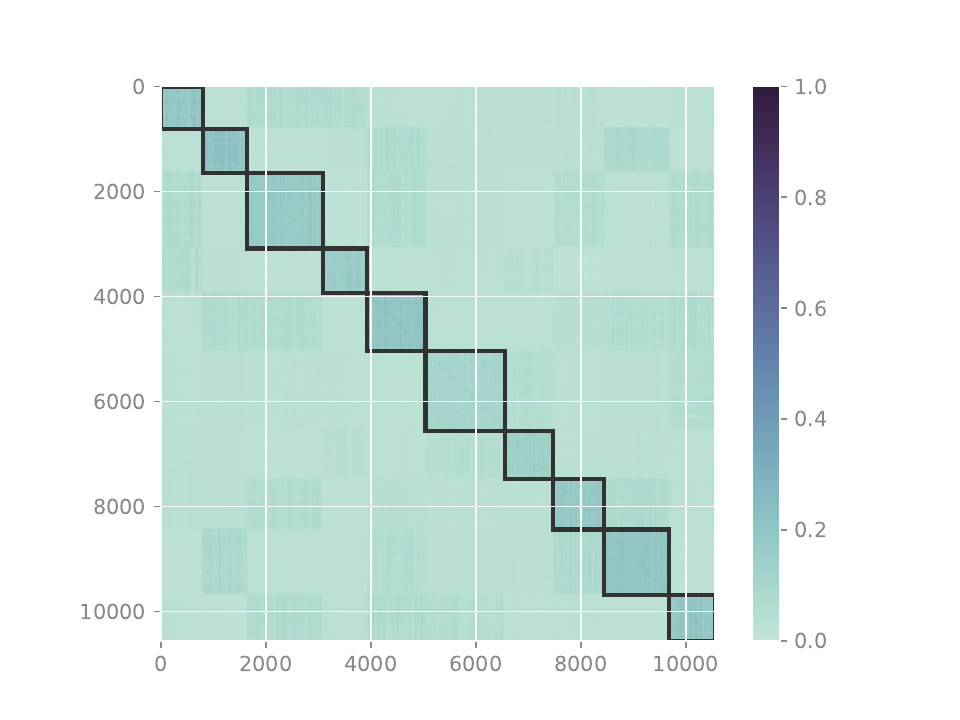}}
    \end{minipage}
    \caption{Boxplots of the silhouette coefficients for different values of K using the quantization-based algorithm. Thick lines indicate the median, boxes the interquartile range and whiskers the full range of the distribution. Partitions of the $\SECO$ similarity matrix with $K = 10$ for the quantization-based approach. Squares represent the clusters of variables.} %Number of each cluster is depicted at the top-left corner of the corresponding panel.}
    \label{fig:clust_quanti_matrix}
\end{figure}

\begin{figure}[!htb]

\begin{minipage}{.5\linewidth}
\centering
\subfloat[]{\label{subfig:silhouette}\includegraphics[scale=.5]{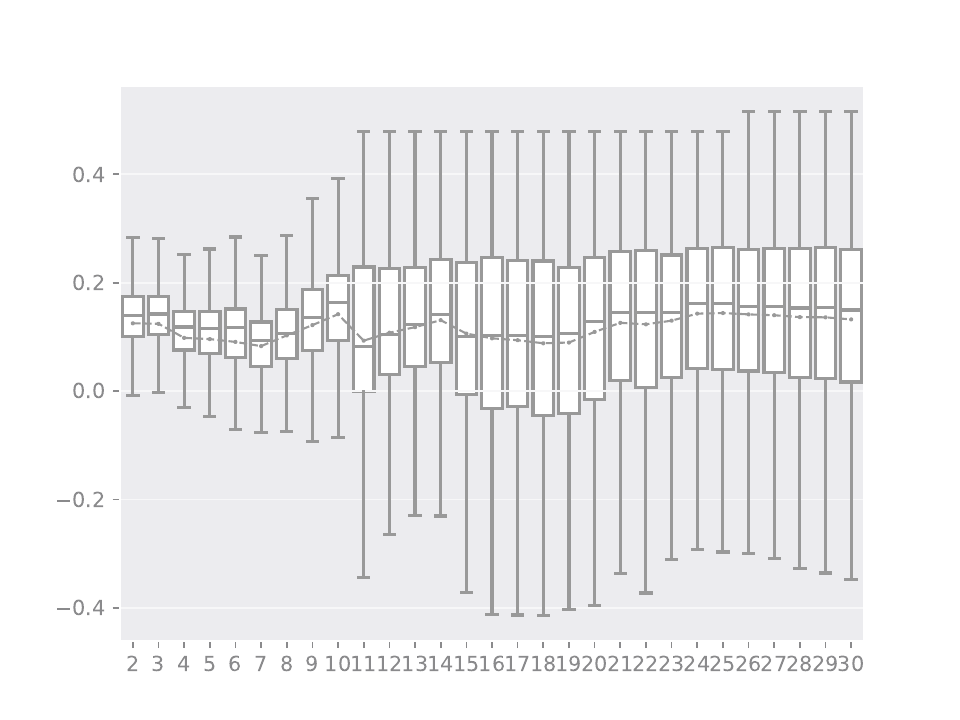}}
\end{minipage}%
\begin{minipage}{.5\linewidth}
\centering
\subfloat[]{\label{subfig:clust_compound_hc}\includegraphics[scale=.5]{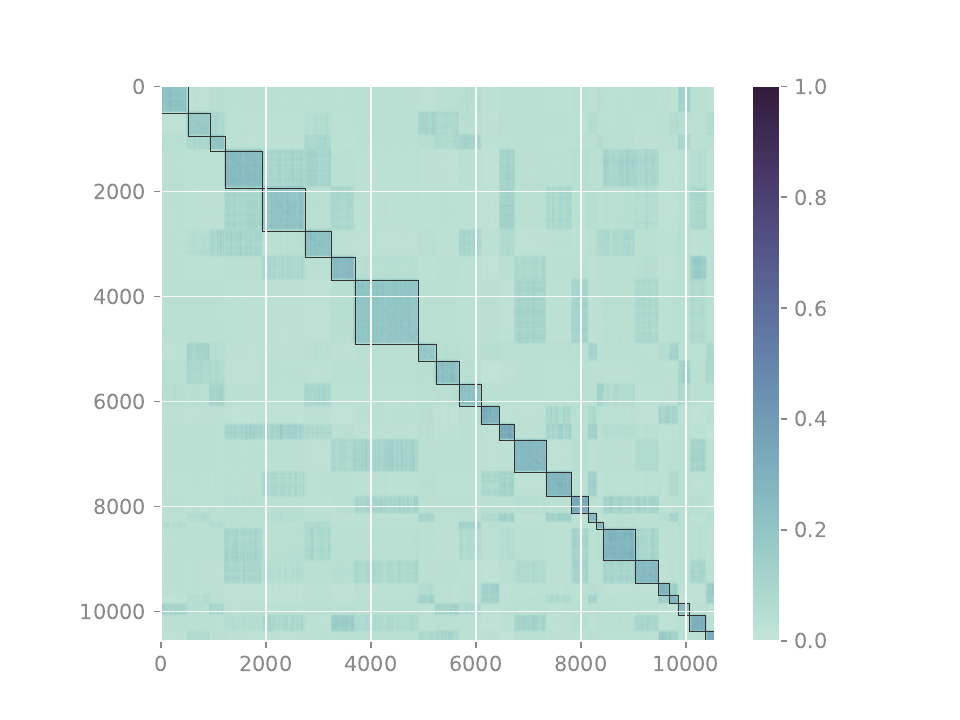}}
\end{minipage}

\caption{Boxplots of the silhouette coefficients for different values of K using the hierarchical clustering algorithm. Thick lines indicate the median, boxes the interquartile range and whiskers the full range of the distribution. The average silhouette is depicted by the dotted line. Partitions of the $\SECO$ similarity matrix with $K = 25$. Squares represent the clusters of variables.}
\label{fig:result_clust_compound_hc}
\end{figure}

\begin{figure}[!htb]

\begin{minipage}{.5\linewidth}
\centering
\subfloat[]{\label{subfig:matseco_thin_silhouette}\includegraphics[scale=.4]{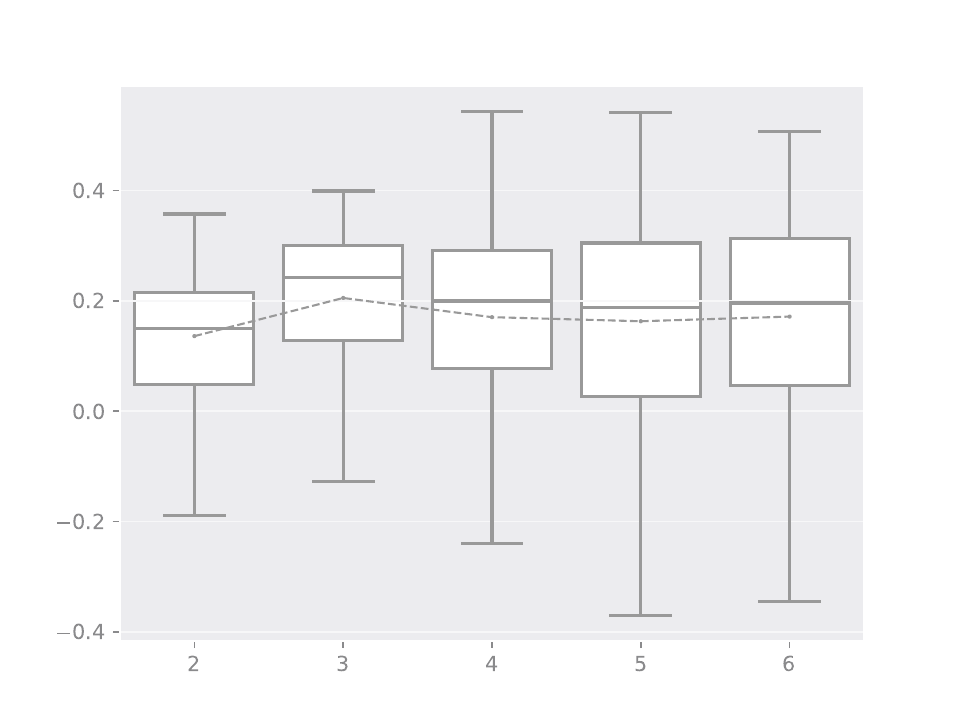}}
\end{minipage}%
\begin{minipage}{.5\linewidth}
\centering
\subfloat[]{\label{subfig:matseco_thin}\includegraphics[scale=.4]{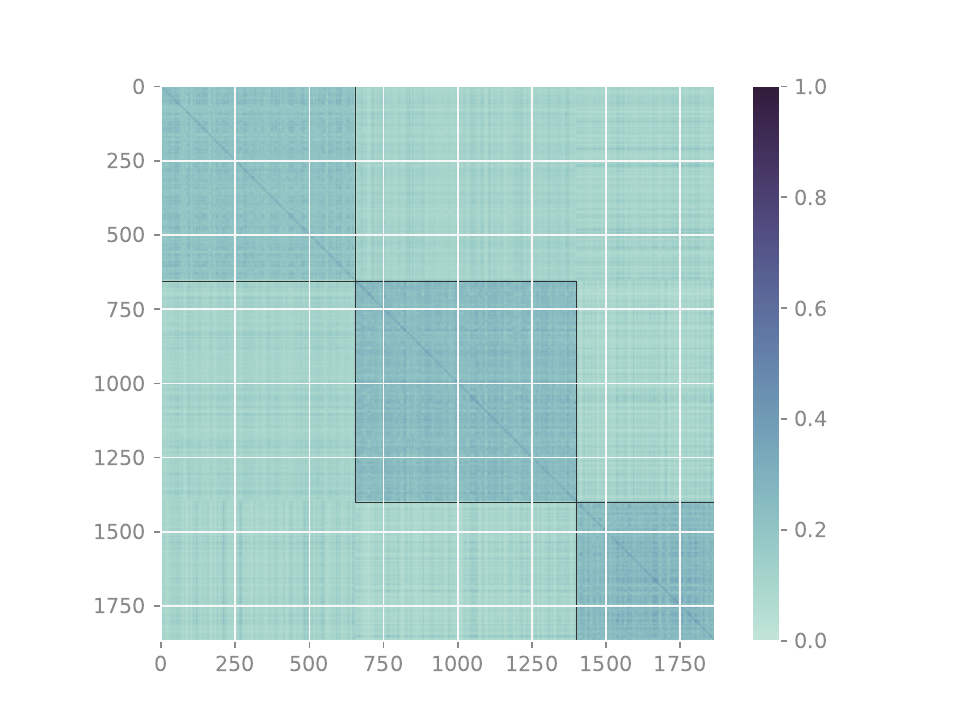}}
\end{minipage}

\caption{Boxplots of the silhouette coefficients for different values of K performed by a hierarchical clustering algorithm on the fourth partition given by Algorithm \ref{alg:rec_pratic}. Thick lines indicate the median, boxes the interquartile range and whiskers the full range of the distribution. The average silhouette is depicted by the dotted line. The clustered matrix SECO is represent in Panel \ref{subfig:matseco_thin} where squares represent the clusters.}
\label{fig:result_clust_thin}
\end{figure}

\end{document}